
\documentclass[notitlepage,openbib,12pt]{article}
\usepackage{amsfonts}
\usepackage{amsmath}
\usepackage{amssymb}
\usepackage{makeidx}
\usepackage{graphicx}
\usepackage{natbib}
\usepackage[margin=1in]{geometry}
\usepackage{color}
\usepackage{url}
\usepackage[pdftex,     colorlinks=true, linkcolor=blue, citecolor=blue, urlcolor=blue]{hyperref}
\usepackage{accents}

\setcounter{MaxMatrixCols}{10}

\oddsidemargin=0truein
\evensidemargin=0truein
\topmargin=0truein\headheight=0truein\headsep=0truein
\textheight=9truein\textwidth=6.5truein
\parindent=1cm
\baselineskip=25pt
\newtheorem{theorem}{Theorem}

\newtheorem{claim}{Claim}

\newtheorem{definition}{Definition}

\newtheorem{lemma}{Lemma}

\newtheorem{remark}{Remark}

\newtheorem{observation}{Observation}

\newenvironment{proof}[1][Proof]{\noindent \textbf{#1.} }{\  \rule{0.5em}{0.5em}}
\input{tcilatex}
\begin{document}

\title{\textbf{Direct Implementation with Evidence\thanks{%
We are grateful to a coeditor and two anonymous referees for their
insightful comments which led to major improvements to the paper. We also
thank Bart Lipman, Hamid Sabourian, and Satoru Takahashi for helpful
comments and suggestions. Chen gratefully acknowledges the financial support
from the Singapore Ministry of Education Academic Research Fund Tier 1
(R-122-000-328-115). Sun gratefully acknowledges the financial support from
the National Natural Science Foundation of China (NSFC72273029).}}}
\author{Soumen Banerjee\thanks{%
China Center for Behavioural Economics and Finance, Southwestern University
of Finance and Economics. Email: soumen08@gmail.com} \and Yi-Chun Chen%
\thanks{%
Department of Economics and Risk Management Institute, National University
of Singapore. Email: ecsycc@nus.edu.sg} \and Yifei Sun\thanks{%
School of International Trade and Economics, University of International
Business and Economics. Email: sunyifei@uibe.edu.cn}}
\maketitle

\begin{abstract}
We study full implementation with evidence in an environment with bounded
utilities. We show that a social choice function is Nash implementable in a
direct revelation mechanism if and only if it satisfies the measurability
condition proposed by \cite{BL2012}. Building on a novel classification of
lies according to their refutability with evidence, the mechanism requires
only two agents, accounts for mixed-strategy equilibria and accommodates
evidentiary costs. While monetary transfers are used, they are off the
equilibrium and can be balanced with three or more agents. In a richer model
of evidence due to \cite{KT2012}, we establish pure-strategy implementation
with two or more agents in a direct revelation mechanism. We also obtain a
necessary and sufficient condition on the evidence structure for
renegotiation-proof bilateral contracts, based on the classification of lies.
\end{abstract}

\baselineskip=19.2pt

\section{Introduction}

Consider a government which aims to balance infrastructure development with
environmental protection. It consults infrastructure development firms and
environmental protection organizations to do a cost-benefit analysis.
Without knowing the exact state of the environment, the government needs to
factor in inputs from these consultants. However, the consultants have their
own incentives which are not necessarily aligned with that of the
government. For instance, infrastructure development firms will always
prefer to build rather than to not build, whereas environmental protection
organizations will always push to err on the side of caution in protecting
nature. How can the government glean useful information from these agents
whose incentives are conceivably misaligned?

Alternatively, consider a litigation scenario in which a firm is suing a
supplier for providing defective parts while the supplier argues that the
parts meet the specifications in the initial contract. The judge wants to
impose a financial penalty commensurate to the offence, assuming one is
proven. The plaintiff always prefers higher penalties while the defendant
always prefers smaller ones. How, then, is the judge to decide on the scale
of the penalty, given that those in the know do not have the incentive to
truthfully reveal whether the parts were defective or the contract unclear?

A common thread that links these scenarios is that the preferences of the
agents do not change across states. Other scenarios which share this feature
include budget allocation (where agents prioritize obtaining larger shares,
irrespective of their requirements), and lobbying (where groups prioritize
their own interests which are independent of the state). In all of these
situations, classical results on full implementation such as \cite{M99} and 
\cite{MR88} cannot be used, as they rely on preference variation across
states. In this paper, we pursue the idea of enriching the mechanism with
the use of evidence, so that agents can no longer misreport the state
arbitrarily. For instance, the infrastructure development firms or the
environmental protection organizations may be able to partially prove the
state of the environment by submitting their environmental research reports.
Likewise, the defendant in the litigation scenario may be able to prove that
the supplied product meets the specifications in the contract or the
plaintiff may be able to prove that it does not.

Specifically, we study the full implementation problem with evidence due to 
\cite{BL2012}. There is a state of the world which is common knowledge among
a set of agents but unknown to a designer. At each state, an agent is
endowed with some articles of evidence which may vary from one state to
another. Each article of evidence can be identified with a subset of the
state space, and it refutes the possibility that the true state is outside
this subset. In other words, we work with hard evidence which differs in its
availability across states and hence can be used by the agents to partially
prove the state.\footnote{%
The hard evidence setting is a special case of the costly evidence setting
due to \cite{KT2012} which we study in Section \ref{COSTLY}.} In designing a
mechanism, the planner can request evidence presentation as well as cheap
talk messages (which are available in every state). In such a setting, \cite%
{BL2012} propose a condition called \textit{measurability}. A social choice
function is said to be \textit{measurable} with respect to the underlying
evidence structure (hereafter, measurable) if whenever the desirable social
outcomes differ across two states, at least one agent has a variation in the
set of available evidence. To wit, if neither the preferences nor the
evidence varies between two states, then any (direct or indirect) mechanism
will have the same outcome in both, regardless of the solution concept to
which the planner subscribes.

We study an environment in which there are two or more agents with bounded
utilities and the designer can impose monetary transfers off the
equilibrium. We consider a \emph{normal} evidence structure under which each
agent can present (in one message) all the articles of evidence which he is
endowed with.\footnote{%
Normality is also imposed in Theorem 2 of \cite{BL2012} which achieves Nash
implementation by invoking integer games and $\varepsilon $ transfers. For a
detailed comparison between our results and the results of \cite{BL2012},
see Section \ref{literature}.} In this setting, we show that a social choice
function is implementable in mixed-strategy Nash equilibria regardless of
the agents' preferences if and only if it is measurable (Theorem \ref{main}%
). Moreover, we obtain the implementation result using a \emph{direct
revelation mechanism} in the sense of \cite{BW2007}, wherein agents only
report a state and present an article of evidence. Further, the mechanism
achieves budget balance when there are three or more agents; in addition, if
we allow the designer to randomize, then the off-the-equilibrium transfers
can be made arbitrarily small (Theorem \ref{small}).

Our implementation results possesses a number of desirable features. First,
we recognize, following the critique due to \cite{J1992}, that
implementation has long relied on invoking \emph{integer/modulo games} to
eliminate unwanted equilibria. While such devices are useful in achieving
positive results in general settings, they admit no pure-strategy
equilibrium. This problem is exacerbated under mixed strategies: in an
integer game, an agent has no best response to an opponent's strategy which
places positive probability on all integers, whereas a modulo game possesses
an unwanted mixed-strategy equilibrium. The hope has been that more
realistic mechanisms may suffice in more specific settings. Our mechanism
makes use of neither integer/modulo games nor sequential moves.\footnote{%
Implementation with equililbrium refinements which do not possess the
closed-graph property (e.g., subgame-perfect Nash equilibrium) need not be
robust to a \textquotedblleft small amount of incomplete information about
the state\textquotedblright ; see \cite{CE2003} and \cite{A2012}.} Rather,
we use a direct revelation mechanism which has the simplest possible message
structure in implementing arbitrary measurable social choice functions. The
design of our direct revelation mechanism is based on a novel classification
of lies (i.e., inaccurate state claims). Specifically, there are lies which
can be refuted by evidence possessed by other agents (other-refutable lies),
lies which can be refuted only by the agent who is reporting them
(self-refutable lies), and lies which cannot be refuted by any evidence
available under the true state (nonrefutable lies).\footnote{%
It follows from measurability that if state $s^{\prime }$ is not refutable
at state $s$ and induces a different social outcome, then $s$ must be
refutable at $s^{\prime }$.} We design transfer rules which eliminate the
lies successively in any mixed-strategy equilibrium.

Second, our result extends to settings where evidence presentation is
costly. Specifically, as long as the designer knows the bound of the agents'
evidentiary cost, every measurable social choice function remains directly
implementable in mixed-strategy Nash equilibria regardless of the cost
structure (Theorem \ref{hardcostly}).\footnote{%
It is well recognized in the literature that there may be material or
psychological costs associated with the presentation of evidence. See \cite%
{BW2007}, \cite{BL2012} and \cite{KT2012} for instance.} In contrast, when
there is a fixed evidentiary cost structure which is common knowledge, the
designer is able to distinguish between states by exploiting the cost
variation and measurability is no longer necessary for implementation. In
such a setting, \cite{KT2012} propose a condition called \textit{evidence
monotonicity}, and show that it is necessary for implementation where only
the cheapest evidence is submitted in equilibrium. We adapt the notion of
evidence monotonicity to our environment and establish that evidence
monotonicity is sufficient for pure-strategy Nash implementation with two or
more agents via a direct revelation mechanism (Theorem \ref{main_soft}).%
\footnote{%
We obtain mixed-strategy implementation under a stronger version of evidence
monotonicity in Theorem 6.}

Third, as our results hold even when there are only two agents, they allow
for applications to a classical issue in the incomplete contract literature.
The difficulty in this regard arises when a desirable contractual outcome
(e.g., an efficient trade) needs to be conditioned on some state variables
(e.g., the a buyer's valuation of some good) which are observable to both
contractual parties and yet not verifiable by a third party such as a court.
A well-known solution is to invoke implementation theory to design a
mechanism which has the agents announce the observed state as a verifiable
equilibrium message. However, such mechanisms often involve
off-the-equilibrium transfers which penalize both agents so that such
mechanisms are susceptible to renegotiation.\footnote{%
For an example of the issues posed by renegotiation, we refer the reader to 
\cite{MM99} where they demonstrate a setting in which only a null contract
is renegotiation-proof even though it is efficient to trade with verifiable
states. \cite{MT99} attempt to circumvent this issue by using lotteries, but
their result depends crucially on at least one agent being strictly risk
averse.} Measurability with respect to an evidence structure may be
considerably easier to satisfy in practice than verifiability. Indeed, even
when a state is not verifiable, the agents may still be able to provide
evidence to refute certain states. In Section \ref{RP}, we establish a
necessary and sufficient condition on the evidence structure for the
existence of renegotiation proof bilateral contracts. In particular, such a
contractual outcome must lie on the Pareto frontier of the agents' utility
possibility set and thereby must achieve budget balance.

The rest of the paper is organized as follows. Section \ref{Model} provides
a formal description of the model, the implementing condition, and further
details the classification of lies on which the mechanism is based. Section %
\ref{Mech} presents the main implementing mechanism and a formal proof of
implementation. Following this, we also establish budget balance with three
or more agents (Section \ref{BB}) and implementation with small transfers
(Section \ref{SmallTransfers}). In Section \ref{COSTLY}, we extend our
results to settings where evidence presentation is costly. Section \ref{RP}
details our treatment of renegotiation-proof contracting and we conclude by
comparing our results to the existing literature.

\section{\label{Model}Model}

\subsection{\label{Env}Environment}

Let $\mathcal{I=}\left\{ 1,...,I\right\} $ ($I\geq 2$) be a set of agents, $%
A $, a set of social outcomes, and $S$ a set of states. Suppose that $S$ is
finite. Agents have quasilinear utilities in transfers, so that, $%
u_{i}(a,s,\tau )=v_{i}(a,s)+\tau $ where $\tau $ is the transfer to the
agent. We assume that $v_{i}$ is bounded and without loss of generality, we
set $v_{i}:A\times S\rightarrow \left[ 0,1\right] $ (in dollars). As a
result, an agent can be induced to accept any outcome if the alternative
were to be any other outcome with a penalty of 1 dollar. A social planner
would like to implement a social choice function (SCF) $f:S\rightarrow A$.
We assume that while the true state is common knowledge among the agents, it
is unknown to the social planner.

\subsection{\label{HE}Evidence}

We assume that each agent $i$ is endowed with a (state-dependent) collection
of articles of evidence $\mathcal{E}_{i}:S\rightrightarrows 2^{S}$. In
particular, by providing a message $E_{i}\in \mathcal{E}_{i}\left( s\right) $
in a state $s$, agent $i$ establishes that the true state lies within $E_{i}$%
. At state $s$ we say that an article of evidence $E_{i}\in \mathcal{E}%
_{i}(s)$ \emph{refutes} state $s^{\prime }$ if $s^{\prime }\not\in E_{i}$.

Following \cite{BL2012}, we introduce the following definition:

\begin{definition}
\label{e1e2}An evidence structure satisfies the following conditions:\newline
(e1) it is impossible to refute the truth, i.e., $\forall s\in S,$ $E_{i}\in 
\mathcal{E}_{i}(s)$ only if $s\in E_{i}$; \newline
(e2) if an agent can prove the event $E$ in some state, then they must be
able to do so in all the states in $E$, i.e., $E\in \mathcal{E}_{i}(s)$ only
if $E\in \mathcal{E}_{i}(s^{\prime })$ for every $s^{\prime }\in E$.
\end{definition}

We stress here that this is not an assumption. Rather, if articles of
evidence differ in their availability across states, then without loss of
generality, we can \emph{name} the article in terms of the subset of states
in which it is available. With such names, the above properties must be
satisfied.

We say that a setting involves \textit{hard }evidence if the set of evidence
available to an agent can change from state to state. Furthermore, we say
that the evidence structure is \emph{normal} if, for every agent $i$, and
every state $s\text{{}}$,

\begin{equation*}
E_{i}^{\ast }\left( s\right) \equiv \bigcap_{E\in \mathcal{E}_{i}\left(
s\right) }E\in \mathcal{E}_{i}\left( s\right) \text{.}
\end{equation*}%
The idea behind normality is that it is feasible for agents to present all
their evidence at once, suggesting an idealization in which there are no
time or other constraints on doing so. We refer to this message containing
all the evidence that an agent has as the \emph{tightest} evidence of the
agent in the given state. For a start, we prove our first main result
(Theorem \ref{main}) under the assumption of normality. We will discuss in
Section \ref{Norm} how this result should be modified when the normality
assumption does not hold. We define a social choice environment as a tuple $%
\Psi =\left( \mathcal{I},A,S,\{\mathcal{E}_{i}\}_{i\in \mathcal{I}},f\right) 
$ which is assumed to be common knowledge among the designer and agents. We
write $s\sim s^{\prime }$ (and say that $s$ is equivalent to $s^{\prime }$)
if $\mathcal{E}_{i}\left( s\right) =\mathcal{E}_{i}\left( s^{\prime }\right) 
$ for all $i$.

\subsection{\label{Example}Illustrating Example}

To illustrate the above ideas, we consider a situation wherein a government
(acting as the social planner) is considering whether to approve a
development project which has an adverse environmental impact. The agents
concerned are an infrastructure development firm (F) and an environmental
protection organization (O). Suppose the three actions available to the
planner are to allow the project ($P_{l}$), ask for it to be made smaller ($%
P_{m}$), or to scrap it entirely ($P_{h}$). It wishes to make these
decisions if the threat to the environment is low ($s^{l}$), medium ($s^{m}$%
) or high ($s^{h}$) respectively. The government does not know the degree of
the threat to the environment, while both F and O, being more familiar with
the scenario, know that the true threat level is medium ($s^{m}$).

Profit-making firm (F) is solely concerned with implementing the project and
thus has a preference ordering $P_{l}\succ P_{m}\succ P_{h}$. The
environmental protection organization (O) has the preference ordering $%
P_{h}\succ P_{m}\succ P_{l}$ since it strictly prioritizes protecting the
environment. These preference orderings are not dependent on the state and
this is where this scenario departs from the classical/Maskin-type
implementation problem.

The social planner now requests the two parties to submit evidence for and
against the project. Suppose that articles of evidence are of the form of
discoveries of environmental degradation which refute lower-impact states.
Moreover, the state $s^{h}$ is assumed to be associated with a very serious
degree of environmental impact, so that if the true state were $s^{h}$, it
would be apparent to (and provable by) all the agents. To sum up these
ideas, we present below the evidence structure for each agent in all
possible states:%
\begin{equation*}
\begin{tabular}{|c|c|c|c|}
\hline
Agent/State & $s^{l}$ & $s^{m}$ & $s^{h}$ \\ \hline
Firm (F) & $\{\{s^{l},s^{m},s^{h}\}\}$ & $\{\{s^{l},s^{m},s^{h}\},%
\{s^{m},s^{h}\}\}$ & $\{\{s^{l},s^{m},s^{h}\},\{s^{m},s^{h}\},\{s^{h}\}\}$
\\ \hline
Organization (O) & $\{\{s^{l},s^{m},s^{h}\}\}$ & $\{\{s^{l},s^{m},s^{h}\}\}$
& $\{\{s^{l},s^{m},s^{h}\},\{s^{h}\}\}$ \\ \hline
\end{tabular}%
\end{equation*}

Note that articles of evidence are subsets of the state space so that an
agent presenting $\{s^{m},s^{h}\}$ informs the designer only that the state
is not $s^{l}$. We also say that this article of evidence refutes $s^{l}$.
Thus, in state $s^{m}$, reporting the state $s^{l}$ is a lie that only F can
refute, and reporting the state $s^{h}$ is a lie that no one can refute. It
is interesting to note that in the true state ($s^{m}$), O would like to
convince the planner that the true state is $s^{h}$ (a claim that no one can
refute in this scenario) and F would like to convince the planner that the
true state is $s^{l}$ (a claim that only F can refute).

\subsection{Measurability and implementation}

When the agents' preferences do not vary across states (i.e., $v$ does not
depend on $s$), evidence is the only way to differentiate two states.
Indeed, if two states induce the same preference profile and evidence
endowments, then irrespective of the solution concept in use, in any
mechanism they must be associated with the same set of equilibria.\footnote{%
Unless otherwise specified, we assume that articles of evidence do not have
costs associated with them so that the designer cannot exploit cost
variation.} Measurability of an SCF entails that when the planner wants to
implement different outcomes from one state to another, there must be at
least one agent who can differentiate between the states in terms of their
evidence set. We now state the formal definition.

\begin{definition}
Given a social environment $\Psi $, an SCF $f$ satisfies measurability if $%
f\left( s\right) =f\left( s^{\prime }\right) $ whenever $s\sim s^{\prime }$.
\end{definition}

In other words, if $f$ is measurable, then in implementing the desirable
social outcome, the designer needs to identify only the equivalence class of
states which contains the truth.

To fix ideas, consider the example in Section \ref{Example} with the
modification that in state $s^{m}$, F is endowed only with $%
\{\{s^{l},s^{m},s^{h}\}\}$. In this case, no agent can differentiate between 
$s^{l}$ and $s^{m}$, so that any implementable SCF must be constant between
these two states. More generally, if there were to be no evidence in the
model, so that $\mathcal{E}_{i}\left( s\right) =\{S\}$ for all agents in all
states, then only constant SCFs are implementable. Further, even if each
agent were endowed only with $\{\{s^{l},s^{m},s^{h}\}\}$ in each state with
the exception of only one agent being endowed with $\{s^{h}\}$ in state $%
s^{h}$, it (we shall show later) would still be possible to implement
different outcomes in $s^{h}$ relative to $s^{l}$ and $s^{m}$, so that even
the slightest variation in evidence is sufficient for implementing different
outcomes.

A mechanism $\mathcal{M}$ in this social choice environment is defined as a
tuple $\mathcal{M}=(M,g,\left( \tau _{i}\right) _{i\in \mathcal{I}})$ where $%
M=\Pi _{i\in \mathcal{I}}M_{i}$ is a finite set of message profiles, $%
g:M\rightarrow A$ is the outcome function, and $\tau _{i}:M\rightarrow 
\mathbb{%
\mathbb{R}
}$ is the payment rule for agent $i$. A mechanism $\mathcal{M}$ together
with a profile of utility functions $v=\left( v_{i}\right) _{i\in \mathcal{I}%
}$ with $v_{i}:A\times S\rightarrow \left[ 0,1\right] $ induces a
complete-information game $G\left( \mathcal{M},v,s\right) $ at state $s$. We
call a mechanism a \emph{direct revelation mechanism} (\cite{BW2007}) when $%
M_{i}=S\times \mathcal{E}_{i}$, that is, every agent submits only one claim
of state and one article of evidence.

A (mixed) \emph{strategy} of agent $i$ in the game $G\left( \mathcal{M}%
,v,s\right) $ is a probability distribution $\sigma _{i}$ over $M_{i}$,
which we also denote by $\sigma _{i}\in \Delta M_{i}$. A strategy profile $%
\sigma =(\sigma _{1},...,\sigma _{I})\in \times _{i\in \mathcal{I}}\Delta
M_{i}$ is said to be a (mixed-strategy) \emph{Nash equilibrium} of the game $%
G\left( \mathcal{M},v,s\right) $ if, for any agent $i\in \mathcal{I}$ and
for any messages $m_{i}\in $supp$(\sigma _{i})$, we have 
\begin{eqnarray*}
&&\sum_{m_{-i}\in M_{-i}}\sigma _{-i}(m_{-i})[v_{i}(g(m_{i},m_{-i}),s)+\tau
_{i}(m_{i},m_{-i})] \\
&\geq &\sum_{m_{-i}\in M_{-i}}\sigma _{-i}(m_{-i})[v_{i}(g(m_{i}^{\prime
},m_{-i}),s)+\tau _{i}(m_{i}^{\prime },m_{-i})],\forall m_{i}^{\prime }\in
M_{i}\text{.}
\end{eqnarray*}

\noindent where $\sigma _{-i}(m_{-i})=\Pi _{j\neq i}\sigma _{j}\left(
m_{j}\right) $. A pure-strategy Nash equilibrium is a Nash equilibrium $%
\sigma $ which assigns probability one to some message profile $m$.

Unlike the classical implementation problem, and in recognition of practical
situations in which preferences do not vary across states, we seek to
implement the SCF by relying on evidence instead of preference reversal. To
stress this difference, we require in the following definition that
implementation obtain regardless of the profile of utility functions. This
also has the effect of strengthening our result from the point of view that
we no longer require that preferences vary between states for implementation.%
\footnote{%
Evidence variation however, continues to be necessary. \cite{KT2012} provide
the minimum necessary condition in this context, when both preference and
evidence variation are combined - evidence monotonicity. For the purposes of
this paper, we set aside preference variation and focus on evidence
variation alone.}

\begin{definition}
\label{implementation}An SCF $f$ is Nash-implementable if there is a
mechanism $\mathcal{M}=(M,g,\left( \tau _{i}\right) _{i\in \mathcal{I}})$
such that for any profile of bounded utility functions $v=\left(
v_{i}\right) _{i\in \mathcal{I}}$, any state $s$, and any mixed-strategy
Nash equilibrium $\sigma $ of the game $G\left( \mathcal{M},v,s\right) $, $%
g(m)=f(s)$ and $\tau _{i}(m)=0$ for each message profile $m\in $supp$\sigma
\left( s\right) $.
\end{definition}

That is, a mechanism implements a social choice function if at any state,
and with any profile of bounded utility functions, any mixed-strategy Nash
equilibrium outcome coincides with the outcome of the SCF. That is, we ask
for full implementation as opposed to partial implementation that requires
only one equilibrium achieving the outcome of the SCF. Note that partial
implementation is trivial in such a setting, as all that would be required
to achieve it would be to heavily penalize all agents for disagreeing with
each other about the state, so that there would be an equilibrium where each
agent tells the truth and there are no transfers.

In what follows, we consider two states to be different only if they induce
different tightest evidence for at least one agent. Otherwise, they are
treated as the same state. In other words, we identify $S$ with its quotient
space $S/\sim $ induced by the equivalence relation $\sim $ where each point
corresponds to an equivalent class. Owing to the necessity of measurability,
this is without loss of generality for our implementation exercise.

\subsection{A Classification of Lies}

We will construct a direct revelation mechanism which leverages two ways to
use evidence in cross-checking any claim of state. First, an article of
evidence may be able to \emph{refute} a state claim. That is, it establishes
that the state claim is definitely not true. Second, it is possible for the
designer to pick out state claims which have not been fully \emph{supported}%
\textbf{\ }by agents, that is, states for which agents have not provided all
the evidence they ought to have if the state were true. If a state claim is
not supported by an agent even though he is incentivized to do so, then it
signals to the designer that the state claim is false. These two ideas
underlie the mechanism which we will present later.

Formally, agent $i$ is said to have \emph{supported} a state claim $s$ in an
evidence message $E_{i}$ if $E_{i}\subseteq E_{i}^{\ast }(s)$. Based upon
the notion of refutation, we distinguish three different types of lies
wherein a lie is a claim of state $s^{\prime }$ which is different from $%
s^{\ast }$ (under the relation $\sim $ defined previously). First, an \emph{%
other-refutable} lie for agent $i$ is a lie that at least one agent other
than $i$ has the evidence to refute in the true state. For instance, for
organization O, the lie $s^{l}$ is an other-refutable lie in state $s^{m}$
since it can be refuted by the article $\{s^{m},s^{h}\}$ possessed by Firm
F. Note that it is straightforward to construct a transfer rule so that no
agent will tell other-refutable lies; this is done by just requiring an
agent to pay a large penalty to whoever refutes his state claim.

Second, a \emph{self-refutable} lie for agent $i$ is a lie that only agent $%
i $ has the evidence to refute in the true state. For instance, for firm F,
the lie $s^{l}$ is a self-refutable lie in state $s^{m}$ since it can only
be refuted by the article $\{s^{m},s^{h}\}$ possessed by the firm itself.

Finally, a \emph{nonrefutable lie} is a lie that cannot be refuted by any
evidence that is possessed by any agent. For instance, the lie $s^{h}$ is a
nonrefutable lie in state $s^{m}$.

From an agent's perspective, the truth, other-refutable, self-refutable and
non-refutable lies partition the entire state space. To see this, notice
that given a lie, it can either be refuted at the true state, or not. If it
can be refuted, it can either be refuted by other agents, or only by the
agent in question.

We now prove the following observations which will be exploited in proving
our main result.

\begin{observation}
\label{c_NRL_0}If an agent $i$ cannot refute $s^{\prime }$ at $s^{\ast }$,
then every article of evidence available to him at $s^{\ast }$ is also
available to him at $s^{\prime }$.
\end{observation}

\begin{proof}
If $s^{\prime }\notin RL_{i}\left( s^{\ast }\right) $, then every article of
evidence available to $i$ at $s^{\ast }$ contains $s^{\prime }$. Then, from
Property (e2) of Definition \ref{e1e2}, every such article is available to $%
i $ at $s^{\prime }$, so that $\mathcal{E}_{i}\left( s^{\ast }\right)
\subseteq \mathcal{E}_{i}\left( s^{\prime }\right) $.
\end{proof}

\begin{observation}
\label{c_SRL_0}If $s^{\prime }$ is a nonrefutable lie at $s^{\ast }$, then
some agent must have an article of evidence at state $s^{\prime }$ which
refutes $s^{\ast }$.
\end{observation}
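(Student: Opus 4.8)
The plan is to combine the quotient-space convention with Observation \ref{c_NRL_0} and property (e2). First I would unpack what the hypothesis buys us. Saying that $s^{\prime }$ is a nonrefutable lie at $s^{\ast }$ means precisely that $s^{\prime }\in E$ for every $E\in \mathcal{E}_{i}(s^{\ast })$ and every agent $i$; equivalently, no agent can refute $s^{\prime }$ at $s^{\ast }$, so $s^{\prime }\notin RL_{i}(s^{\ast })$ for all $i$. Observation \ref{c_NRL_0} then applies to each agent and yields the inclusion $\mathcal{E}_{i}(s^{\ast })\subseteq \mathcal{E}_{i}(s^{\prime })$ for every $i$.

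Next I would invoke the identification of $S$ with the quotient $S/\sim $. Since $s^{\prime }\neq s^{\ast }$ are distinct points of this quotient, they are not equivalent, so there is some agent $i$ with $\mathcal{E}_{i}(s^{\ast })\neq \mathcal{E}_{i}(s^{\prime })$. Combined with the inclusion from the previous step, this forces a strict containment $\mathcal{E}_{i}(s^{\ast })\subsetneq \mathcal{E}_{i}(s^{\prime })$, so there exists an article $E\in \mathcal{E}_{i}(s^{\prime })\setminus \mathcal{E}_{i}(s^{\ast })$.

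It then remains to check that this $E$ actually refutes $s^{\ast }$, i.e. that $s^{\ast }\notin E$. Here I would argue by contradiction using (e2): if $s^{\ast }\in E$, then since $E\in \mathcal{E}_{i}(s^{\prime })$ and (e2) propagates availability of $E$ to every state contained in $E$, we would obtain $E\in \mathcal{E}_{i}(s^{\ast })$, contradicting the choice of $E$. Hence $s^{\ast }\notin E$, and agent $i$ holds at $s^{\prime }$ an article of evidence refuting $s^{\ast }$, as claimed.

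The step I expect to carry the real weight is the appeal to the quotient convention. The substance of the observation is that nonrefutability cannot be symmetric between two genuinely distinct states, and that asymmetry is exactly what the identification $S=S/\sim $ (justified earlier by the necessity of measurability) delivers; once distinctness is translated into an evidentiary difference for some agent, everything else is a mechanical application of Observation \ref{c_NRL_0} and property (e2).
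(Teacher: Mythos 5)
Your proposal is correct and follows essentially the same route as the paper's own proof: Observation \ref{c_NRL_0} gives the inclusions $\mathcal{E}_{i}(s^{\ast })\subseteq \mathcal{E}_{i}(s^{\prime })$, the quotient-space convention turns $s^{\prime }\neq s^{\ast }$ into $s^{\prime }\nsim s^{\ast }$ and hence a strict inclusion for some agent, and (e2) shows any article in the difference refutes $s^{\ast }$. Your final contradiction step merely spells out explicitly what the paper states in one line.
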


\begin{proof}
As $s^{\prime }$ is nonrefutable for $i$ at $s^{\ast }$, Observation \ref%
{c_NRL_0} yields that $\mathcal{E}_{i}\left( s^{\ast }\right) \subseteq 
\mathcal{E}_{i}\left( s^{\prime }\right) $ for every $i$. Since $s^{\prime }$
is a lie, $s^{\prime }\nsim s^{\ast }$. Hence, $\mathcal{E}_{i}\left(
s^{\ast }\right) \subset \mathcal{E}_{i}\left( s^{\prime }\right) $ for some 
$i$, and from (e2) of Definition \ref{e1e2} any member of $\mathcal{E}%
_{i}\left( s^{\prime }\right) \backslash \mathcal{E}_{i}\left( s^{\ast
}\right) $ must refute $s^{\ast }$. In other words, any article of evidence
that is available to agent $i$ under $s^{\prime }$ and not available under $%
s^{\ast }$ must refute $s^{\ast }$.
\end{proof}

Observation \ref{c_NRL_0} establishes that supporting either a
self-refutable lie of another agent, or a nonrefutable lie requires the
presentation of at least as much evidence as supporting the truth.
Observation \ref{c_SRL_0} entails that it is impossible to have every agent
support a nonrefutable lie. Since nonrefutable lies cannot be directly
refuted by evidence, this inability to support them forms the only way to
eliminate them in equilibrium.

\section{\label{Mech}Implementing Mechanism}

Fix the environment $\Psi =(\mathcal{I},A,S,\{\mathcal{E}_{i}\}_{i\in 
\mathcal{I}},f)$. We now present our main result.

\begin{theorem}
\label{main}Suppose the evidence structure is given by $\mathcal{E}%
_{i}\left( \cdot \right) $. Then, an SCF $f$ is Nash-implementable in a
direct revelation mechanism if and only if it is measurable with respect to $%
\mathcal{E}_{i}\left( \cdot \right) $.
\end{theorem}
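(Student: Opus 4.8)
The theorem has two directions, and the plan is to handle them asymmetrically: necessity is short and follows from the environment's structure, while sufficiency requires the actual construction of a direct revelation mechanism that defeats all three classes of lies.

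\medskip

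\noindent\textbf{Necessity.} First I would argue that measurability is necessary. The key fact, already flagged in the text, is that if $s\sim s'$ then $\mathcal{E}_i(s)=\mathcal{E}_i(s')$ for every agent, so the induced games $G(\mathcal{M},v,s)$ and $G(\mathcal{M},v,s')$ are literally identical for \emph{every} mechanism $\mathcal{M}$ and every preference profile $v$ that does not vary across these two states (recall implementation must hold for \emph{all} bounded $v$, so in particular for state-independent ones). Identical games have identical equilibrium sets, hence identical equilibrium outcomes. If $f(s)\neq f(s')$, no mechanism could deliver $g(m)=f(s)$ at $s$ and $g(m)=f(s')$ at $s'$ simultaneously, a contradiction. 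This direction needs only a sentence or two.

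\medskip

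\noindent\textbf{Sufficiency.} The substantive direction is to build a direct revelation mechanism with $M_i=S\times\mathcal{E}_i$ that implements any measurable $f$. The plan is to have each agent report a state claim $s_i$ together with an evidence article $E_i$, set the outcome to $f$ of (some canonical function of) the reported state when reports agree, and design the transfers $\tau_i$ so that the only surviving mixed-strategy Nash equilibrium at the true state $s^*$ has everyone truthfully reporting $s^*$ with the tightest evidence $E_i^*(s^*)$. The transfers should be layered to exploit the classification of lies in exactly the order the text foreshadows: (i) a penalty paid by any agent whose state claim is refuted by \emph{another} agent's evidence kills all other-refutable lies $ORL_i(s^*)$; (ii) a reward/penalty tied to whether an agent has \emph{supported} his own claim (presented evidence inside $E_i^*(\cdot)$ of the claimed state) together with a cross-checking device handles self-refutable lies $SRL_i(s^*)$, since by Observation \ref{c_NRL_0} supporting such a lie costs at least as much evidence as supporting the truth; (iii) nonrefutable lies $NRL(s^*)$ are the residual case, and here I would lean on Observation \ref{c_SRL_0}: since not all agents can simultaneously support a nonrefutable lie (someone's tightest evidence at $s'$ refutes $s^*$, hence cannot lie inside the support region for $s'$), a transfer rewarding the agent who deviates to expose an unsupported claim breaks any putative equilibrium concentrated on such a lie. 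Throughout, the boundedness of $v_i$ in $[0,1]$ is what lets a one-dollar penalty dominate any preference gain, so the transfers can be chosen as fixed constants independent of the unknown $v$.

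\medskip

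The main obstacle, I expect, is step (iii) combined with the insistence on \emph{mixed}-strategy equilibria and a \emph{one-shot direct} mechanism. Refuting other-refutable and self-refutable lies is comparatively mechanical once the penalties are calibrated, but eliminating nonrefutable lies without integer or modulo games is delicate: one must show that in \emph{every} mixed equilibrium the probability weight on nonrefutable misreports collapses, using only the support-based incentive and Observation \ref{c_SRL_0}, and simultaneously verify that the truthful profile remains an equilibrium (no agent gains by deviating, and off-path transfers vanish on the equilibrium support so that $\tau_i(m)=0$ there). The careful bookkeeping will be to confirm that the three transfer layers do not interfere—that the device neutralizing self-refutable lies does not create a profitable deviation that reintroduces a nonrefutable lie, and that the whole construction keeps all equilibrium transfers at zero as Definition \ref{implementation} demands.
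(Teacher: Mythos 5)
Your plan is essentially the paper's own argument: necessity via the observation that equivalent states induce identical games under state-independent preferences, and sufficiency via a direct mechanism $M_i=S\times\mathcal{E}_i$ with layered transfers keyed to the other-refutable/self-refutable/nonrefutable classification, using Observations \ref{c_NRL_0} and \ref{c_SRL_0} and the $[0,1]$ bound on $v_i$ to calibrate fixed dollar penalties. The only notable divergence is organizational: the paper eliminates nonrefutable lies \emph{before} self-refutable ones, and the linchpin you only gesture at is a cardinality transfer ($\tau^4$, a fine proportional to $|E_i|$ whenever any claim goes unsupported) that forces \emph{every} agent to present the tightest evidence off-path, which is precisely what guarantees that a deviator to the truth finds his claim supported and thereby escapes the support penalty.
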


The necessity of measurability follows from the fact that if two states are
associated with the same set of evidence, and the preferences are constant
among states, then any mechanism must have the same set of equilibria in
both states. In the following subsections, we prove the sufficiency part of
Theorem \ref{main} by constructing a direct revelation mechanism which
implements $f$.

\subsection{\textbf{Message Space}}

Every agent has a typical message $m_{i}=(s_{i},E_{i})\in M_{i}=S\times 
\mathcal{E}_{i}$. We interpret this as a claim of state and an article of
evidence. The typical message $m_{i}$, therefore, is of the form $%
(s_{i},E_{i})$. In the following, we denote the full message profile (of all
agents) by $m\in \Pi _{i\in \mathcal{I}}M_{i}$.

\subsection{\textbf{Outcome}}

We define the outcome function of the mechanism as%
\begin{equation*}
g\left( m\right) =f\left( s_{1}\right) \text{.}
\end{equation*}%
That is, we implement the social outcome according to the state claim made
by the first agent. However, we will show that in any equilibrium all agents
report the true state. Hence, any Nash equilibrium achieves the desirable
social outcome.

\subsection{\textbf{Transfers}}

There are four different types of transfers in the mechanism, which we will
introduce one by one. The first transfer applies when an agent refutes a
state claim of another agent. In this case, the agent whose claim is refuted
has to pay a penalty to the agent who refutes the claim. That is,%
\begin{equation*}
\tau _{ij}^{1}\left( m\right) =\left\{ 
\begin{tabular}{ll}
$2I+1$, & if $s_{i}\in E_{j}$ and $s_{j}\not\in E_{i}$; \\ 
$-2I-1$, & if $s_{i}\not\in E_{j}$ and $s_{j}\in E_{i}$; \\ 
$0$, & otherwise.%
\end{tabular}%
\right.
\end{equation*}%
where $I$ is the number of agents.

Under the second transfer, an agent incurs a penalty if his state claim is
not supported by himself or other agents. Formally,

\begin{equation*}
\tau _{i}^{2}\left( m\right) =\left\{ 
\begin{tabular}{ll}
$-I$, & if $\exists j\in \mathcal{I}$ s.t. $E_{j}\not\subseteq E_{j}^{\ast
}(s_{i})$; \\ 
$0\,$, & otherwise.%
\end{tabular}%
\right.
\end{equation*}

The third transfer penalizes an agent (say, agent $i$) if he disagrees with
another agent (say, agent $j$) along the evidence dimension for agent $i$.
This is expressed as follows:

\begin{equation*}
\tau _{ij}^{3}\left( m\right) =\left\{ 
\begin{tabular}{ll}
$-1$, & if $E_{i}^{\ast }(s_{i})\not=E_{i}^{\ast }(s_{j})$; \\ 
$0$, & $\text{otherwise.}$%
\end{tabular}%
\right.
\end{equation*}

The fourth transfer is a penalty proportional to the cardinality of states
that are not refuted by the evidence presented by agent $i$. This is active
when an agent has made a state claim which one or more agents have not
supported. Formally,

\begin{equation*}
\tau _{i}^{4}\left( m\right) =\left\{ 
\begin{tabular}{ll}
$-\frac{|E_{i}|}{|S|}$, & if $E_{j}\not\subseteq E_{j}^{\ast }(s_{j^{\prime
}})$ for some $j,j^{\prime }\in \mathcal{I}$; \\ 
$0$, & otherwise.%
\end{tabular}%
\right.
\end{equation*}

We stress that this applies to one's own state claims as well, that is,
being unable to provide the tightest evidence for one's own state claim also
incurs a penalty from this transfer.

With $\tau _{i}^{1}=\sum_{j\neq i}\tau _{ij}^{1}$ and $\tau
_{i}^{3}=\sum_{j\neq i}\tau _{ij}^{3}$, we define the overall transfer to
agent $i$ as:

\begin{equation*}
\tau ^{i}=\tau _{i}^{1}+\tau _{i}^{2}+\tau _{i}^{3}+\tau _{i}^{4}\text{.}
\end{equation*}

\subsection{\textbf{Proof Sketch }}

The mechanism deals with each type of lie in sequence. We begin with
other-refutable lies, for instance the lie $s^{l}$ for O in state $s^{m}$ in
the illustrating example. This involves the transfer $\tau ^{1}$. Recall
that by definition, an other-refutable lie for some player (say $i$) is
refutable by a different agent, (say $j$). We note that $\tau ^{1}$ provides
agent $j$ the incentive to refute agent $i$'s lie irrespective of the
probability with which $i$ presents it (normality ensures that he does not
have to sacrifice rewards from refuting another agent's lies). It also
assures agent $i$ of a large penalty from refutation, which can be avoided
by deviating to the truth (which is irrefutable).

Before eliminating the remaining lies, we comment on the role of $\tau ^{4}$%
. The transfer $\tau ^{4}$ is a cardinality transfer which (when it is
active) incentivizes the presentation of the tightest evidence by all
agents. We construct the mechanism in a way that presenting additional
evidence is never harmful for an agent. Indeed, when $\tau ^{4}$ is active,
it strictly benefits the agent, since presenting an additional article of
evidence reduces $|E_{i}|$ and thus reduces the magnitude of the fine. This
allows the mechanism to elicit the true profile of tightest evidence, which
is useful for the elimination of both nonrefutable and self-refutable lies.
This crucially depends upon normality, since incentivizing agents to present
their tightest evidence would not achieve its goal unless such an article is 
\emph{available} and \emph{can be} presented.

Now, we eliminate nonrefutable lies, for instance the lie $s^{h}$ in state $%
s^{m}$ in the illustrating example. This step involves $\tau ^{2}$ and $\tau
^{4}$. Recall that if $s^{\prime }$ is a nonrefutable lie at $s^{\ast }$,
then some agent must have the evidence to refute $s^{\ast }$ at $s^{\prime }$%
; for instance, $\left\{ s^{h}\right\} $ refutes $s^{m}$ at $s^{h}$. Since
by assumption it is not possible to eliminate $s^{\ast }$ if it is the
truth, there is at least one agent who cannot support the claim $s^{\prime }$
if the true state is $s^{\ast }$. Therefore, if an agent $i$ presents a
nonrefutable lie, then he knows that it cannot be supported. In the
mechanism, this has two effects. First, he gets a large fine from $\tau ^{2}$%
. Second, $\tau ^{4}$ is active, so that all agents have a strict incentive
to present all their evidence. This has the consequence of allowing agent $i$
to evade the fine from $\tau ^{2}$ by switching to the truth (if everyone
presents all their evidence, the truth is supported by everyone).

If we prioritize the first two steps, then agents are restricted to
presenting either the truth or self-refutable lies, each of which (from
Observation \ref{c_NRL_0}) induces for any other agent an evidence set at
least as large as the truth. In this case, we claim that all agents present
their tightest evidence. Indeed, if any agent withholds evidence, they fail
to support the state claims of all other agents, so that $\tau ^{4}$ is
active, and submitting additional evidence is a profitable deviation. This
step, which involves $\tau ^{2}$, $\tau ^{3}$ and $\tau ^{4}$, uses this
fact to deal with self-refutable lies.

Finally, suppose that an agent $i$ presents a self-refutable lie, for
instance F claims $s^{l}$ in state $s^{m}$. Since all evidence is tightest,
the transfer $\tau ^{2}$ incentivizes other agents to present state claims
that are consistent with agent $i$'s tightest evidence. Following this, the
cross-check in $\tau ^{3}$ yields a penalty for agent $i$ because a
self-refutable lie for agent $i$ is inconsistent with agent $i$'s own
tightest evidence. This penalty can be avoided by switching to the truth,
since all agents are presenting their tightest evidence. In summary, in any
equilibrium everyone presents the truth with the tightest evidence. Hence,
implementation obtains.

\subsection{Proof of Implementation}

We present below a formal proof of implementation using the direct
revelation mechanism in Section \ref{Mech}. In what follows, we denote the
true state by $s^{\ast }$. For simplicity we will write $NRL$ and $SRL_{i}$
to denote $NRL\left( s^{\ast }\right) $ and $SRL_{i}\left( s^{\ast }\right) $%
, respectively. Fix an arbitrary mixed-strategy Nash equilibrium $\sigma $.

\subsubsection{Preliminary Results}

We begin by proving a lemma which finds use at multiple points in the proof.

\begin{lemma}
\label{l_tight}Given the strategies of other agents, an agent never incurs a
loss from presenting more evidence (while holding the state claim fixed) in
a deviation. Moreover, if $\tau _{i}^{4}$ is active with non-zero
probability, then under any optimal strategy, agent $i$ presents the
tightest evidence.
\end{lemma}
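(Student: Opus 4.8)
The plan is to establish both claims through a single \emph{pointwise monotonicity} argument: fixing the realized message profile $m_{-i}$ of the other agents and the state claim $s_i$, I will show that replacing agent $i$'s evidence $E_i$ by any smaller article $E_i'\subseteq E_i$ (with $E_i,E_i'\in\mathcal{E}_i(s^*)$) weakly raises agent $i$'s stage payoff, and does so strictly through $\tau_i^4$ whenever that transfer is active. Since the outcome $g(m)=f(s_1)$, and hence $v_i(g(m),s^*)$, does not depend on any presented evidence, only the four transfer terms are at stake. Taking expectations over $m_{-i}$ under $\sigma_{-i}$ then delivers the expected-payoff comparison, and the fact that the tightest article $E_i^*(s^*)$ is available (normality) and satisfies $E_i^*(s^*)\subseteq E_i$ for every $E_i\in\mathcal{E}_i(s^*)$ lets me instantiate $E_i'=E_i^*(s^*)$.

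First I would check the four terms one at a time, holding $m_{-i}$ and $s_i$ fixed and shrinking $E_i$ to $E_i'\subseteq E_i$. For $\tau_{ij}^1$, the sign is governed by whether $s_i\in E_j$ (independent of $E_i$) and whether $s_j\in E_i$; since $E_i'\subseteq E_i$ can only turn ``$s_j\in E_i$'' into ``$s_j\notin E_i'$'' and never the reverse, shrinking can only create the reward $2I+1$ or erase the penalty $-2I-1$, so each $\tau_{ij}^1$ weakly improves. For $\tau_i^2$, the only dependence on $E_i$ is through the $j=i$ clause $E_i\not\subseteq E_i^*(s_i)$; because $E_i'\subseteq E_i$, if $i$ already supported his own claim he still does, so this term weakly improves. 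The term $\tau_i^3$ depends only on the claimed states through $E_i^*(s_i),E_i^*(s_j)$ and is therefore unchanged by the presented evidence. Finally $\tau_i^4$ equals $-|E_i|/|S|$ when its activating condition holds, so its magnitude weakly falls as $|E_i|$ falls; here I must also confirm that shrinking cannot switch the condition \emph{on}, which follows because the $j\neq i$ clauses are untouched and the $j=i$ support relations $E_i\subseteq E_i^*(s_{j'})$ are preserved under $E_i'\subseteq E_i$. Summing the four weak improvements gives the first claim.

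For the second claim I would isolate $\tau_i^4$ as the uniquely \emph{strict} term. Take a message $(s_i,E_i)$ in the support with $E_i$ not tightest, so $E_i^*(s^*)\subsetneq E_i$ and $|E_i^*(s^*)|<|E_i|$, and compare it with $(s_i,E_i^*(s^*))$. For any realization $m_{-i}$ at which $\tau_i^4$ is active under $(s_i,E_i)$, I split into two cases: if the activating condition still holds after the switch, the penalty rises from $-|E_i|/|S|$ to the strictly larger $-|E_i^*(s^*)|/|S|$; if it no longer holds, the penalty rises to $0$. Either way $\tau_i^4$ strictly improves, while the other three terms weakly improve by the first part, so the stage payoff strictly improves at that $m_{-i}$. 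Hence if ``$\tau_i^4$ active'' has positive probability under $\sigma_{-i}$ for this message, the expected gain of the deviation is strictly positive, contradicting optimality; every message in the support must therefore carry the tightest evidence.

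I expect the delicate point to be the second part's bookkeeping of the activating condition of $\tau_i^4$, whose truth value depends jointly on all agents' evidence and claims, including agent $i$'s own $E_i$. The monotonicity ``shrinking $E_i$ can only deactivate, never activate'' is what makes the case split clean, and it rests precisely on $E_i^*(s^*)\subseteq E_i$ together with (e2); I would state that inclusion explicitly and invoke it repeatedly. A secondary point is to fix the reading of ``active with non-zero probability'' as a statement about the conditional distribution of $m_{-i}$ given the message under scrutiny, so that the per-message deviation argument applies uniformly to each non-tightest message in the support.
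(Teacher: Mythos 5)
Your proof is correct and follows essentially the same route as the paper's: a term-by-term check that each of $\tau_i^1,\tau_i^2,\tau_i^3,\tau_i^4$ (and the outcome, which depends only on state claims) weakly improves when $E_i$ is tightened, with $\tau_i^4$ supplying the strict gain whenever it is active. You simply carry out the pointwise bookkeeping (including the observation that shrinking $E_i$ can deactivate but never activate the penalty conditions) that the paper's proof asserts more briefly.
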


\begin{proof}
Fix agent $i$. It is clear that $\tau _{i}^{1}$ causes no loss from
presenting additional evidence. Indeed if agent $i$ refutes another agent's
state claim, he may make a profit from tightening the evidence. $\tau
_{i}^{2}$ only requires that the evidence presented be as tight as some
bound, so that tightening the evidence causes no loss from $\tau _{i}^{2}$
either. $\tau _{i}^{3}$ is a statement on state claims, so evidence is not
related to this transfer. $\tau _{i}^{4}$ clearly causes no loss from
tightening, rather, it causes strict gains if it is active. Therefore, given
a strategy profile for other agents, if an agent expects $\tau ^{4}$ to be
active with positive probability, then it is optimal for him to present the
tightest evidence, as otherwise he could tighten his evidence to improve his
payoff.
\end{proof}

\subsubsection{Eliminating Other-Refutable Lies}

\begin{claim}
\label{c_ORL_1}If agent $i$ reports with positive probability a message $%
m_{i}=(s_{i},E_{i})$ such that agent $j\neq i$ has an article of evidence
which refutes $s_{i}$, then $E_{j}$ must refute $s_{i}$ for every message $%
m_{j}=\left( s_{j},E_{j}\right) $ which agent $j$ reports with positive
probability.
\end{claim}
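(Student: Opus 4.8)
The plan is to argue by contradiction using a single deviation for agent $j$: switching to his tightest evidence while holding his state claim fixed. Suppose agent $i$ places positive probability on $m_i=(s_i,E_i)$ and that agent $j$ possesses some $E_j'\in\mathcal{E}_j(s^\ast)$ with $s_i\notin E_j'$ (i.e.\ $s_i\in RL_j(s^\ast)$). Suppose, toward a contradiction, that $j$ places positive probability on some $m_j=(s_j,E_j)$ with $s_i\in E_j$, so that $E_j$ fails to refute $s_i$. Since in a Nash equilibrium every message in the support of $\sigma_j$ is a best response, it suffices to exhibit a strictly profitable deviation from $(s_j,E_j)$.

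First I would record that the deviation is feasible and refutes $s_i$. Under normality, $E_j^\ast(s^\ast)=\bigcap_{E\in\mathcal{E}_j(s^\ast)}E\in\mathcal{E}_j(s^\ast)$, and since $E_j^\ast(s^\ast)\subseteq E_j'$ with $s_i\notin E_j'$, we get $s_i\notin E_j^\ast(s^\ast)$. Thus $(s_j,E_j^\ast(s^\ast))$ is available at $s^\ast$, keeps the state claim unchanged, and refutes $s_i$; moreover $E_j^\ast(s^\ast)\subseteq E_j$, so this deviation only tightens the evidence.

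The core computation isolates the gain in $\tau^1$ on the event that $i$ reports $(s_i,E_i)$. Because only $E_j$ changes (from containing $s_i$ to excluding it) while $s_j$ and $E_i$ are fixed, I compare two sub-cases. If $s_j\in E_i$, the term $\tau_{ji}^1$ moves from $0$ to $+2I+1$; if $s_j\notin E_i$, it moves from $-2I-1$ to $0$. Either way $\tau_{ji}^1$ rises by exactly $2I+1$. For every other opponent $k$, tightening $E_j$ can only make the refutation condition ``$s_k\notin E_j$'' more easily satisfied, so each $\tau_{jk}^1$ is weakly higher as well.

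Finally I would assemble the expected-payoff comparison. For every realization of the opponents' messages, Lemma \ref{l_tight} guarantees that tightening the evidence causes no loss through any of $\tau^1,\tau^2,\tau^3,\tau^4$, and the outcome $g(m)=f(s_1)$, hence $v_j$, is unaffected since only the evidence component is altered. On the positive-probability event $\{m_i=(s_i,E_i)\}$ the change is strictly positive (at least $2I+1$) by the computation above, while on all other events it is weakly positive. Therefore the deviation raises $j$'s expected payoff strictly, contradicting the optimality of $(s_j,E_j)$. This forces $s_i\notin E_j$ for every $(s_j,E_j)$ in the support of $\sigma_j$, which is the claim. I expect the only delicate point to be verifying that the reward $2I+1$ is not offset by the remaining transfer terms; this is exactly what Lemma \ref{l_tight} (applied term by term) delivers, so the real content is the short sub-case computation of the $\tau_{ji}^1$ gain together with the normality argument ensuring the tightest evidence both exists and refutes $s_i$.
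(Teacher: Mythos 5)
Your proposal is correct and follows essentially the same route as the paper: deviate to the tightest evidence $E_j^\ast(s^\ast)$ while holding $s_j$ fixed, use normality to show this evidence exists and refutes $s_i$, and invoke Lemma \ref{l_tight} to rule out losses in the remaining transfers. Your explicit sub-case computation showing the $\tau_{ji}^{1}$ gain is exactly $2I+1$ in either configuration is a slightly more detailed version of the paper's ``$>0$'' entry in its table, but the argument is the same.
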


\begin{proof}
Suppose not. Then, agent $j$'s evidence does not refute $s_{i}$, that is, $%
s_{i}\in E_{j}$. Consider an alternate message $\tilde{m}_{j}$ which only
replaces $E_{j}$ with $E_{j}^{\ast }\left( s^{\ast }\right) $, which refutes 
$s_{i}$. The following table documents agent $j$'s payoff change by
deviating from $m_{j}$ to $\tilde{m}_{j}$:%
\begin{equation*}
\begin{tabular}{|l|l|l|l|l|l|}
\hline
$g$ & $\tau _{i}^{1}$ & $\tau _{i}^{2}$ & $\tau _{i}^{3}$ & $\tau _{i}^{4}$
& In total \\ \hline
$0$ & $>0$ & $0$ & $\geq 0$ & $\geq 0$ & $>0$ \\ \hline
\end{tabular}%
\end{equation*}

\noindent That is, agent $j$ gains from $\tau ^{1}$, on account of the fact
that he has now refuted agent $i$'s lie. By Lemma \ref{l_tight}, there is no
loss from any of the other transfers.
\end{proof}

\begin{claim}
\label{c_ORL_2}No agent reports an other-refutable lie with positive
probability.
\end{claim}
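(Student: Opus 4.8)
The plan is to argue by contradiction. Suppose agent $i$ reports an other-refutable lie $s_i$ with positive probability; I will exhibit a strictly profitable deviation to the truth, contradicting that $m_i=(s_i,E_i)$ lies in the support of an equilibrium best response. Since $s_i\in ORL_i(s^{\ast})$, there is an agent $j^{\ast}\neq i$ with $s_i\in RL_{j^{\ast}}(s^{\ast})$, i.e.\ $j^{\ast}$ possesses at $s^{\ast}$ an article refuting $s_i$. Applying Claim \ref{c_ORL_1} with the pair $(i,j^{\ast})$, every message $(s_{j^{\ast}},E_{j^{\ast}})$ that $j^{\ast}$ plays with positive probability satisfies $s_i\notin E_{j^{\ast}}$; hence, conditional on $i$ playing $s_i$, agent $j^{\ast}$ refutes $s_i$ with probability one.

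I would let agent $i$ consider the deviation to $\tilde m_i=(s^{\ast},E_i^{\ast}(s^{\ast}))$, which is feasible by normality since $E_i^{\ast}(s^{\ast})\in\mathcal{E}_i(s^{\ast})$. Two structural facts drive the comparison term by term. First, by (e1) the truth is irrefutable: any $E_j$ presentable at $s^{\ast}$ has $s^{\ast}\in E_j$, so after the deviation $i$'s state claim is never refuted. Second, since the original $E_i$ is feasible at $s^{\ast}$ we have $E_i^{\ast}(s^{\ast})\subseteq E_i$, so tightening refutes at least as many states as $E_i$. Using these facts I would check that for every realization $m_{-i}$ and every $j\neq i$ the deviation weakly raises $\tau^{1}_{ij}$: replacing $s_i$ by the irrefutable $s^{\ast}$ can only remove the $-(2I+1)$ penalty, while passing from $E_i$ to $E_i^{\ast}(s^{\ast})$ preserves any $+(2I+1)$ reward $i$ already earned by refuting some $s_j$. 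For $j^{\ast}$, since $s_i\notin E_{j^{\ast}}$ with probability one, the pre-deviation value of $\tau^{1}_{ij^{\ast}}$ is $-(2I+1)$ or $0$ and the post-deviation value is $0$ or $2I+1$; a short case check gives $\Delta\tau^{1}_{ij^{\ast}}\geq 2I+1$ in every realization, so the expected gain from $\tau^{1}$ alone is at least $2I+1$.

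It then remains to bound the other components. The outcome contributes a change in $v_i$ of magnitude at most $1$, and exactly $0$ when $i\neq 1$ since $g$ depends only on agent $1$'s claim; $\tau^{2}_i$ moves within $\{-I,0\}$, $\tau^{3}_i=\sum_{j\neq i}\tau^{3}_{ij}$ within $[-(I-1),0]$, and $\tau^{4}_i$ within $[-1,0]$ (as $|E_i|\leq|S|$, and Lemma \ref{l_tight} controls the evidence-dependent terms under tightening). The coefficient $2I+1$ in $\tau^{1}$ is chosen precisely so that the worst-case total loss $1+I+(I-1)+1=2I+1$ is dominated by this gain.

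The main obstacle is extracting a \emph{strict} inequality, since the crude worst-case loss exactly matches the guaranteed $\tau^{1}$ gain and a naive bound yields only $\Delta\geq 0$. I expect to resolve this by a finer, per-realization accounting. For $i\neq 1$ it is immediate: the outcome term vanishes, leaving a loss of at most $2I$ against a gain of at least $2I+1$, so the deviation is strictly profitable and taking expectations over $m_{-i}$ gives $\Delta>0$. For $i=1$ the delicate step is showing that the maximal losses of $g$, $\tau^{2}$, $\tau^{3}$, and $\tau^{4}$ cannot all be attained in the same realization. I would exploit that $\tau^{2}_i$ and $\tau^{4}_i$ can each incur a loss only when the deviation \emph{newly activates} a support failure involving the claim $s^{\ast}$, and that $\tau^{4}_i$ attains its full magnitude $1$ only in the degenerate case $E_i^{\ast}(s^{\ast})=S$, which in turn constrains the $\tau^{3}_i$ comparison through the equalities among the $E_i^{\ast}(\cdot)$; establishing this incompatibility of simultaneous worst cases yields a strictly positive per-realization net change. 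Once that is in hand, taking expectations delivers $\Delta>0$ and the desired contradiction.
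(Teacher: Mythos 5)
Your argument follows the same route as the paper's proof of this claim: invoke Claim \ref{c_ORL_1} to guarantee that some $j^{\ast}\neq i$ refutes $s_{i}$ with probability one, deviate to a truthful state claim, collect a guaranteed gain of $2I+1$ from $\tau^{1}$, and bound the losses from $g$, $\tau^{2}$, $\tau^{3}$, $\tau^{4}$ by $1$, $I$, $I-1$, $1$ respectively. The only substantive departure is that you also tighten the evidence to $E_{i}^{\ast}(s^{\ast})$ in the deviation (the paper changes only the state claim and keeps $E_{i}$ fixed); your term-by-term verification that tightening costs nothing in $\tau^{1}$ and elsewhere is correct, and the per-realization bound $\Delta\tau^{1}_{ij^{\ast}}\geq 2I+1$ checks out.

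The genuine gap is the final strictness step for $i=1$: you correctly observe that the crude worst-case loss $1+I+(I-1)+1$ exactly matches the guaranteed gain $2I+1$, so the naive bound only yields $\Delta\geq 0$, and you then leave the required incompatibility-of-worst-cases argument as a plan rather than a proof. (For what it is worth, the paper's own table sidesteps this by asserting the outcome change is strictly greater than $-1$, which is not fully justified when $v_{i}$ ranges over the closed interval $[0,1]$; so you have identified a real looseness rather than invented one.) Your sketched resolution does work and should be executed: under your deviation, $\tau_{1}^{4}$ can cost the full dollar only if $E_{1}^{\ast}(s^{\ast})=S$, which forces $\mathcal{E}_{1}(s^{\ast})=\{S\}$ and hence $E_{1}=S$; a full one-dollar loss further requires $\tau_{1}^{4}$ to have been inactive before the deviation (otherwise it was already at $-1$ and cannot fall), which requires $E_{1}=S\subseteq E_{1}^{\ast}(s_{j})$, i.e.\ $E_{1}^{\ast}(s_{j})=S=E_{1}^{\ast}(s^{\ast})$ for every claimed $s_{j}$, so the $\tau^{3}$ loss is zero and you save $I-1\geq 1$. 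Otherwise $E_{1}^{\ast}(s^{\ast})\subsetneq S$ and the $\tau^{4}$ loss is at most $(|S|-1)/|S|<1$. Either branch makes the per-realization net change strictly positive; writing out this short dichotomy closes the proof.
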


\begin{proof}
From Claim \ref{c_ORL_1}, if agent $i$ reports a lie $s_{i}$ which agent $j$
can refute, then with probability one agent $j$ must present $E_{j}$ to
refute $s_{i}$. Consider an alternative message $\tilde{m}_{i}$ which
replaces $s_{i}$ with $s^{\ast }$ in $m_{i}$. The following table summarizes
the payoff change by deviating from $m_{i}$ to $\tilde{m}_{i}$:%
\begin{equation*}
\begin{tabular}{|l|l|l|l|l|l|}
\hline
$g$ & $\tau _{i}^{1}$ & $\tau _{i}^{2}$ & $\tau _{i}^{3}$ & $\tau _{i}^{4}$
& In total \\ \hline
$>-1$ & $2I+1$ & $\geq -I$ & $\geq -I+1$ & $\geq -1$ & $>0$ \\ \hline
\end{tabular}%
\end{equation*}

Agent $i$ gains a minimum of $2I+1$ from $\tau ^{1}$ due to the fact that
agent $j$ was refuting $s_{i}$. In addition, in the worst case, agent $i$
loses at most $1$ from outcome $g$, at most $I$ from $\tau ^{2}$, at most $%
I-1$ from $\tau ^{3}$, and less than $\frac{|E_{i}|}{|S|}$ (which is less
than $1$) from $\tau ^{4}$. Hence, $\tilde{m}_{i}$ is a profitable deviation
from $m_{i}$.
\end{proof}

\subsubsection{Eliminating Nonrefutable Lies}

The next two claims eliminate the possibility that nonrefutable lies are
reported in equilibrium.

\begin{claim}
\label{c_NRL_2}If an agent reports with positive probability a message $%
m_{i}=(s_{i},E_{i})$ where $s_{i}$ is a nonrefutable lie at $s^{\ast }$,
then $E_{i}$ is the tightest, and every agent $j\neq i$ must provide the
tightest evidence available, i.e., $E_{j}=E_{j}^{\ast }(s^{\ast })$.
\end{claim}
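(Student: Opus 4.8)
The plan is to reduce the whole claim to one structural fact — that a nonrefutable lie can never be supported by all agents — and then let the cardinality transfer $\tau^{4}$, through Lemma \ref{l_tight}, force tightest evidence. Concretely, I would first argue that whenever agent $i$ reports the nonrefutable lie $s_{i}$, the trigger for $\tau^{4}$ is necessarily pulled, so that $\tau^{4}$ is active with positive probability; then I would apply Lemma \ref{l_tight} agent by agent to conclude that every agent presents the tightest evidence available at $s^{\ast}$. This is exactly the two assertions of the claim: $E_{i}=E_{i}^{\ast}(s^{\ast})$ for agent $i$, and $E_{j}=E_{j}^{\ast}(s^{\ast})$ for each $j\neq i$.

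The first step is the key structural lemma: there is an agent $j_{0}$ who cannot support $s_{i}$ at the true state, i.e. $E_{j_{0}}\not\subseteq E_{j_{0}}^{\ast}(s_{i})$ for every $E_{j_{0}}\in \mathcal{E}_{j_{0}}(s^{\ast})$. This follows from Observation \ref{c_SRL_0}: since $s_{i}$ is a nonrefutable lie at $s^{\ast}$, some agent $j_{0}$ possesses an article $E\in \mathcal{E}_{j_{0}}(s_{i})$ with $s^{\ast}\notin E$. Because $E_{j_{0}}^{\ast}(s_{i})$ is the intersection of all of $j_{0}$'s articles at $s_{i}$, we get $E_{j_{0}}^{\ast}(s_{i})\subseteq E$ and hence $s^{\ast}\notin E_{j_{0}}^{\ast}(s_{i})$. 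On the other hand, by (e1) every article available to $j_{0}$ at the true state contains $s^{\ast}$, so no such article can be a subset of $E_{j_{0}}^{\ast}(s_{i})$, which is the assertion. In particular, $s_{i}$ being claimed by $i$ together with $j_{0}$'s inability to support it means the condition "$E_{j}\not\subseteq E_{j}^{\ast}(s_{j'})$ for some $j,j'$" defining $\tau^{4}$ holds — with $j=j_{0}$ and $j'=i$ — regardless of which evidence any agent submits.

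The second step feeds this into Lemma \ref{l_tight} from each agent's vantage point. Since agent $i$ plays $(s_{i},E_{i})$ with positive probability, and the trigger above fires with probability one conditional on $i$ claiming $s_{i}$ while not depending on any evidence choice, $\tau^{4}$ is active with positive probability as seen by every agent. For an agent $j\neq i$ this is immediate because $i$'s claim (and, if $j_{0}\neq j$, also $j_{0}$'s message) lies in $\sigma_{-j}$; for agent $i$ himself the trigger is active whenever he claims $s_{i}$. Lemma \ref{l_tight} then forces each agent to present tightest evidence in any optimal, hence any equilibrium-support, message, giving $E_{i}=E_{i}^{\ast}(s^{\ast})$ and $E_{j}=E_{j}^{\ast}(s^{\ast})$ for $j\neq i$.

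I expect the only delicate point to be the case in which the non-supporting agent $j_{0}$ coincides with the agent whose evidence is being analyzed (either $j_{0}=i$, or $j_{0}=j$ when arguing about some $j\neq i$). Here one must check that the activation of $\tau^{4}$ is robust to that agent's own tightening of evidence — it is, precisely because $j_{0}$ cannot support $s_{i}$ under any admissible article, so tightening shrinks $|E_{j_{0}}|$ and thus the $\tau^{4}$ fine without ever deactivating the transfer. This robustness is what lets Lemma \ref{l_tight} apply uniformly and yields the tightest-evidence conclusion for all agents simultaneously.
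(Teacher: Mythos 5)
Your proposal is correct and follows essentially the same route as the paper's proof: invoke Observation \ref{c_SRL_0} together with (e1) to find an agent who cannot support $s_{i}$ at $s^{\ast}$ with any available article, conclude that the trigger for $\tau^{4}$ fires with positive probability for every agent whenever $i$ claims $s_{i}$, and then apply Lemma \ref{l_tight} to force tightest evidence from $i$ and from each $j\neq i$. Your write-up is somewhat more explicit than the paper's (spelling out why $s^{\ast}\notin E_{j_{0}}^{\ast}(s_{i})$ blocks support, and checking the case where the non-supporting agent coincides with the agent being analyzed), but the underlying argument is the same.
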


\begin{proof}
Since $s_{i}\in NRL$, it follows from Observation \ref{c_SRL_0} that there
is an agent $j\in \mathcal{I}$ who can refute $s^{\ast }$ at $s_{i}$.
However, since the evidence structure $\mathcal{E}_{i}\left( \cdot \right) $
satisfies condition (e1) of Definition \ref{e1e2}, agent \thinspace $j$
cannot present $E_{j}^{\ast }(s_{i})$ (which must refute $s^{\ast }$) at the
state $s^{\ast }$. Thus, whenever there is an agent $i$ who reports with
positive probability a message $m_{i}$ with a nonrefutable lie $s_{i}$, then
for every $j\neq i$, $\tau _{j}^{4}$ must be triggered with positive
probability. It then follows from Lemma \ref{l_tight} that each agent $j\neq
i$ must present $E_{j}=E_{j}^{\ast }(s^{\ast })$ under any optimal strategy.
Further, in presenting $s_{i}$, agent $i$ knows that he (or another agent)
will be unable to support $s_{i}$ so that $\tau _{i}^{4}$ is active with
probability 1. Therefore, Lemma \ref{l_tight} yields $E_{i}=E_{i}^{\ast
}(s^{\ast })$ since agent $i$ plays an optimal strategy.
\end{proof}

\begin{claim}
\label{c_NRL_3}No agent reports a nonrefutable lie with positive probability.
\end{claim}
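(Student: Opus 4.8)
The plan is to argue by contradiction. Suppose some agent $i$ reports a nonrefutable lie $s_i$ (at $s^{\ast }$) with positive probability. I would first invoke Claim \ref{c_NRL_2}, which already pins down the entire evidence profile in this situation: agent $i$ presents $E_{i}=E_{i}^{\ast }(s^{\ast })$ and every other agent $j$ presents $E_{j}=E_{j}^{\ast }(s^{\ast })$, so all evidence is tightest. The candidate deviation is then the obvious one, namely to replace $s_i$ by the truth $s^{\ast }$ while holding the tightest evidence $E_{i}^{\ast }(s^{\ast })$ fixed, and the goal is to show this deviation is strictly profitable, contradicting that $(s_i,E_{i}^{\ast }(s^{\ast }))$ lies in the support of a best response.

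Next I would compute the change in each transfer, as in the table of Claim \ref{c_ORL_2}. The key gain comes from $\tau ^{2}$: since $s_i$ is nonrefutable, Observation \ref{c_SRL_0} supplies an agent whose evidence at $s_i$ refutes $s^{\ast }$, so by condition (e1) that agent cannot support $s_i$ at $s^{\ast }$; hence $\tau _{i}^{2}=-I$ before the deviation, whereas after the deviation all agents present tightest evidence and therefore support the truth, giving $\tau _{i}^{2}=0$, a gain of exactly $I$. For $\tau ^{1}$, both $s_i$ (nonrefutable) and $s^{\ast }$ (the truth, by (e1)) lie in every presented article of evidence, so no agent refutes either claim and the only surviving term depends on $E_{i}$ and the $s_j$, which are unchanged; thus $\tau _{i}^{1}$ is unchanged. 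For $\tau ^{4}$, the magnitude $|E_{i}|/|S|$ is fixed (evidence is held fixed) and the penalty, active before the deviation, can only switch off, so $\Delta \tau _{i}^{4}\geq 0$. The only losses are from the outcome $g$ (at most $1$, and only when $i=1$, since $g(m)=f(s_{1})$) and from $\tau ^{3}$ (at most $I-1$, one unit per other agent). Summing, the deviation is weakly profitable: $\Delta \geq -1+I-(I-1)=0$.

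The main obstacle is upgrading this to a strict inequality, because the design is tight: the $\tau ^{2}$-gain of $I$ exactly offsets the worst-case $g$-loss of $1$ plus the worst-case $\tau ^{3}$-loss of $I-1$. I would dispatch this in stages. For any agent $i\neq 1$ the outcome is unaffected ($\Delta g=0$), so $\Delta \geq 1>0$ and the deviation is strictly profitable at once; hence only the outcome-determining agent could possibly tell a nonrefutable lie. For agent $1$, I would exploit mixed-strategy averaging: the $g$-loss is identical across all realizations of the opponents' reports, so it suffices to exhibit one positive-probability realization in which the $\tau ^{3}$-loss is not maximal. Since $\tau _{1j}^{3}$ penalizes only when $E_{1}^{\ast }(s^{\ast })\neq E_{1}^{\ast }(s_{j})$, and the before-deviation term vanishes only when $E_{1}^{\ast }(s_{1})=E_{1}^{\ast }(s_{j})$, the total $\tau ^{3}$-loss falls strictly below $I-1$ as soon as some opponent $j$ reports, with positive probability, a state $s_j$ with $E_{1}^{\ast }(s_{j})\neq E_{1}^{\ast }(s_{1})$ (for instance the truth itself, since $E_{1}^{\ast }(s_{1})\subsetneq E_{1}^{\ast }(s^{\ast })$ by Observation \ref{c_NRL_0}). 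The residual, and hardest, case is the degenerate configuration in which every opponent, with probability one, reports a state that is evidence-equivalent to $s_{1}$ through agent $1$'s evidence. Here I would argue that such reports are themselves lies which, by the first stage, must be self-refutable for the reporting agents, and that this configuration cannot survive as a Nash equilibrium (by exhibiting a strictly profitable deviation for one such opponent), thereby completing the contradiction and the proof.
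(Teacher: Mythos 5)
Your core argument coincides with the paper's: invoke Claim \ref{c_NRL_2} to pin all evidence at the tightest level, deviate by replacing $s_i$ with $s^{\ast}$, and tally the transfers (a gain of $I$ from $\tau^{2}$, no change in $\tau^{1}$, no loss from $\tau^{4}$, at most $I-1$ lost to $\tau^{3}$, plus the outcome loss). The divergence is over strictness. The paper simply records the outcome loss as strictly less than one dollar (the ``$>-1$'' entry; ``he loses less than $1$ from changing the outcome''), which is its standing convention --- cf.\ ``the maximum bound of utility is less than 1 dollar'' in Appendix \ref{app:RP} --- so its total is $>-1+I-(I-1)=0$ and the proof ends in one step. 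You are right that the literal normalization $v_i:A\times S\rightarrow[0,1]$ permits an outcome difference of exactly $1$, so the knife-edge you worry about is real under a literal reading; but you should at least note that the paper's convention dissolves it immediately.

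The genuine gap is that your substitute route to strictness is unfinished. The cases $i\neq 1$, and $i=1$ with some opponent putting positive probability on an $s_j$ with $E_1^{\ast}(s_j)\neq E_1^{\ast}(s_1)$, are handled; but the residual case --- every opponent reporting, with probability one, a state with $E_1^{\ast}(s_j)=E_1^{\ast}(s_1)\subsetneq E_1^{\ast}(s^{\ast})$ --- is only asserted to be non-viable, not shown to be. It is non-viable, for a concrete reason you should supply: every article agent $1$ can present at $s^{\ast}$ contains $E_1^{\ast}(s^{\ast})$ and hence is not contained in $E_1^{\ast}(s_j)$, so no opponent's claim is ever supported and $\tau_j^{2}=-I$ with probability one for each $j\neq 1$; such a $j$ then gains $I$ from $\tau^{2}$ by switching to the truth, loses nothing through the outcome, $\tau^{1}$ or $\tau^{4}$, and at most $I-1$ through $\tau^{3}$ --- a strictly profitable deviation that destroys the putative equilibrium. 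Two smaller slips: the strict inclusion $E_1^{\ast}(s_1)\subsetneq E_1^{\ast}(s^{\ast})$ does not follow from Observation \ref{c_NRL_0}, which gives only weak inclusion agent by agent (Observation \ref{c_SRL_0} gives strictness for \emph{some} agent, not necessarily agent $1$ --- though in the residual case strictness does hold, since otherwise the $\tau^{3}$ loss is zero and you are done anyway); and the opponents' reports in that configuration need not be self-refutable lies, since nonrefutable lies have not yet been eliminated at this stage of the argument.
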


\begin{proof}
Suppose not. By Claim \ref{c_NRL_2}, if there is an agent who reports a
nonrefutable lie $s_{i}$ in message $m_{i}=(s_{i},E_{i})$, then $%
E_{i}=E_{i}^{\ast }(s^{\ast })$ and every agent $j\neq i$ will present $%
E_{j}=E_{j}^{\ast }(s^{\ast })$. Now, consider an alternative message $%
\tilde{m}_{i}$ which replaces $s_{i}$ with $s^{\ast }$. The following table
summarizes the payoff change by deviating from $m_{i}$ to $\tilde{m}_{i}$:%
\begin{equation*}
\begin{tabular}{|l|l|l|l|l|l|}
\hline
$g$ & $\tau _{i}^{1}$ & $\tau _{i}^{2}$ & $\tau _{i}^{3}$ & $\tau _{i}^{4}$
& In total \\ \hline
$>-1$ & $0$ & $\geq I$ & $\geq -I+1$ & $\geq 0$ & $>0$ \\ \hline
\end{tabular}%
\end{equation*}%
First, fixing $m_{i}$ where $s_{i}$ is nonrefutable$,$ we know that $\tau
_{i}^{2}(m_{i},m_{-i})=-I$ for every $m$ since there exists some agent $j$ ($%
j$ may be $i$) such that $E_{j}^{\ast }(s_{i})\subset E_{j}^{\ast }(s^{\ast
})$. Hence, agent $i$ gains at least $I$ from $\tau _{i}^{2}$ from the
deviation. Second, he loses less than $1$ from changing the outcome, incurs
no loss from $\tau _{i}^{1}$ (since the truth is not refutable), and at most 
$I-1$ from $\tau _{i}^{3}$. Third, agent $i$ incurs no losses from $\tau
_{i}^{4}$ either as he was facing a scenario of no support with $s_{i}$
before the deviation and the size of his evidence set has not changed.
Overall, this is, therefore, a profitable deviation.
\end{proof}

\subsubsection{Eliminating Self-Refutable Lies}

The next three claims eliminate self-refutable lies. Up to this point, we
have established that agents can report only the truth or self-refutable
lies in equilibrium. Observation \ref{c_NRL_0} implies that in this
situation, agents would need to present their tightest evidence to support
other agent's state claims. This is the basic idea which we will exploit in
establishing the following claims.

\begin{claim}
\label{c_SRL_2}All agents present the tightest evidence with probability
one. That is, for any agent $i$, $E_{i}=E_{i}^{\ast }\left( s^{\ast }\right) 
$ for any message $m_{i}=(s_{i},E_{i})$ on the support of $\sigma _{i}$.
\end{claim}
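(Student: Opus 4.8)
The plan is to argue by contradiction through Lemma \ref{l_tight}. I would suppose that some agent $i$ places positive probability on a message $m_{i}=(s_{i},E_{i})$ with non-tightest evidence, i.e. $E_{i}\supsetneq E_{i}^{\ast }(s^{\ast })$, and then show that, against the opponents' equilibrium strategies $\sigma _{-i}$, the cardinality transfer $\tau _{i}^{4}$ is triggered with probability one whenever agent $i$ plays $m_{i}$. Since $m_{i}$ lies in the support of an optimal strategy, Lemma \ref{l_tight} would then force $E_{i}=E_{i}^{\ast }(s^{\ast })$, contradicting non-tightness. So the whole task reduces to showing that a deviating, evidence-withholding agent $i$ cannot avoid activating $\tau ^{4}$.

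The key intermediate step is to pin down which state claims such an agent can support. Supporting a claim $s$ means $E_{i}\subseteq E_{i}^{\ast }(s)$; combined with $E_{i}^{\ast }(s^{\ast })\subsetneq E_{i}$ this gives $E_{i}^{\ast }(s)\supsetneq E_{i}^{\ast }(s^{\ast })$, and in particular $E_{i}^{\ast }(s)\not\subseteq E_{i}^{\ast }(s^{\ast })$. By the contrapositive of Observation \ref{c_NRL_0} (which says $s\notin RL_{i}(s^{\ast })$ implies $\mathcal{E}_{i}(s^{\ast })\subseteq \mathcal{E}_{i}(s)$ and hence $E_{i}^{\ast }(s)\subseteq E_{i}^{\ast }(s^{\ast })$), this forces $s\in RL_{i}(s^{\ast })$. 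Thus an agent withholding evidence can support only claims that he himself can refute at $s^{\ast }$.

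Next I would use the claims already proved to show that no opponent ever makes such a claim. By Claims \ref{c_ORL_2} and \ref{c_NRL_3}, every claim played with positive probability is either the truth $s^{\ast }$ or a self-refutable lie. A self-refutable lie in $SRL_{l}(s^{\ast })$ is, by definition, refutable only by agent $l$; so if an agent $k\neq l$ reported it, it would be other-refutable for $k$, contradicting Claim \ref{c_ORL_2}. Hence every agent $k$ reports a claim $s_{k}\in \{s^{\ast }\}\cup SRL_{k}(s^{\ast })$. For $k\neq i$, neither the truth (irrefutable) nor any element of $SRL_{k}(s^{\ast })$ (refutable only by $k$, not by $i$) lies in $RL_{i}(s^{\ast })$. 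By the previous paragraph, agent $i$ therefore fails to support $s_{k}$ for every such $k$.

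Finally, since $I\geq 2$ there is always at least one opponent $k\neq i$, so for every realization of $m_{-i}$ the pair $(j,j')=(i,k)$ satisfies $E_{j}\not\subseteq E_{j}^{\ast }(s_{j'})$; that is, $\tau _{i}^{4}$ is active with probability one, and Lemma \ref{l_tight} delivers the contradiction. I expect the main obstacle to be the middle argument: recognizing that withholding evidence confines the supportable claims to $RL_{i}(s^{\ast })$, and then combining the classification of lies with the already-eliminated other-refutable lies to conclude that an opponent's claim is never in $RL_{i}(s^{\ast })$. Once that is established, the activation of $\tau ^{4}$ and the appeal to Lemma \ref{l_tight} are routine.
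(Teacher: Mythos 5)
Your proposal is correct and follows essentially the same route as the paper: restrict opponents' claims to the truth or self-refutable lies via Claims \ref{c_ORL_2} and \ref{c_NRL_3}, use Observation \ref{c_NRL_0} to show that a non-tightest $E_{i}$ cannot support any such claim, conclude that $\tau_{i}^{4}$ is active, and invoke Lemma \ref{l_tight}. You merely spell out more explicitly the intermediate step (that withholding evidence confines supportable claims to $RL_{i}(s^{\ast})$) which the paper leaves implicit.
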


\begin{proof}
Suppose to the contrary that for some agent $i$, $m_{i}=(s_{i},E_{i})$ is on
the support of $\sigma _{i}$, and $E_{i}\not=E_{i}^{\ast }\left( s^{\ast
}\right) .$ From Claims \ref{c_ORL_2} and \ref{c_NRL_3}, other agents report
either a self-refutable lie or the truth in equilibrium. Therefore, from
Observation \ref{c_NRL_0}, if $E_{i}\not=E_{i}^{\ast }\left( s^{\ast
}\right) $, then agent $i$ expects $\tau ^{4}$ to be active with positive
probability as he is not supporting any other agent's claims. Deviating to $%
\tilde{m}_{i}=(s_{i},E_{i}^{\ast }\left( s^{\ast }\right) )$ from $m_{i}$ is
a profitable deviation in this case.
\end{proof}

\begin{claim}
\label{c_SRL_3}For every agent $i$, and for any message $m_{i}=\left(
s_{i},E_{i}\right) $ on the support of $\sigma _{i}$, we must have $%
E_{j}^{\ast }(s_{i})=E_{j}^{\ast }(s^{\ast })$ for every agent $j\neq i$.
\end{claim}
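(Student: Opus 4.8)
The plan is to establish the two inclusions $E_j^{\ast}(s_i)\subseteq E_j^{\ast}(s^{\ast})$ and $E_j^{\ast}(s^{\ast})\subseteq E_j^{\ast}(s_i)$ separately for each $j\neq i$. By Claims \ref{c_ORL_2}, \ref{c_NRL_3} and \ref{c_SRL_2}, every message on the support carries the tightest evidence $E_k=E_k^{\ast}(s^{\ast})$, and every state claim on the support is either $s^{\ast}$ or a self-refutable lie of its claimant. If $s_i=s^{\ast}$ the assertion is trivial, so I would fix an agent $i$ reporting a self-refutable lie $s_i$. For any $j\neq i$, the lie $s_i$ is by definition not refutable by $j$, so $s_i\notin RL_j(s^{\ast})$, and Observation \ref{c_NRL_0} gives $\mathcal{E}_j(s^{\ast})\subseteq\mathcal{E}_j(s_i)$, hence $E_j^{\ast}(s_i)\subseteq E_j^{\ast}(s^{\ast})$. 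This settles one inclusion at no cost.

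For the reverse inclusion I would argue by contradiction. Suppose $E_j^{\ast}(s_i)\subsetneq E_j^{\ast}(s^{\ast})$ for some $j\neq i$. Since every agent presents $E_k=E_k^{\ast}(s^{\ast})$, the claim $s_i$ then fails to be supported, because $E_j^{\ast}(s^{\ast})\not\subseteq E_j^{\ast}(s_i)$; consequently $\tau_i^2=-I$ with probability one. I would then let agent $i$ deviate to $\tilde m_i=(s^{\ast},E_i^{\ast}(s^{\ast}))$, changing only the reported state. The truth is supported by the tightest-evidence profile, so this deviation recovers the full $+I$ from $\tau_i^2$; it is costless under $\tau_i^1$ (the truth is irrefutable, and since $s_i$ is self-refutable agent $i$ was never refuted by others either, while his own refutations of others depend only on $E_i$, which is unchanged), and it weakly helps under $\tau_i^4$ (the magnitude $|E_i|/|S|$ is unchanged and the fine can only switch off). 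The only genuine cost is at most $1$ from the outcome $g$, and only when $i=1$.

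The crux, and the step I expect to be the main obstacle, is controlling $\tau_i^3$: a careless bound permits a loss of up to $I-1$ and would leave the deviation only weakly profitable for agent $1$, against which the constants are exactly tight. The resolution is to notice that $\tau_i^3$ is already pinned at its minimum before the deviation. Indeed, $s_i\in E_i^{\ast}(s_i)$ by (e1), whereas for each $j\neq i$ the claim $s_j$ is the truth or self-refutable for $j$, so $s_j\notin RL_i(s^{\ast})$ and Observation \ref{c_NRL_0} yields $E_i^{\ast}(s_j)\subseteq E_i^{\ast}(s^{\ast})$; since $s_i$ is self-refutable for $i$ we have $s_i\notin E_i^{\ast}(s^{\ast})$ and therefore $s_i\notin E_i^{\ast}(s_j)$. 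Hence $E_i^{\ast}(s_i)\neq E_i^{\ast}(s_j)$ for every $j\neq i$, so $\tau_i^3=-(I-1)$ before the deviation and can only rise afterward, giving $\Delta\tau_i^3\geq 0$. Collecting the terms, the deviation yields a net change of at least $-1+I+0+0=I-1\geq 1>0$ when $i=1$ (and at least $I$ when $i\neq 1$), contradicting the optimality of $m_i$. Thus $E_j^{\ast}(s^{\ast})\subseteq E_j^{\ast}(s_i)$, and together with the first inclusion we conclude $E_j^{\ast}(s_i)=E_j^{\ast}(s^{\ast})$ for every $j\neq i$.
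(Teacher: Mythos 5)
Your proof is correct and follows the same overall route as the paper: reduce to the case where $s_{i}$ is a self-refutable lie, get the inclusion $E_{j}^{\ast }(s_{i})\subseteq E_{j}^{\ast }(s^{\ast })$ from Observation \ref{c_NRL_0}, and rule out a strict inclusion by having agent $i$ deviate to the truth and collecting the gain of $I$ from $\tau ^{2}$ against the other transfers. Where you genuinely improve on the paper is the treatment of $\tau _{i}^{3}$: the paper simply bounds its change by $-(I-1)$ and offsets this against the $+I$ from $\tau ^{2}$ plus a claimed strict bound of $>-1$ on the outcome loss, an accounting that is exactly tight (and, with $v_{i}$ valued in $[0,1]$, the outcome loss is only weakly bounded by $1$, so the paper's total is really only $\geq 0$ at the corner). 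You instead observe that $s_{i}\in E_{i}^{\ast }(s_{i})$ by (e1) while $s_{i}\notin E_{i}^{\ast }(s_{j})\subseteq E_{i}^{\ast }(s^{\ast })$ for every $j\neq i$ (since $s_{i}\in RL_{i}(s^{\ast })$ and each $s_{j}$ is the truth or self-refutable for $j$), so $\tau _{i}^{3}$ already sits at its floor of $-(I-1)$ before the deviation and can only weakly improve. This yields a net gain of at least $I-1\geq 1$ and makes the profitability strict without leaning on the borderline outcome bound; it is a sharper and more robust version of the paper's argument rather than a different one.
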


\begin{proof}
Suppose to the contrary that agent $i$ reports $m_{i}=\left(
s_{i},E_{i}\right) $ with $E_{j}^{\ast }(s_{i})\not=E_{j}^{\ast }(s^{\ast })$
for some agent $j\neq i$. It follows from Claims \ref{c_ORL_2} and \ref%
{c_NRL_3} that $s_{i}$ is a self-refutable lie for agent $i$. Since $%
E_{j}^{\ast }(s_{i})\not=E_{j}^{\ast }(s^{\ast })$, Observation \ref{c_NRL_0}
implies that $E_{j}^{\ast }(s_{i})\subset E_{j}^{\ast }(s^{\ast })$. From
Claim \ref{c_SRL_2}, all agents present the tightest evidence with
probability one, that is, $E_{j}=E_{j}^{\ast }(s^{\ast })$ with $\sigma _{j}$%
-probability one for every agent $j\in \mathcal{I}$.

Consider a deviation for agent $i$ to the truth, i.e., consider a message $%
\tilde{m}_{i}$ which only replaces $s_{i}$ with $s^{\ast }$ in $m_{i}$. The
following table summarizes the payoff change by deviating from $m_{i}$ to $%
\tilde{m}_{i}$:%
\begin{equation*}
\begin{tabular}{|l|l|l|l|l|l|}
\hline
$g$ & $\tau _{i}^{1}$ & $\tau _{i}^{2}$ & $\tau _{i}^{3}$ & $\tau _{i}^{4}$
& In total \\ \hline
$>-1$ & $\geq 0$ & $\geq I$ & $\geq -I+1$ & $\geq 0$ & $>0$ \\ \hline
\end{tabular}%
\end{equation*}

The agent gains $I$ from $\tau ^{2}$. This is because, by assumption, $%
E_{j}^{\ast }(s_{i})\subset E_{j}$ for some $j\neq i$ (i.e., $j$ cannot
support $s_{i}$ at $s^{\ast }$) and $E_{j}^{\ast }(s^{\ast })=E_{j}$ for
every agent $j$. Moreover, the agent loses less than $-1$ due to the
outcome, and at most $I-1$ to $\tau ^{3}$, and incurs no loss from $\tau
^{1} $ (the truth is not refutable) or $\tau ^{4}$ (all claims are supported
now). In conclusion, this is a profitable deviation.
\end{proof}

\begin{claim}
\label{c_SRL_4}No agent reports a self-refutable lie with positive
probability.
\end{claim}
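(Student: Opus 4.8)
The plan is to show that any self-refutable lie can be eliminated by a profitable deviation to the truth, exploiting the cross-check in $\tau^3$ together with the supporting behavior forced by $\tau^2$. The previous claims have done most of the groundwork: by Claims \ref{c_ORL_2} and \ref{c_NRL_3}, every message on the support is either the truth or a self-refutable lie; by Claim \ref{c_SRL_2}, all agents present their tightest evidence $E_i^{\ast}(s^{\ast})$ with probability one; and by Claim \ref{c_SRL_3}, any state claim $s_i$ on agent $i$'s support induces the same tightest evidence \emph{for every other agent}, i.e., $E_j^{\ast}(s_i)=E_j^{\ast}(s^{\ast})$ for all $j\neq i$. The key observation is that a self-refutable lie $s_i$ for agent $i$, by definition, is refuted by agent $i$'s own evidence at $s^{\ast}$, which means $s_i\notin E_i^{\ast}(s^{\ast})$, and hence $E_i^{\ast}(s_i)\subsetneq E_i^{\ast}(s^{\ast})$. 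This discrepancy along agent $i$'s \emph{own} evidence dimension is precisely what $\tau^3$ is designed to detect.

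First I would suppose for contradiction that agent $i$ reports a self-refutable lie $s_i$ with positive probability. I would then consider the deviation $\tilde m_i$ that replaces $s_i$ with $s^{\ast}$ while keeping the (already tightest) evidence fixed, and account for the payoff change transfer by transfer, as in the preceding claims. The outcome term $g$ can cost at most $1$ (bounded utilities). From $\tau^1$ there is no loss, since the truth is never refutable. From $\tau^4$ there is no loss, since after Claim \ref{c_SRL_2} all evidence is tightest and so $\tau^4$ is inactive. From $\tau^3$ the worst case is a loss of at most $I-1$ (one unit per other agent). The crucial gain comes from $\tau^2$: I must argue that reporting the self-refutable lie triggered the $-I$ penalty while reporting the truth does not. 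Since $s_i\notin E_i^{\ast}(s^{\ast})$ and agent $i$ presents $E_i=E_i^{\ast}(s^{\ast})$, we have $E_i\not\subseteq E_i^{\ast}(s_i)$, so $\tau_i^2=-I$ under the lie; whereas under the truth every agent's tightest evidence supports $s^{\ast}$, so $\tau_i^2=0$. This yields a net gain of at least $I$ from $\tau^2$. Summing, the deviation earns at least $I$ and loses at most $1+(I-1)=I$ in the worst case; I would need the bookkeeping to show the total is strictly positive, matching the $>0$ column of the earlier payoff tables.

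The main obstacle is making the $\tau^3$ accounting strict rather than merely weak. A naive bound gives net change $\geq I - 1 - (I-1) = 0$, which is not a strict improvement. The resolution should be that $\tau^3$ does not in fact cost a full $I-1$: by Claim \ref{c_SRL_3}, reporting $s_i$ already induced $E_j^{\ast}(s_i)=E_j^{\ast}(s^{\ast})$ for all $j\neq i$, so along \emph{other} agents' evidence dimensions the lie and the truth agree, and the relevant $\tau^3$ comparisons cannot all be adverse. More precisely, I would track which $\tau_{ij}^3$ and $\tau_{ji}^3$ terms actually change sign when agent $i$ switches from $s_i$ to $s^{\ast}$, using that $E_j^{\ast}(s_i)=E_j^{\ast}(s^{\ast})$ pins down several of them, so that the genuine $\tau^3$ exposure is strictly less than $I-1$. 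Combined with the strict slack already present (the $g$ loss is strictly less than $1$ and $\tau^4$ contributes nothing), this should push the total strictly above zero and complete the contradiction, establishing that no self-refutable lie is reported with positive probability.
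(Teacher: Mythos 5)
There is a genuine gap, and it sits in the set-containment logic at the heart of your accounting. From $s_i\in RL_i(s^\ast)$ you correctly get $s_i\notin E_i^{\ast}(s^{\ast})$, but your conclusion $E_i^{\ast}(s_i)\subsetneq E_i^{\ast}(s^{\ast})$ is impossible: by (e1), $s_i\in E_i^{\ast}(s_i)$ while $s_i\notin E_i^{\ast}(s^{\ast})$, so $E_i^{\ast}(s_i)$ can never be contained in $E_i^{\ast}(s^{\ast})$. Likewise, the step ``$s_i\notin E_i^{\ast}(s^{\ast})$ and $E_i=E_i^{\ast}(s^{\ast})$, hence $E_i\not\subseteq E_i^{\ast}(s_i)$'' is a non sequitur: a set not containing $s_i$ can perfectly well sit inside a set that does contain $s_i$. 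Concretely, take $\mathcal{E}_i(s^{\ast})=\{\{s^{\ast}\},S\}$ and $\mathcal{E}_i(s_i)=\{S\}$ with no other agent able to refute $s_i$; then $s_i$ is self-refutable for $i$, yet $E_i^{\ast}(s^{\ast})=\{s^{\ast}\}\subseteq S=E_i^{\ast}(s_i)$, so $\tau_i^2$ is \emph{not} triggered by the lie and your claimed gain of $I$ from $\tau^2$ evaporates. With that gone, your ledger reads ``gain $\geq 0$, lose up to $1+(I-1)$,'' which proves nothing.

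The correct source of the strict gain is $\tau^3$, which you treat as a cost to be bounded rather than as the payoff of the deviation. Note that $\tau_{ij}^3$ compares $E_i^{\ast}(s_i)$ with $E_i^{\ast}(s_j)$ along agent $i$'s \emph{own} evidence dimension. By Claim \ref{c_SRL_3} applied to each $j\neq i$, every claim $s_j$ on $j$'s support satisfies $E_i^{\ast}(s_j)=E_i^{\ast}(s^{\ast})$, and $E_i^{\ast}(s_i)\neq E_i^{\ast}(s^{\ast})$ because exactly one of these sets contains $s_i$. Hence under the lie agent $i$ pays $-1$ to every other agent with probability one, while under the truth he pays $0$: a sure gain of $I-1\geq 1$ from $\tau^3$, which strictly dominates the outcome loss of less than $1$; the changes in $\tau^1$, $\tau^2$, and $\tau^4$ are all weakly favorable (the truth is irrefutable and, with all evidence tightest, supported by everyone). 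Your closing paragraph gestures in this direction but pins down the wrong terms: what Claim \ref{c_SRL_3} buys you is not that the $\tau^3$ exposure is ``strictly less than $I-1$,'' but that switching to the truth converts the entire $I-1$ penalty into a gain.
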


\begin{proof}
Suppose to the contrary that agent $i$ reports $m_{i}=\left(
s_{i},E_{i}\right) $ where $s_{i}$ is a self-refutable lie.

Consider a deviation to the truth for agent $i$, i.e., consider a message $%
\tilde{m}_{i}$ which only replaces $s_{i}$ with $s^{\ast }$ in $m_{i}$. The
following table summarizes the payoff change by deviating from $m_{i}$ to $%
\tilde{m}_{i}$:%
\begin{equation*}
\begin{tabular}{|l|l|l|l|l|l|}
\hline
$g$ & $\tau _{i}^{1}$ & $\tau _{i}^{2}$ & $\tau _{i}^{3}$ & $\tau _{i}^{4}$
& In total \\ \hline
$>-1$ & $\geq 0$ & $\geq 0$ & $\geq I-1$ & $\geq 0$ & $>0$ \\ \hline
\end{tabular}%
\end{equation*}

In words, since $s_{i}$ is a self-refutable lie and $E_{i}^{\ast
}(s_{j})=E_{i}^{\ast }(s^{\ast })\neq E_{i}^{\ast }(s_{i})$, agent $i$ gains
at least $I-1$ from $\tau _{i}^{3}$ (wherein the first equality is from
Claim \ref{c_SRL_3}); moreover, the agent incurs a loss of at most $1$ from
the outcome and no loss or gains from other transfers. Hence, this is a
profitable deviation.
\end{proof}

\subsubsection{Implementation}

To sum up, it follows from Claims \ref{c_ORL_2}, \ref{c_NRL_3}, and \ref%
{c_SRL_4} that with probability one each agent reports the true state.
Hence, to prove implementation, we need only establish the following claim.

\begin{claim}
In equilibrium, no transfer is incurred.
\end{claim}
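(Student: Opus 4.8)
The plan is to show that, having established that every agent reports the true state $s^{\ast}$ with probability one (from Claims~\ref{c_ORL_2}, \ref{c_NRL_3}, and \ref{c_SRL_4}), the realized message profile triggers none of the four transfers. Since each agent plays $s_i = s^{\ast}$ with probability one, I would first invoke Lemma~\ref{l_tight} together with the earlier claims to pin down the evidence each agent presents: the goal is to argue that each agent submits $E_i = E_i^{\ast}(s^{\ast})$, the tightest evidence. This should follow because if some agent $i$ were to withhold evidence while all claims equal $s^{\ast}$, then $\tau^4$ would be active (that agent fails to support the common claim), and Lemma~\ref{l_tight} then makes tightening a strict improvement, contradicting optimality. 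Thus on the equilibrium path $m = \bigl((s^{\ast}, E_1^{\ast}(s^{\ast})), \ldots, (s^{\ast}, E_I^{\ast}(s^{\ast}))\bigr)$.

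Given this profile, I would verify each transfer vanishes in turn. For $\tau^1$: every $s_i = s^{\ast}$, and by property (e1) the truth $s^{\ast}$ lies in every article of evidence available at $s^{\ast}$, so no agent's evidence refutes another's claim; hence $s_j \in E_i$ for all $i,j$ and $\tau_{ij}^1 = 0$. For $\tau^2$: since every agent presents $E_j = E_j^{\ast}(s^{\ast})$ and every claim is $s_i = s^{\ast}$, we have $E_j = E_j^{\ast}(s^{\ast}) = E_j^{\ast}(s_i) \subseteq E_j^{\ast}(s_i)$, so the support condition holds for every $i$ and $\tau_i^2 = 0$. For $\tau^4$: the triggering condition requires some $j,j'$ with $E_j \not\subseteq E_j^{\ast}(s_{j'})$, but with all claims equal to $s^{\ast}$ and all evidence tightest this fails, so $\tau_i^4 = 0$. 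For $\tau^3$: since $s_i = s_j = s^{\ast}$ for all $i,j$, trivially $E_i^{\ast}(s_i) = E_i^{\ast}(s_j)$, giving $\tau_{ij}^3 = 0$.

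The one step requiring genuine care is the claim that all agents present the tightest evidence on the equilibrium path, since $\tau^4$ is active only when \emph{some} agent fails to support \emph{some} claim, and if every agent already presents tightest evidence then $\tau^4$ is inactive and exerts no force. I would handle this by a contrapositive/consistency argument: suppose some agent $i$ plays $E_i \subsetneq E_i^{\ast}(s^{\ast})$ with positive probability. Because all state claims equal $s^{\ast}$, this $E_i$ fails to support the claim $s^{\ast}$ (as $E_i \not\subseteq E_i^{\ast}(s^{\ast})$ would be needed for $\tau^2$, but more directly $E_i$ not being tightest makes $\tau^4$'s triggering condition hold with $j=j'=i$), so $\tau_i^4$ is active with positive probability, and Lemma~\ref{l_tight} implies tightening strictly improves $i$'s payoff—contradicting that $m_i$ is a best response. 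This argument mirrors Claim~\ref{c_SRL_2} and can largely be imported from it, which is why I expect the verification to be routine once the equilibrium-path evidence profile is fixed. Collecting the four vanishing-transfer computations then completes the claim, and combined with $g(m) = f(s^{\ast})$ it establishes Nash-implementation of $f$.
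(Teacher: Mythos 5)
Your proof is correct and follows essentially the same route as the paper: truthful reports plus Lemma~\ref{l_tight} force every agent to present the tightest evidence (otherwise $\tau^{4}$ is active with positive probability and tightening is a profitable deviation), after which all four transfers vanish on the profile $\left( (s^{\ast},E_{i}^{\ast}(s^{\ast}))\right)_{i\in\mathcal{I}}$. One minor slip worth fixing: an agent who withholds evidence presents $E_{i}\supsetneq E_{i}^{\ast}(s^{\ast})$, not $E_{i}\subsetneq E_{i}^{\ast}(s^{\ast})$ (the tightest article is the intersection of all available articles, so no available article can be a proper subset of it); the condition you actually need, and do invoke, is $E_{i}\not\subseteq E_{i}^{\ast}(s^{\ast})$, which is what triggers $\tau^{4}$.
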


\begin{proof}
Since all state claims are truthful, it suffices to argue that all agents
present the tightest evidence in equilibrium. Indeed, if any agent is not
presenting the tightest evidence, then $\tau ^{4}$ is active with positive
probability. It then follows from Lemma \ref{l_tight} that deviating to the
tightest evidence is a profitable deviation.
\end{proof}

\subsection{Discussion}

We add a few remarks here. First, notice that from Lemma \ref{l_tight},
presenting additional evidence is never harmful, and is beneficial under
some cases. Therefore, it is a weakly dominant strategy to always present
all the evidence. It is clear that the above mechanism implements under
iterated elimination of weakly dominated strategies as well. This yields us
double implementation, in both mixed Nash equilibrium, and iterated
elimination of weakly dominated strategies.\footnote{%
If all agents present their tightest evidence, presenting an other-refutable
lie is dominated by presenting the truth owing to $\tau ^{1}$. Presenting
non-refutable lies is also dominated by presenting the truth due to $\tau
^{2}$. Any self-refutable lie which induces a evidence set tighter than that
under the truth for other agents is dominated by the truth due to $\tau ^{2}$%
, and then self-refutable lies are dominated by the truth due to $\tau ^{3}$.%
}

Second, in considering the above mechanism, it is evident that even though
the designer can impose transfers off the equilibrium, it is not sufficient
to simply penalize any profiles the designer finds undesirable. Rather, the
main challenge is to allow the agents a profitable deviation as well. The
idea can be seen from how the mechanism eliminates both nonrefutable and
self-refutable lies. For the elimination of nonrefutable lies, it is
critical that when an agent is being penalized for an unsupported
nonrefutable lie, the other agents must be presenting their tightest
evidence, so that the truth is actually supported, enabling the agent to
avoid the penalty by deviating to the truth. If this were not the case, then
in deviating to the truth, the agent's report will still be unsupported.
Coming to the elimination of self-refutable lies, it is critical that we
extract the article refuting a self-refutable lie from the agent himself, as
otherwise the cross-checks starts from the wrong profile of evidence, and we
would not be able to realign the profile towards the truth.

This leads to a somewhat counterintuitive design choice which we make during
the elimination of self-refutable lies. Notice that when an agent presents a
self-refutable lie, \emph{and} presents his tightest evidence, he actually 
\emph{refutes} his own state claim. This does not in fact lead to a penalty
for the agent, although penalizing an agent for being internally
inconsistent could be logical in certain circumstances.\footnote{%
For instance, in a partial implementation exercise of \cite{BDL2019} (p.
545), they begin with the understanding that agents must present the
maximal/tightest evidence associated with his state claim at the peril of
large punishments.} Our mechanism however is based on a series of
cross-checks (in Claims \ref{c_SRL_3} and \ref{c_SRL_4}) which only work
when we begin from the \emph{correct} profile of evidence. In particular,
the designer can incentivize the presentation of tightest evidence but is
not able to directly incentivize the presentation of the true state.%
\footnote{%
This differs from the classical implementation problem with preference
variation, where dictator lotteries ala \cite{AM94} can be used to elicit an
agent's preference.} In its simplest form, our mechanism takes the following
position - \textquotedblleft when there is something wrong with the message
profile, prioritize obtaining the tightest profile of evidence over all
else\textquotedblright . This allows the cross-checks which realign the
profile towards the truth.

\subsection{The Role of Normality}

\label{Norm}

We now turn to the role of normality in our setup. When agents cannot submit
arbitrary amounts of evidence, the necessary condition is obtained by \cite%
{KT2012} in their Propositions 2 and 3. First, for any state $s$, define by $%
T^{f}(s)=\{s^{\prime }:f\left( s^{\prime }\right) \neq f\left( s\right) \}$
the set of states in which the desirable outcomes differ from $f\left(
s\right) $. Recall that we wish to implement without relying on preference
variation. In particular, under constant preferences, Proposition 3 in \cite%
{KT2012} requires that $s$ and $T^{f}(s)$ be distinguishable, which means
that either some agent can refute $s$ at any state in $T^{f}(s)$ or refute
every state in $T^{f}(s)$ at $s$ using a single article of evidence. In the
appendix, we state and prove Theorem 7, which shows that this condition is
also sufficient for mixed-strategy implementation with two or more agents in
a finite albeit indirect mechanism which requires submission of two articles
of evidence.

\subsection{Budget Balance}

\label{BB}

We will now establish that the above mechanism can be modified to achieve
budget balance. The major challenge in achieving this goal is the
redistribution of penalties to other agents without affecting the incentives
of the recipient agents. In general, this cannot be achieved with two
agents, as we will show in Section \ref{RP}. In the following discussion, we
consider a setting with at least three agents.

One way to transform a two agent unbalanced mechanism into a three agent
balanced mechanism is to choose two agents at random to play the unbalanced
mechanism (with the transfers for an agent appropriately scaled up to
reflect the probability of being chosen) and redistribute the transfers to
the third agent. While this approach is quite general, it requires a
stochastic mechanism. In what follows, we provide modifications to the above
mechanism which achieves budget balance without randomization.

First, it is clear that $\tau ^{1}$, the transfer for the elimination of
other-refutable lies is already budget balanced. The incentive to present
all the evidence stemming from $\tau ^{4}$ is key to the removal of both
self-refutable and nonrefutable lies. This incentive can be arbitrarily
small, and is only active when some agent's claim is not supported by the
other agents. Redistributing this small incentive to the agent whose state
claim is unsupported does not affect his incentives since the penalty from
his state claim not being supported is much larger. The second transfer $%
\tau ^{2}$, which penalizes an agent for his state claim being unsupported
is redistributed among the other agents, with a minor modification - $\tau
_{i}^{2}$ is redistributed evenly to only those other agents $j\neq i$ who
have supported $i$'s state claim $s_{i}$, and redistributed evenly to all
agents if no other agents have supported $s_{i}$. The third transfer $\tau
^{3}$ is directly redistributed evenly to all agents. The resulting
mechanism is, therefore, budget balanced. We refer the reader to Appendix %
\ref{app:BB} for a discussion of how implementation is obtained under this
modified mechanism.

\subsection{\label{SmallTransfers}Implementation with Small Transfers}

While the transfers involved in the mechanism above have been imposed only
off the equilibrium, the transfers are \textquotedblleft
large\textquotedblright\ since they need to dominate the agents' utility
differences from outcomes. In reality, agents might not be willing or able
to pay these fines. Here we present a result for implementation with
arbitrarily small off-the-equilibrium transfers, as long as the designer can
randomize and there are at least three agents. To this end, we construct an
indirect mechanism building on similar ideas from \cite{AM94}. We begin by
defining the appropriate notion of implementation prevalent in the
literature for this case.

\begin{definition}
An SCF is Nash implementable with arbitrarily small transfers if for any $%
\varepsilon >0$, there is a mechanism $\mathcal{M}=(M,g,\left( \tau
_{i}\right) _{i\in \mathcal{I}})$ such that for any profile of utility
functions $v=\left( v_{i}\right) _{i\in \mathcal{I}}$, any state $s$, and
any mixed-strategy equilibrium $\sigma $ of the game $G\left( \mathcal{M}%
,v,s\right) $, we have $g(m)=f(s)$ and $\tau _{i}(m)=0$ for each message
profile $m\in $supp$\sigma \left( s\right) $ and the total
(off-the-equilibrium) transfer to any agent can be limited to being no
greater than $\varepsilon $.
\end{definition}

We now state the formal result as follows:

\begin{theorem}
\label{small}Suppose the evidence structure is given by $\mathcal{E}%
_{i}\left( \cdot \right) $. If there are at least three agents, then an SCF $%
f$ is Nash-implementable with arbitrarily small transfers if and only if it
is measurable with respect to $\mathcal{E}_{i}\left( \cdot \right) $.
\end{theorem}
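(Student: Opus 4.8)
Necessity is immediate from the general argument already invoked for Theorem \ref{main}: if $s\sim s'$ then $\mathcal{E}_i(s)=\mathcal{E}_i(s')$ for all $i$, and since preferences may be taken constant across states, any mechanism induces identical games $G(\mathcal{M},v,s)$ and $G(\mathcal{M},v,s')$, hence the same equilibrium set. A measurable $f$ is forced to agree on $s$ and $s'$, so measurability is necessary; the small-transfer requirement only strengthens the designer's task and does not affect this direction.

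**Sufficiency** is the substance, and the plan is to build an indirect mechanism that scales down the large penalties of the Section \ref{Mech} mechanism by randomizing over whether the large-penalty regime is triggered, borrowing the lottery device of \cite{AM94}. The idea is that the transfers $\tau^1,\dots,\tau^4$ need only dominate utility \emph{differences}, which are bounded (by our normalization, $v_i\in[0,1]$); what matters is the \emph{expected} penalty faced in a contemplated deviation, not its realized magnitude. So I would have the designer, after collecting messages, move to the ``punishment regime'' (in which the original transfers are levied, suitably rescaled) only with some small probability $p$, and otherwise impose no transfer. To keep each deviation's \emph{expected} payoff change equal to what it was in the original mechanism, I scale the punishment-regime transfers by $1/p$; then the expected transfer along any deviation is unchanged, so every profitable deviation identified in Claims \ref{c_ORL_1}--\ref{c_SRL_4} remains profitable, and the equilibrium-outcome characterization (all agents report the truth with tightest evidence) survives verbatim. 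The catch is that rescaling by $1/p$ makes the realized off-equilibrium transfer \emph{larger}, not smaller, so a naive single-lottery construction moves in the wrong direction.

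The resolution, and the place where at least three agents is essential, is to combine the lottery with budget-balanced \emph{redistribution} so that the rescaled penalties are transfers \emph{among the agents} rather than payments whose magnitude the designer must bound directly. The construction I would use is: select an ordered pair (or a single ``receiver'') at random, have the designated agents play the large-penalty game while the remaining agent(s) absorb the redistributed transfers, exactly as in the budget-balanced mechanism of Section \ref{BB}; the randomization both supplies the receiver and provides the probability $p$ over which the penalty is averaged. Because the mechanism is now balanced, the \emph{net} transfer summed across agents is zero in every realization, and the designer's relevant bound is on the transfer to \emph{any single} agent \emph{off the equilibrium}. I would then argue that by choosing $p$ appropriately and exploiting that off-equilibrium realizations occur only on a probability-$p$ event, the expected transfer to each agent along the equilibrium path is zero while the off-equilibrium expected transfer to any agent can be made at most $\varepsilon$: concretely, set the per-realization penalty at the fixed level needed for incentives, but arrange that any agent is ``on the paying side'' only with probability of order $p$, so his \emph{expected} transfer conditional on a unilateral deviation is of order $p\times(\text{fixed penalty})$, which the definition permits us to push below $\varepsilon$ by taking $p$ small while simultaneously rescaling the conditional penalty to preserve incentives.

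The main obstacle I anticipate is reconciling two tensions at once: the incentive constraints require the \emph{expected} penalty along a deviation to stay above the bounded utility gain (a lower bound that cannot shrink), while the small-transfer requirement asks the \emph{expected} off-equilibrium transfer to any agent to fall below $\varepsilon$. These are compatible only because ``transfer to agent $i$'' and ``penalty deterring agent $i$'s deviation'' can be decoupled through randomized redistribution with three or more agents—the penalty that deters $i$ is paid by $i$ but redistributed, and $i$ is selected as a payer rarely. Making this decoupling precise, and checking that the rescaling needed for incentives does not inflate the equilibrium-path transfer (which must be exactly zero, not merely small), is the delicate step; I would verify it by writing the expected transfer to each agent explicitly as a function of $p$ and the rescaling factor and confirming the equilibrium-path term vanishes identically (since no penalty is ever triggered on the equilibrium path, where all claims are truthful and fully supported) while the off-path term is $O(p)$ uniformly over deviations. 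The remainder—re-checking each of Claims \ref{c_ORL_1} through the final no-transfer claim in the rescaled, randomized, balanced mechanism—is routine given Lemma \ref{l_tight} and Observations \ref{c_NRL_0}--\ref{c_SRL_0}, since all the relevant comparisons are in expectation and the rescaling leaves expected payoff differences invariant.
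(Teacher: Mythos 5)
Your necessity argument is fine, but the sufficiency construction has a gap that I do not think can be patched within your framework. The tension you correctly identify --- the expected penalty along a deviation must exceed a fixed utility gain (up to $1$ under the normalization), while the off-equilibrium transfer to any agent must be at most $\varepsilon$ --- is not resolved by randomizing over a ``punishment regime'' and redistributing. If the regime fires with probability $p$ and the conditional penalty is rescaled by $1/p$ to keep the expected deterrent fixed, the realized off-equilibrium transfer is of order $1/p$, which grows rather than shrinks; if instead you keep the conditional penalty at a fixed level and let the agent pay it only with probability of order $p$, the expected deterrent is $O(p)$ and no longer dominates the gain from manipulating the outcome. Your final paragraph asks for both at once (``set the per-realization penalty at the fixed level \ldots while simultaneously rescaling the conditional penalty to preserve incentives''), which is circular. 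Budget-balanced redistribution changes who receives the money, not the magnitude any single agent pays or receives in a given realization, so it cannot decouple the two bounds; and the three-agent hypothesis is not used for redistribution in the paper's argument for this theorem.

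The missing idea is to randomize over the \emph{outcome}, not over the penalty, so that the utility gain from any single deviation---and hence the penalty needed to deter it---becomes small. The paper augments each message with $K+1$ extra state reports and makes the realized outcome a lottery: with probability $\varepsilon$ it is the base mechanism's outcome evaluated at the zeroth report and the evidence, and with probability $1-\varepsilon$ it is the average over $K$ rounds of a round outcome that equals $f(s)$ whenever at least $I-1$ agents agree on $s$ in that round. The base mechanism's transfers, scaled by $\varepsilon$, still dominate the $\varepsilon$-scaled incentive to manipulate the zeroth report, so the truth is extracted there at arbitrarily small cost; then small fines ($\alpha$ for a first report inconsistent with a neighbor's zeroth report, $\beta>\frac{1}{K}+\gamma$ for being the first deviant from a unanimous first report, $\gamma>0$ for being the sole deviant in a round) enforce truthful reports in every round by induction, and the worst-case \emph{realized} transfer $\alpha+\beta+K\gamma$ is pushed below any $\bar{\delta}>0$ by taking $K$ large. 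Three agents are needed so that a single deviant cannot move the round outcome (agreement among $I-1$ agents suffices) and so that the sole deviant can be identified and fined. Without this division of the stakes across rounds, your construction cannot satisfy the incentive constraints and the small-transfer bound simultaneously.
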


Intuitively, this result is based on the following ideas. A lottery is used
to \emph{divide} the incentive for manipulating the outcome into $K$ parts
(called rounds), where $K$ can be chosen as large as necessary to meet the
transfer bound required. The first round uses the implementing mechanism
with its transfers scaled down, since only a small part of the outcome is
controlled by it. This allows for the true state to be revealed.\footnote{%
Note that this is not dependent on the mechanism used. Any fully
implementing mechanism can be adapted into this framework, a fact we will
leverage later when studying other setups.} In each of the following rounds,
agents are incentivized to agree with the unanimous first round report of
the truth, failing which the first deviant is penalized an amount that is
small enough that it meets the transfer bound, yet large enough that it
dominates the incentive for manipulating the outcome of the round. This
penalty only applies to the first round with disagreement, as repeating it
will lead to a large transfer. It is sufficient for implementation however,
because no agent wants to be the first to deviate in any round. For further
details, we refer the reader to the formal proof of Theorem \ref{small} in
Appendix \ref{app:small}.

\section{\label{COSTLY}Costly Evidence}

So far, we have studied hard evidence, which corresponds to the notion that
evidence which is available to the agent is costless to present. In this
section, we relax the assumption that presenting evidence is costless.
Evidentiary costs are mentioned as an important area of further research in
both \cite[footnote 10]{BW2007} and \cite[p. 1714]{BL2012}.

With the above motivation, we derive an extension of our implementation
result to a setting with costly evidence. More formally, the environment is
the same as that in Section \ref{Env}, except for the addition of a cost
function $c_{i}:\mathcal{E}_{i}\times S\rightarrow \mathbb{R}_{+}$ which is
bounded by a (possibly) large positive cost $C$. Here, we note that we allow
the evidentiary cost to depend on the state.

There are two possible stances on the designer's knowledge of the cost
structure. Either the designer does not know $c_{i}\left( \cdot \right) $,
and only knows $C$ (so that he is unable to exploit the variation of costs
between states), or he knows $c_{i}\left( \cdot \right) $ as well (whereupon
he can exploit the variation of cost among states). We treat these two cases
separately in the following sections.

\subsection{Implementation regardless of Cost Variation}

\label{costrobust}

In this section, we assume that $c_{i}\left( \cdot \right) $ is common
knowledge among the agents but the designer only knows $C$. Then, the
designer cannot exploit cost variation. The definition of normality remains
the same as in Section \ref{Env}. Further, the notion of implementation
remains the same as that in Section \ref{Model}, so that the designer is
indifferent to the cost of evidence submission. We obtain the following
result:

\begin{theorem}
\label{hardcostly}An SCF $f$ is Nash implementable in a direct revelation
mechanism for every costly evidence structure $\mathcal{E}_{i}\left( \cdot
\right) $ if and only if it is measurable.
\end{theorem}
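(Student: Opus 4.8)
The plan is to establish Theorem \ref{hardcostly} by reducing the costly-evidence case to the hard-evidence case already handled in Theorem \ref{main}. The key observation is that the notion of implementation here makes the designer indifferent to evidentiary costs, and the definition of normality is unchanged, so the agents' incentive to present their tightest evidence is the only thing that the cost structure can interfere with. Since the designer knows only the bound $C$ on costs, the natural approach is to take the direct revelation mechanism of Section \ref{Mech} and scale up the cardinality transfer $\tau^{4}$ (and possibly $\tau^{2}$) by a factor large enough, relative to $C$, that the incentive to present additional evidence survives the added cost. Concretely, I would replace $\tau^{4}$ with $M\tau^{4}$ for a constant $M$ chosen so that $M/|S| > C$, guaranteeing that the marginal benefit of tightening one's evidence (which is at least $M/|S|$ when $\tau^{4}$ is active) strictly exceeds the maximum marginal cost $C$ of presenting an extra article. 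The other transfers $\tau^{1},\tau^{2},\tau^{3}$ would likewise be scaled so that their incentive comparisons continue to dominate the combined utility-plus-cost differences, which are now bounded by $1+C$ rather than $1$.

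The necessity direction is immediate and identical to that of Theorem \ref{main}: if $s\sim s'$ then $\mathcal{E}_i(s)=\mathcal{E}_i(s')$ for all $i$, and since preferences are allowed to be constant across states, no mechanism (regardless of costs, which also coincide on equivalent states by construction of the cost domain) can distinguish the two states, forcing $f(s)=f(s')$.

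For sufficiency, I would re-run the four-step elimination argument of the proof of Theorem \ref{main} verbatim, checking at each step that the scaled transfers still deliver a strictly profitable deviation once evidentiary costs are folded into the payoff tables. The crucial lemma to re-establish is the cost-robust analogue of Lemma \ref{l_tight}: presenting additional evidence is never harmful, and is strictly beneficial whenever $\tau^{4}$ is active with positive probability. Here the cost term works against us, since tightening evidence now incurs a cost, so the lemma is no longer automatic and requires the scaling $M/|S|>C$ to ensure the benefit outweighs the cost. Once this lemma holds, the elimination of other-refutable, nonrefutable, and self-refutable lies proceeds exactly as in Claims \ref{c_ORL_2}, \ref{c_NRL_3}, and \ref{c_SRL_4}, because those arguments rely only on the structure of the transfers and on Lemma \ref{l_tight}, not on the specific numerical bounds beyond the fact that each deviation yields a strictly positive payoff change.

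The main obstacle, and the step deserving the most care, is precisely the cost-robust version of Lemma \ref{l_tight} and its downstream use in eliminating nonrefutable lies. The elimination of nonrefutable lies (Claim \ref{c_NRL_2}) hinges on every other agent being induced to present their tightest evidence so that deviating to the truth becomes supported; with costly evidence, an agent might prefer to withhold a costly article and absorb the $\tau^{4}$ penalty unless the scaling is large enough. I must verify that the threshold $M$ can be chosen uniformly — independent of the (unknown to the designer) cost function but dependent only on the known bound $C$ and on $|S|$ — so that a single mechanism works for \emph{every} cost structure bounded by $C$. This uniformity is what the phrase ``regardless of cost variation'' demands, and confirming that the scaling constant depends only on $C$ and $|S|$, not on the particular $c_i(\cdot)$, is the technical heart of the argument.
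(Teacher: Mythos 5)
Your reduction-by-scaling plan fails at its first step, and the failure is not the one you flag. In the hard-evidence proof, Claim \ref{c_ORL_1} forces agent $j$ to refute \emph{any} other-refutable lie that agent $i$ tells with positive probability, because presenting extra evidence is free and so an arbitrarily small expected reward from $\tau^{1}$ suffices. With costly evidence this is unfixable by scaling: if agent $i$ tells the lie with probability $\delta$, agent $j$'s expected reward from presenting the refuting article is $\delta T_{1}$ against a cost that can be as large as $C$, and since $\delta$ can be arbitrarily small in a mixed equilibrium, no finite $T_{1}$ guarantees refutation. The same problem afflicts your cost-robust version of Lemma \ref{l_tight}: you state that tightening is strictly beneficial ``whenever $\tau^{4}$ is active with positive probability,'' but if that probability is $\delta$ the benefit is only $\delta M/|S|$, which your condition $M/|S|>C$ does not make exceed $C$. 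So the four-step elimination cannot be re-run ``verbatim''; the very first claim already breaks.

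The paper's proof restructures the argument to get around this. It first shows only that other-refutable lies are told with total probability less than a fixed $\varepsilon$ (choosing $T_{1}>C|S|/\varepsilon$, so that any lie told with probability at least $\varepsilon/|S|$ \emph{is} refuted, which bounds the residual probability). It then uses the fact that with probability at least $1-\varepsilon$ every opponent's claim is one that requires the tightest evidence to support, so $\tau^{4}$ is active with probability bounded away from zero, and the choice $(1-\varepsilon)T_{4}/|S|>C$ makes presenting the tightest evidence strictly optimal with probability one. Only \emph{after} tightest evidence is locked in does it eliminate all three classes of lies outright (now the refuting evidence is being presented anyway, so the cost objection disappears). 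Your proposal correctly identifies that the tightest-evidence lemma is the crux and that the constants must depend only on $C$ and $|S|$, but it misses the essential reordering --- probabilistic bounds first, exact elimination second --- without which the argument does not go through.
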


We provide the proof of Theorem \ref{hardcostly} along with a detailed
sketch in Appendix \ref{app:hardcostly}. In the proof, we adopt the same
direct revelation mechanism constructed in Section \ref{Mech} but suitably
adjust the relative scales of the four transfer rules. The proof requires a
substantially different approach from that of Theorem \ref{main}. Due to the
evidentiary cost, even after raising the transfers, lies can be eliminated
only with high probability rather than probability one. This turns out to be
sufficient to ensure that the agents present their tightest evidence in
order to support the truthful state claim which is presented with high
probability. However, once the tightest evidence is presented, the agents
can no longer lie in their state claims even with small probability, so that
implementation is achieved.

\subsection{Implementation under Cost Variation}

In this section, we assume that $c_{i}\left( \cdot \right) $ is common
knowledge among the designer and the agents. A mechanism, therefore, can
depend on the cost structure to eliminate incorrect claims of state. This
setup corresponds to that in \cite{KT2012}. We emphasize that in this setup
articles of evidence do not necessarily associate with subsets of the state
space because of the element of cost variation. Following \cite{KT2012}, we
define the set of cheapest evidence in any state as $\mathcal{E}%
_{i}^{l}(s)=\arg \min_{E_{i}}c_{i}(E_{i},s)$. We also normalize the costs so
that the difference in costs between any two articles of evidence between
any two states is less than $1$ dollar. That is, $c(\cdot )$ is normalized
such that $\left\vert c_{i}(E_{i},s)-c_{i}(E_{i},s^{\prime })\right\vert <1$
for any $i$, $E_{i}$, $s$ and $s^{\prime }$. We present the notion of
implementation we work with below.

\begin{definition}
\label{implementation-costly}An SCF $f$ is directly Nash-implementable in
pure (resp. mixed) strategies if there is a mechanism $\mathcal{M}=(S\times 
\mathcal{E},g,\left( \tau _{i}\right) _{i\in \mathcal{I}})$ such that for
any profile of bounded utility functions $v=\left( v_{i}\right) _{i\in 
\mathcal{I}}$, any state $s$, any pure (resp. mixed) strategy Nash
equilibrium $\sigma $ of the game $G\left( \mathcal{M},v,s\right) $, and any
message profile $(s,E)$ in the support of $\sigma (s)$,
\end{definition}

\begin{enumerate}
\item[(i)] $g(m)=f(s)$ and $\forall i$, $\tau _{i}(m)=0$;

\item[(ii)] $E\in \mathcal{E}^{l}(s)$
\end{enumerate}

As in the hard evidence setting, implementation requires that for any
profile of bounded utilities, and in each equilibrium of the implementing
mechanism, the outcome be $f$-optimal and the transfer to each agent be
zero. In addition, it also requires that only an article from the set of
cheapest evidence be submitted in equilibrium. In this section, we focus on
pure strategy equilibria, and discuss a treatment of mixed equilibria in
Section \ref{LimEv}.

\subsubsection{Evidence Monotonicity}

\cite{KT2012} establish that a condition called \emph{evidence monotonicity}
is necessary for implementation in the above setup using a mechanism which
only admits the submission of cheapest evidence in equilibrium.\footnote{%
Notice that Theorem \ref{hardcostly} does not involve this restriction. We
discuss the implications in Section \ref{LimEv}.} Since we wish to implement
while maintaining robustness to agents' utility functions, constant
preferences is a possible scenario under which a mechanism must still
implement. If we allow for transfers, then, we obtain the following
characterization of evidence monotonicity under our setting.

\begin{definition}
An SCF $f$ is evidence-monotonic under constant preferences if there exists $%
E^{\ast }:S\rightarrow \mathcal{E}$ such that
\end{definition}

\begin{enumerate}
\item[(i)] for all $s$, $E^{\ast }(s)\in \mathcal{E}^{l}(s,f(s))$

\item[(ii)] for all $s$ and $s^{\prime }$, if

$\forall i,t\in \mathbb{R},E_{i}^{\prime }:[-c_{i}(E_{i}^{\ast }(s),s)\geq
t-c_{i}(E_{i}^{\prime },s)\implies -c_{i}(E_{i}^{\ast }(s),s^{\prime })\geq
t-c_{i}(E_{i}^{\prime },s^{\prime })]$,

then $f(s)=f(s^{\prime })$.
\end{enumerate}

Alternatively, if $f(s)\not=f(s^{\prime })$, then $\exists
i,t,E_{i}^{^{\prime }}$ such that $c_{i}(E_{i}^{\ast }(s),s)\leq
c_{i}(E_{i}^{\prime },s)-t$ but $c_{i}(E_{i}^{\ast }(s),s^{\prime
})>c_{i}(E_{i}^{\prime },s^{\prime })-t$. This yields $c_{i}(E_{i}^{\prime
},s^{\prime })-c_{i}(E_{i}^{\ast }(s),s^{\prime })<c_{i}(E_{i}^{\prime
},s)-c_{i}(E_{i}^{\ast }(s),s)$. The implementing mechanism which we present
later will be direct. In context of a direct mechanism, we interpret $i$'s
action of submitting $(E_{i}^{\prime },s^{\prime })$ instead of $%
(E_{i}^{\ast }(s),s)$ as a challenge to the state claim of $s$ at $s^{\prime
}$, and denote $t$ as the (possibly negative) challenge reward and $%
E_{i}^{\prime }$ as the challenge evidence. In essence, agent $i$ credibly
informs the designer that the state is not $s$ by asking for an amount of
money $t$ for presenting an article of evidence $E_{i}^{\prime }$ instead of 
$E_{i}^{\ast }(s)$. This is profitable if the state is $s^{\prime }$ and not
profitable if the state is indeed $s$. This implies that if $i$ is to
challenge $s$ at $s^{\prime }$, then there is an article $E_{i}^{\prime }$
which has become cheaper relative to $E_{i}^{\ast }(s)$ in going from $s$ to 
$s^{\prime }$. We define $\mathcal{E}_{i}^{\gamma }(s,s^{\prime })$ as the
set of \emph{challenge evidence} for agent $i$ when challenging $s$ at state 
$s^{\prime }$, and make an arbitrary selection $E_{i}^{\gamma }(s,s^{\prime
})$ from it, which is used in the implementing mechanism.

\subsubsection{A Classification of Lies}

We begin with a classification of lies (for some true state $s^{\ast }$) and
then use a direct mechanism to eliminate them in sequence. To do so, we
first formalize the notion of refutability in this setup.

\begin{definition}
An agent $i$ can challenge a state claim $s$ when $\exists (s_{i},E_{i})$
such that $c_{i}(E_{i},s_{i})-c_{i}(E_{i}^{\ast
}(s),s_{i})<c_{i}(E_{i},s)-c_{i}(E_{i}^{\ast }(s),s)$.
\end{definition}

Note that there is a difference between the interpretation of challenge
between the usual hard evidence setting and this setup. In the hard evidence
setting, when an agent presents an article of evidence $E$ which does not
contain a state $s$, he definitively refutes the state $s$. In this setting
however, articles of evidence cannot be associated with a subset of the
state space in the same way, since the "reversal" which credibly signals to
the designer that the state is not $s$ requires the commitment of a certain
sum of money in exchange for the challenge evidence. The inequality in the
definition above assures us of the existence of a sum of money that enables
this reversal. In essence, whereas refutability is a property of the setup
under hard evidence, a mechanism is required for the same in this setup.

\subsubsection{Implementation}

We now present our main result for this setting.

\begin{theorem}
\label{main_soft}Suppose that $\mathcal{E}_{i}\left( \cdot \right) $ is a
costly evidence structure. Then, an SCF $f$ is directly Nash-implementable
in pure strategies in a direct revelation mechanism if and only if it is
evidence-monotonic under constant preferences.
\end{theorem}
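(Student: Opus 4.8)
The plan is to prove both directions, handling necessity by the standard constant-preferences argument and sufficiency by constructing a two-agent direct revelation mechanism that eliminates lies in sequence, mirroring the classification-of-lies strategy used for Theorem \ref{main}. For necessity, I would fix constant preferences, so that each agent's payoff in any mechanism depends only on transfers and evidentiary cost. Implementation supplies, for each state $s$, a pure-strategy equilibrium whose outcome is $f(s)$; let $E^{\ast }(s)$ be the evidence reported there. Condition (ii) of Definition \ref{implementation-costly} forces $E^{\ast }(s)$ to be cheapest, which yields part (i) of evidence monotonicity. For part (ii), suppose $f(s)\neq f(s^{\prime })$. The $s$-equilibrium profile cannot remain an equilibrium at $s^{\prime }$, since otherwise there would be a pure-strategy equilibrium at $s^{\prime }$ with outcome $f(s)\neq f(s^{\prime })$. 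Hence some agent $i$ has a profitable deviation at $s^{\prime }$ to some $E_{i}^{\prime }$; writing $t$ for the resulting change in $i$'s transfer, profitability at $s^{\prime }$ together with equilibrium (no profitable deviation) at $s$ translates directly into $c_{i}(E_{i}^{\ast }(s),s)\leq c_{i}(E_{i}^{\prime },s)-t$ and $c_{i}(E_{i}^{\ast }(s),s^{\prime })>c_{i}(E_{i}^{\prime },s^{\prime })-t$, which is exactly the failure of the premise in part (ii).

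For sufficiency I would build a direct mechanism in which each agent submits $(s_{i},E_{i})$, the outcome defaults to $f(s_{1})$, and agent $2$'s challenges are adjudicated before agent $1$'s, as flagged before Lemma \ref{no_chal_cheapest}. Using the preselected pairs $E_{i}^{\gamma }(\cdot ,\cdot )$ and $t_{i}(\cdot ,\cdot )$, the mechanism would: (a) when agent $2$'s report constitutes a valid challenge of $s_{1}$, switch the outcome to $f(s_{2})$ and pay agent $2$ the reward $t_{2}(s_{1},s_{2})$; (b) when agent $2$'s report does not challenge $s_{1}$, use Lemma \ref{no_chal_cheapest} to attribute any remaining disagreement to a challenge by agent $1$, adjudicating it through the pair $(E_{1}^{\gamma },t_{1})$; and (c) add cost-scaled penalties that pin down cheapest evidence and punish the party whose claim is successfully challenged. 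The defining property of $(E_{i}^{\gamma },t_{i})$, namely that challenging $s$ strictly pays exactly when the true state is $s^{\prime }$ and strictly loses when it is $s$, is what makes every transfer vanish on the truthful profile while biting off path.

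I would then verify the pure-strategy equilibria by eliminating the three types of lies in order, as in Section \ref{Mech}. First, an \emph{other-challengeable} lie of agent $i$ is removed because the opponent strictly prefers to challenge it and collect the reward, so switching to the truth is profitable for the liar. Second, a \emph{non-challengeable} lie $s^{\prime }$ satisfies $f(s^{\prime })=f(s^{\ast })$ by evidence monotonicity, so it cannot distort the outcome, and the cost-scaled penalties then force cheapest evidence and zero transfers. Third, a \emph{self-challengeable} lie $s_{1}$ of agent $1$ cannot be challenged by agent $2$, but by its very definition the true state $s^{\ast }$ is one at which agent $1$'s own challenge of $s_{1}$ pays off; here I would invoke Lemma \ref{no_chal_cheapest} to show that in a profile of the form $((s_{1},E_{1}),(s_{2},E_{2}^{\ast }(s_{2})))$ the mechanism correctly attributes the challenge to agent $1$ rather than to agent $2$, exposing the self-challenger and making a switch to $(s^{\ast },E_{1}^{\ast }(s^{\ast }))$ strictly profitable. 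Combining the three steps shows the unique pure-strategy equilibrium reports the truth with cheapest evidence and no transfer, giving both parts of Definition \ref{implementation-costly}.

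The hard part will be step three together with the two-agent disambiguation on which it rests. With only two agents there is no majority to identify the deviator, and a naive challenge reward admits the spurious equilibrium in which agent $2$ ``challenges'' a truthful report of agent $1$ purely to collect $t_{2}$. The priority rule and Lemma \ref{no_chal_cheapest} are precisely what break this symmetry: when agent $2$'s cheapest evidence genuinely challenges $s_{1}$ the state must indeed be $s_{2}$, and otherwise the challenge is unambiguously attributed to agent $1$. The delicate point is to calibrate the rewards $t_{i}(\cdot ,\cdot )$ and the cost-scaled penalties so that this attribution remains incentive-consistent for \emph{every} bounded utility profile, including constant preferences, while still collapsing to zero transfers on the truthful profile.
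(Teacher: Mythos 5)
Your overall architecture matches the paper's: a direct mechanism with preselected challenge pairs $(E_i^{\gamma},t_i)$, priority for agent $2$'s challenges, Lemma \ref{no_chal_cheapest} applied to profiles of the form $((s_1,E_1),(s_2,E_2^{\ast}(s_2)))$ to decide who is challenging whom, a flat fine on the party whose claim is successfully challenged, and sequential elimination of agent $1$'s lies according to who can challenge them. Your necessity sketch is the standard constant-preferences argument that the paper defers to \cite{KT2012}.

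There is, however, a genuine gap in your step (a): you switch the outcome to $f(s_2)$ when agent $2$ challenges $s_1$. The paper's mechanism deliberately keeps the outcome at $f(s_1)$ in that branch (and, symmetrically, sets it to $f(s_2)$ when agent $1$ challenges $s_2$), so that a challenge never moves the outcome away from the challenged claim. The reason is that the challenge pair is calibrated purely in cost terms: $t_2(s_1,s_2)$ is chosen so that $c_2(E_2^{\ast}(s_1),s_1)\leq c_2(E_2^{\gamma}(s_1,s_2),s_1)-t_2(s_1,s_2)$ while the reverse strict inequality holds at $s_2$, and evidence monotonicity guarantees no margin beyond these inequalities. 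If challenging also changes the outcome from $f(s_1)$ to $f(s_2)$, agent $2$'s challenge decision acquires an extra term $v_2(f(s_2),s)-v_2(f(s_1),s)$ of magnitude up to one. Since the theorem requires implementation for \emph{every} bounded utility profile, agent $2$ will then challenge a truthful $s_1=s^{\ast}$ whenever he prefers $f(s_2)$, destroying the truthful equilibrium and violating conditions (i)--(ii) of Definition \ref{implementation-costly}; conversely he may decline to challenge a false $s_1$ whenever he dislikes $f(s_2)$, so other-challengeable lies survive. No recalibration of $t_2$ can absorb this: lowering it breaks profitability of the challenge at $s_2$, raising it breaks non-profitability at $s_1$. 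The ``delicate calibration'' you flag at the end is thus impossible under your outcome rule; the paper resolves it not by calibration but by decoupling the outcome from the challenge entirely (so the challenger's incentive is governed only by costs and the reward $t_2$) and separately imposing the one-dollar fine $\tau_1^{1}$ on the challenged party, which dominates any outcome preference.
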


We draw attention to a few interesting points regarding the above result
here. First, we note that while evidence monotonicity is necessary
irrespective of the mechanism used, the mechanism we present is direct.
Second, \cite{CKSX2021} contains an example which establishes that it is not
possible to obtain direct implementation of some Maskin-monotonic social
choice functions where there are only two agents.\footnote{%
The example discusses rationalizable implementation, but it also applies to
Nash implementation.} The impossibility essentially derives from the
difficulty of figuring out which agent is challenging the other when two
agents disagree in their state claims. The presence of evidence allows us to
bypass this difficulty.\footnote{%
See Lemma \ref{no_chal_cheapest} in Appendix \ref{app:softcostly} for more
details.} Third, while the result states that evidence monotonicity under
constant preferences is necessary, we allow for variation of preferences in
the proof of Theorem \ref{main_soft}. The necessity of this condition arises
out of our desire to achieve implementation regardless of preference
variation. If the designer cannot exploit preference variation,
implementation must obtain from variations in the cost of evidence.

\subsubsection{Implementation in mixed strategies}

For implementation in mixed-strategy Nash equilibrium which accounts for
general cost variation (Definition \ref{implementation-costly}), we also
present an alternative treatment in Appendix \ref{CEMS}. In this treatment,
we impose a stronger condition than evidence monotonicity under constant
preferences. Specifically, we require that for any two states $s$ and $%
s^{\prime }$ with distinct social outcomes, there be at least one agent for
whom an article of evidence that was not cheapest under $s$ is now cheapest
under $s^{\prime }$. We term this condition evidence monotonicity* and prove
mixed-strategy implementation (also in a direct mechanism) under evidence
monotonicity$^{\text{*}}$.\footnote{%
If we only allow implementing mechanisms which use arbitrarily small rewards
to the agents, then evidence monotonicity* becomes necessary for
implementation (this is established in Theorem 6). Our implementing
mechanism in this case indeed satisfies the requirement.} The result
complements Theorem \ref{hardcostly} in requiring that only one article of
cheapest evidence be submitted in equilibrium, but not normality.\footnote{%
We also conjecture that Theorem \ref{hardcostly} can be established without
requiring normality, by making use of an indirect mechanism akin to the
implementing mechanism in the proof of Theorem 7.}

\subsection{\label{LimEv}Evidence Monotonicity versus Measurability}

Theorems \ref{hardcostly} and \ref{main_soft} approach costly evidence in
different ways. It is natural to ask how these results compare. In the
costly evidence setting, some articles which are unavailable are considered
to have an infinite cost. In such a setting, measurability equates to the
requirement that the set of evidence with finite cost must change between
states with different social outcomes. This requirement is strictly stronger
than evidence monotonicity. To see this, consider Example 2 of \cite{KT2012}
(henceforth KT) which derives from agents having a preference for honesty.
The example can be viewed as a case of costly evidence in which all states
have the same evidence sets, but the cheapest evidence is distinct in each
state. KT show that this makes any social choice function evidence
monotonic. However, since there is no variation in the set of evidence with
finite cost, only constant social choice functions are measurable.

KT show in their Corollary 4 that evidence monotonicity coincides with
measurability in a hard evidence structure (where evidence is costless when
it is available) which satisfies normality.\footnote{%
Note that this only holds among pairs of states which do not satisfy maskin
monotonicity, for instance with state independent preferences since
otherwise, evidence monotonicity may be satisfied via preference variation.}
In contrast, our Theorem \ref{hardcostly} allows for hard evidence which is
available and yet has a positive cost. In this setting, even when the
evidence structure satisfies normality, there can still be social choice
functions which satisfy measurability but not evidence monotonicity (under
constant preferences). See Appendix \ref{app:meas_but_not_em} for a detailed
description. Such SCFs can therefore be implemented according to Theorem \ref%
{hardcostly} but not Theorem \ref{main_soft}.

The discrepancy arises because the implementation notion in Theorem \ref%
{main_soft} requires that only one article of cheapest evidence be submitted
in equilibrium. Apparently, the requirement must be associated with a fixed
costly evidence structure and thereby muted when we demand implementation
regardless of evidentiary cost variation, as we do in Theorem \ref%
{hardcostly}. In this regard, Theorem \ref{hardcostly} is a step towards
answering a question which KT pose as to which social choice functions could
be implemented if the designer allowed for the presentation of costly
articles of evidence in order to elicit information from the agents.%
\footnote{\cite[p.349]{KT2012} cite screening as an example for why it might
be interesting to allow for costly evidence provision in equilibrium.}

\section{\label{RP}Incomplete Contracts and Renegotiation-Proofness}

In this section, we apply our results under hard evidence to revisit a
classical issue in contract theory -- that of bilateral contracting with
observable but unverifiable information. This problem arises when two
parties wish to condition a contract on certain state variables which are
commonly observable, but not verifiable by a third party, such as a court.
Attempting to directly condition a contract on these variables runs into
difficulties, because whoever is responsible for enforcing the contract may
not be able to ascertain which state occurred and thus may also be unable to
resolve any disputes between the agents.

Implementation theory deals with issues of contract design in such
situations by exploiting the idea that contracts can be made contingent on
the messages reported by the parties to make the observable state
verifiable. In particular, by designing suitable revelation mechanisms for
the contracting parties, it is often possible to achieve the same outcomes
as those arrived at with fully contingent contracts. However, most papers in
this literature use mechanisms of the Moore-Repullo type (see, for example, 
\cite{MM99}, \cite{MT99}, and \cite{M2002}), which are vulnerable to
renegotiation among the agents, because they involve penalizing both agents
in certain off-equilibrium paths. In general, the possibility of
renegotiation significantly limits the set of implementable outcomes. For
instance, \cite{MM99} (henceforth MM) considers an example in which a seller
(she) owns a product which a buyer (he) is interested in obtaining. The
seller has an option to make an investment in the product at cost $c$ to
raise its value from $\phi $ to $\theta $ for the buyer. The investment is
efficient (i.e., $c<\theta -\phi $), observable by the two parties, but
unverifiable. In this setting, when the cost of the investment is more than
half of the value it adds to the good, MM establish that the only
renegotiation-proof contract is a null contract. This is because the buyer
refuses to accept the good and renegotiates the price outside of the
contract. That is, it is not possible to incentivize efficient investment in
this scenario.

In the context of the example, it becomes clear why nonverifiability poses
an issue. If the investment were verifiable, a mechanism could fine the
buyer for refusing to trade after the investment had been made, and transfer
the fine to the seller, thus preventing the buyer from lying, and indeed
also preventing renegotiation. Even without verifiability, if the seller
could \emph{prove} that she has indeed made the investment or refute the
claim that she has not made the investment, a court could fine the buyer (if
he refuses to trade) and reward the seller based on this proof. This leading
example motivates our choice of using the hard evidence results in this
context, as it is more natural that evidence in contracting situations will
take the form of subsets of the state space, refuting alternative states.
Arguably, it is not possible to prove an investment which one has not made,
at whatever cost. This immediately leads to the question -- what conditions
must the hard evidence structure satisfy so that renegotiation-proof
contracts can be made?

In order to address this question, we begin by formalizing the notion of
renegotiation. Following MM, we assume $A$ is a finite set, and define $T$
as the set of transfers with budget surplus.\footnote{%
That is, $T=\{t\in \mathbb{R}^{I}$ s.t. $\sum_{i}t_{i}\leq 0\}$. Since this
is a contract between agents, it is not possible to find money outside the
contract to finance a budget deficit.} With this, we define the
renegotiation process via a renegotiation function $h:A\times T\times
S\rightarrow A\times T$, where $T$ defines the space of transfers to\emph{\ }%
the $I$ agents. This function can be thought of as a transformation on the
outcome and profile of transfers of a mechanism $\mathcal{M}=(M,g,\left(
\tau _{i}\right) _{i\in \mathcal{I}})$, which occurs before the agents
evaluate the outcome using their utility functions $u_{i}$. That is, given a
mechanism $\mathcal{M}$, agents submit their messages to the mechanism,
which yields an outcome and a set of (budget surplus) transfers; agents then
renegotiate this combination of outcomes and transfers to another
combination, and then evaluate their utilities. We think of this in terms of
the game $G\left( \mathcal{M},v,h,s\right) $, which differs from the game $%
G\left( \mathcal{M},v,s\right) $ in that instead of the outcome being
defined as $g(m)$ and the transfer profile as $\tau (m)$, the outcome is
defined as $h^{a}(g(m),\tau (m),s)$ and the transfer profile as $%
h^{t}(g(m),\tau (m),s)$. Now, we define the notion of efficiency for an
allocation.

\begin{definition}
\label{eff}An allocation $(a,(t_{i})_{i\in \mathcal{I}})$ is efficient with
respect to a profile of utilities $v=(v_{i})_{i\in \mathcal{I}}$ at state $s$
if there does not exist $(\hat{a},\hat{t})\in A\times T$ such that for every
agent $i$, $u_{i}(\hat{a},s,\hat{t}_{i})\geq u_{i}(a,s,t_{i})$ with strict
inequality for some $i$.
\end{definition}

Following MM, we make the following three assumptions about the
renegotiation function $h$. First, we assume that the renegotiation function 
$h$ is \emph{predictable}, which essentially amounts to saying that $h$ is
common knowledge among agents and deterministic. Second, we assume that $h$
is \emph{individually rational}, that is, if at every state, all agents
weakly prefer the renegotiated outcome to the original one. This is a
natural restriction as no agent can be forced into renegotiation. Finally,
we assume that renegotiation is \emph{efficient}, which means that $h\left(
\cdot ,\cdot ,s\right) $ results in efficient allocations at every $s$.

In what follows, we will constrain the scope of the discussion to
combinations of $f$, $v$, and $h$ such that $f(s)$ is efficient with respect
to $v$, and $h$ satisfies the properties described above.

\begin{definition}
A social choice function $f$ is implementable with renegotiation in Nash
equilibrium if there is a mechanism $\mathcal{M}=(M,g,\left( \tau
_{i}\right) _{i\in \mathcal{I}})$ such that for any state $s$, any profile
of utilities $v$, and any mixed-strategy Nash equilibrium $\sigma =(\sigma
_{i})_{i\in \mathcal{I}}$ of the game $G\left( \mathcal{M},v,h,s\right) $,
we have $h(g(m),\tau (m),s)(\hat{a})=f(s)$ and $h(g(m),\tau (m),s)(\hat{t}%
)=0 $ for every message profile $m$ on the support of $\sigma $.
\end{definition}

In the spirit of this paper, we require that implementation obtain
regardless of the utility functions $v$ (subject to the constraints
mentioned above). We now turn to characterizing the necessary and sufficient
conditions for renegotiation-proof implementation.

\begin{theorem}
\label{RPT_REINTERPRET}Assume that $I=2$ and $\mathcal{E}$ is the evidence
structure. An SCF $f$ is implementable with renegotiation in Nash
Equilibrium if and only if for any pair of states $s$ and $s^{\prime }$ in $%
S $ such that $f(s)\neq f(s^{\prime })$, one of the following is true:

\begin{enumerate}
\item[(a)] There is one agent who can refute $s^{\prime }$ at $s$ and $s$ at 
$s^{\prime }$; or

\item[(b)] Both agents can refute $s^{\prime }$ at $s$ and neither of them
can refute $s$ at $s^{\prime }$.
\end{enumerate}
\end{theorem}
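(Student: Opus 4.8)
The plan is to recast the stated dichotomy into a single transparent condition about \emph{which agent can be punished for which misreport}, and then to build both the mechanism and the impossibility around that condition.

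\emph{Reformulation.} Write ``$i$ refutes $y$ at $x$'' for $y\in RL_{i}(x)$, i.e.\ $i$ possesses some $E_{i}\in \mathcal{E}_{i}(x)$ with $y\notin E_{i}$. For a pair with $f(s)\neq f(s^{\prime })$ set $a_{i}=1$ iff $i$ refutes $s^{\prime }$ at $s$, and $b_{i}=1$ iff $i$ refutes $s$ at $s^{\prime }$. I would first record, via Observation \ref{c_NRL_0}, that agent $i$ \emph{cannot} support an asserted state $y$ at the true state $x$ (his tightest evidence $E_{i}^{\ast }(y)$ is infeasible at $x$, since it would exclude $x$) exactly when $i$ refutes $x$ at $y$. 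Hence the misreport ``$i$ asserts $y$ while the truth is $x$'' admits a \emph{one-sided} punishment---either rewarding a refuter through a $\tau^{1}$-type transfer, or fining the unsupported claimant through a $\tau^{2}$-type transfer while compensating the opponent---iff the \emph{other} agent refutes $y$ at $x$ or $i$ refutes $x$ at $y$. Taking the conjunction over the four directed misreports between $s$ and $s^{\prime }$ gives $(a_{1}\vee b_{2})\wedge (a_{2}\vee b_{1})$, and a routine Boolean identity shows this is \emph{equivalent} to (a)$\vee$(b). The reason only one-sided punishments count is the renegotiation constraint: since $h$ is individually rational and efficient, any money-burning (budget-surplus) outcome is pushed back to the Pareto frontier, so only budget-balanced transfers between the two agents survive $h$. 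This is precisely why each directed misreport must admit a \emph{single} identifiable culprit, which is what the reformulated condition encodes.

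\emph{Sufficiency.} Assuming (a)$\vee$(b) pairwise, I would adapt the direct mechanism of Section \ref{Mech} to two agents, keeping only \emph{budget-balanced} versions of the refutation transfer $\tau^{1}$ and of the non-support transfer (the fine on the unsupported claimant is paid to the opponent, and the tightest-evidence incentive $\tau^{4}$ is redistributed as in Section \ref{BB}), all scaled to dominate the unit utility range. On path all reports are truthful, the outcome is the efficient $f(s)$ with zero transfers, and since $f$ is efficient this profile is a fixed point of $h$. Off path I would eliminate misreports pairwise. Under (a), the distinguished agent $i$ is a two-sided certifier: his own lie toward $s^{\prime }$ is unsupportable at $s$ (because $b_{i}=1$ forces $E_{i}^{\ast }(s^{\prime })$ to exclude $s$), so the non-support fine catches it, while any lie by the opponent is refuted by the certifier through $\tau^{1}$; symmetrically for the other direction. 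Under (b), every assertion of $s^{\prime }$ at truth $s$ is refuted by the opponent (both $a_{i}=1$), and every assertion of $s$ at truth $s^{\prime }$ is unsupportable by \emph{every} agent (both $a_{i}=1$ forces each $E_{i}^{\ast }(s)$ to exclude $s^{\prime }$), so the non-support fine again isolates a single culprit who strictly gains by reverting to the truth. In each case the punishing profile is an efficient allocation plus a zero-sum transfer, hence fixed by $h$, so the deviation is genuinely unprofitable, and truth-telling is the unique equilibrium outcome.

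\emph{Necessity.} For the converse I would fix a pair with $f(s)\neq f(s^{\prime })$ violating (a)$\vee$(b). By the reformulation, the negation is $(a_{1}=b_{2}=0)\vee (a_{2}=b_{1}=0)$, so after relabeling agents I may assume $a_{2}=b_{1}=0$; then the directed misreport ``agent $1$ asserts $s^{\prime }$ at truth $s$'' (equivalently ``agent $2$ asserts $s$ at truth $s^{\prime }$'') is undeterrable. I would choose a \emph{constant} preference profile making both $f(s)$ and $f(s^{\prime })$ efficient, with agent $1$ strictly preferring $f(s^{\prime })$ and agent $2$ strictly preferring $f(s)$, together with an efficient, individually rational, predictable $h$ that is the identity on the Pareto frontier and sends every inefficient outcome to a frontier point favorable to the prospective deviator (so that $h$ neutralizes every non-one-sided penalty). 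The feasibility inputs again come from Observation \ref{c_NRL_0}: $b_{1}=0$ yields $\mathcal{E}_{1}(s^{\prime })\subseteq \mathcal{E}_{1}(s)$ and $a_{2}=0$ yields $\mathcal{E}_{2}(s)\subseteq \mathcal{E}_{2}(s^{\prime })$, so the message $(s^{\prime },E_{1}^{\ast }(s^{\prime }))$ is available to agent $1$ at both states and agent $2$'s $s$-messages remain available at $s^{\prime }$. Using these inclusions I would transplant an equilibrium from one state into a wrong-outcome equilibrium at the other, porting only the freely-available message components, and argue that the mechanism cannot separate the undetectable lie from the truth while $h$ kills any money-burning deterrent---contradicting implementation.

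\emph{Main obstacle.} The delicate step is necessity. The two evidence inclusions supplied by $a_{2}=0$ and $b_{1}=0$ point in \emph{opposite} directions, so neither state's equilibrium ports wholesale into the other; the argument must transplant each agent's messages along the inclusion that is actually available and simultaneously rule out that the mechanism uses some indirect, non-budget-balanced penalty that $h$ fails to undo. Pinning down the adversarial $\langle v,h\rangle$ so that it defeats \emph{every} mechanism, rather than a specific one, is where the real work lies.
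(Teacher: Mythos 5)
Your proposal is correct and follows essentially the same route as the paper: the same two-fine, budget-balanced direct revelation mechanism for sufficiency (a fine for an unsupported state claim and a larger fine for a refuted claim, each paid to the opponent), and the same necessity argument that picks a constant profile $v$ with opposed preferences summing to one on the frontier, forces post-renegotiation budget balance, and transplants each agent's equilibrium message across the two states along the evidence inclusions supplied by Observation \ref{c_NRL_0}. The one notable addition is your reformulation of (a)$\vee$(b) as $(a_{1}\vee b_{2})\wedge(a_{2}\vee b_{1})$ --- which is indeed equivalent (using normality and (e2) to identify ``$i$ cannot support $y$ at $x$'' with ``$i$ refutes $x$ at $y$'') and reads as ``every directed misreport admits a one-sided punishment'' --- so that it unifies the paper's hand-enumerated cases ((a)/(b) in sufficiency, (c)/(d) in necessity) into a single condition without changing the substance of either direction.
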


For the formal proof, we refer the reader to Appendix \ref{app:RP}.

To illustrate the above conditions, we consider an alternate evidence
structure in the example from MM\ above -- what if the buyer is the only
agent who can prove that the investment was made (or not)? In practice, this
may often be the case, for instance if the buyer has some sort of (private)
suitability test which can check if the investment has been made. Theorem %
\ref{RPT_REINTERPRET} tells us that a renegotiation-proof contract exists in
this case. This is because the burden of proving that the investment was not
made also falls to the buyer, and this is impossible when the investment has
been made. This, in turn, means that any set of prices can be implemented
with this evidence structure (note that any price is Pareto efficient),
including, of course, the set of prices which recoup the cost of the
investment for the seller, thereby incentivizing her to make the efficient
investment. In this context, we refer the reader to Appendix \ref%
{app:RPEvidenceEx} for an example of an evidence structure which satisfies
measurability but not the conditions of Theorem \ref{RPT_REINTERPRET}. While
the Conditions (a) and (b) are fairly demanding in this example, requiring a
fair bit of provability by the buyer, they are nevertheless \emph{necessary}%
, suggesting that renegotiation-proof implementation is significantly more
challenging than the notion in Definition \ref{implementation}.

There are a few points of interest we wish to clarify here. First, we note
that the characterization in Theorem \ref{RPT_REINTERPRET} is stronger than
measurability. Indeed (e1) of Definition \ref{e1e2} and either one of
Condition (a) or (b) yields that the agent who can refute $s$ at $s^{\prime
} $ has an article of evidence at $s$ which he does not have at $s^{\prime }$%
. Then, $f$ trivially satisfies measurability. Second, the mechanism
proposed in the proof of Theorem \ref{RPT_REINTERPRET} needs only a direct
revelation mechanism to work, and at least one of the Conditions (a) or (b)
are necessary irrespective of the mechanism used.

We discuss the intuitions behind the conditions presented in Theorem \ref%
{RPT_REINTERPRET} now. We prove the necessity of the conditions by
constructing a pair $v,h$ under which\ a) at least one of the agents has
incentive to lie, and b) the agent in question \emph{can} lie, regardless of
the implementing mechanism.

In proving sufficiency, we construct a direct revelation mechanism with two
transfers: a penalty of 1 dollar for agent $i$ if he is unable to support
his state claim and a penalty of 2 dollars if his state claim is refuted by
the evidence which agent $j\neq i$ presents. Each fine is paid to the other
agent, so that the mechanism has a balanced budget.\footnote{%
Recall that the maximum bound of utility is less than 1 dollar, so that
these transfers dominate the utility from the outcome.}

Assume that the true state is $s^{\ast }$. First, no agent reports
other-refutable lies with positive probability, because the other agent
presents the refuting evidence with probability 1; this loses the first
agent 2 dollars, which is more than the utility difference from the outcome
and the other transfer. Second, if an agent reports a self-refutable (resp.
nonrefutable) lie $s$ with positive probability, then Condition (a) (resp.
Condition (b)) of Theorem \ref{RPT_REINTERPRET} implies that this agent can
also refute $s^{\ast }$ at $s$. Therefore, he cannot support his state claim
and will be fined 1 dollar. Since this is more than the utility difference
from the outcomes, deviating to $(s^{\ast },E_{i}^{\ast }(s^{\ast }))$ is
profitable. Therefore, the only equilibrium is truth-telling.

To summarize, we have established the necessary and sufficient condition for
renegotiation-proof Nash implementation in settings with hard evidence in
the language of refutability. This allows us to answer an important question
-- it is well known that ex-post verifiability is sufficient for a
renegotiation-proof contract, but \textquotedblleft how much
provability\textquotedblright\ is really required to ensure that such
contracts exist? Theorem \ref{RPT_REINTERPRET} offers an answer.

\section{Related Literature}

\label{literature}

This paper contributes to the literature on implementation with evidence.
For early work in this area, we refer the reader to \cite{GL1986}, \cite%
{BW2007} and \cite{DS2008}. See also \cite{DS2007} for a study of partial
implementation with costly evidence and \cite{KT2012} for a survey of other
early works on implementation with evidence.

Our implementation exercise is most closely-related to \cite{BL2012} and 
\cite{KT2012}. \cite{BL2012} present two main results. First, they achieve
subgame-perfect Nash implementation using a perfect-information mechanism
with large off-the-equilibrium transfers (which achieve budget balance when
there are three or more agents). This result does not require normality or
integer games. Then, \cite{BL2012} achieve Nash implementation with three or
more agents by using small off-the-equilibrium transfers. This latter result
does require normality and integer games. In contrast, our Theorem \ref{main}
requires normality, works with two agents, and employs a direct revelation
mechanism to achieve (mixed-strategy) Nash implementation without integer
games but with large off-the-equilibrium transfers (which also achieve
budget balance when there are three or more agents). Neither Theorems 1 or 2
of \cite{BL2012} account for evidentiary cost which our Theorem \ref%
{hardcostly} deals with.\footnote{%
In their Theorems 4 and 5, \cite{BL2012} show that their results are robust
to a variety of preference and belief specifications. Our results are
consistent with these, in the sense that the designer does not need to know
anything about the preferences of the agents except for the uniform bound on
utilities, although we maintain the complete-information assumption
throughout.}

\cite{KT2012} studies the costly evidence setting. They use a canonical,
Maskin-style mechanism which relies on integer games and does not use
transfers. With three or more agents, their achieve pure-strategy Nash
implementation for every evidence-monotonic SCF.\footnote{%
Their Theorem 2 can be modified to account for mixed Nash equilibria as
well, using methods from \cite{KT2012Mixed} but this depends on integer
games as well.} Their evidence monotonicity notion allows for preference
variation and reduces to Maskin-monotonicity when all messages are cheap
talk. In contrast, Theorem \ref{main_soft} makes use of a direct revelation
mechanism without integer games to achieve pure-strategy implementation with
two or more agents. Further, to focus on evidentiary cost variation, we
adopt the notion of evidence monotonicity under constant preferences under
which our implementation result disregards (and therefore is also robust to)
preference variation.

Another related strand of literature is that of implementation with
preferences for honesty, wherein it is assumed that agents prefer to tell
the truth if they do not gain from lying. Preferences for honesty can be
viewed as a case of costly evidence in which all states have the same
evidence sets, but the least-cost evidence is distinct in each state. \cite%
{KTH2014} establish that with two or more agents, preferences for honesty
and a condition called separable punishments (which generalize
off-the-equilibrium transfers), any SCF\ can be implemented in a finite but
indirect mechanism without integer or modulo games. \cite{DS2011} also
establish that with preferences for honesty all social choice
correspondences which satisfy no veto power can be implemented by a
mechanism which uses integer games. Our Theorems \ref{hardcostly} and \ref%
{main_soft} apply to a general costly evidence structure beyond the specific
setting of preference for honesty and both invoke only direct revelation
mechanisms.\footnote{%
As \cite{KT2012} point out, as long as in any state at least one agent has a
preference for honesty, any SCF\ is evidence monotonic.}

\cite{MAM} also provides an implementation result using transfers that
allows for implementation of Maskin-monotonic social choice functions
without using integer games. A main focus of our exercise though is to
handle implementation with state-independent or consistent preferences by
making use of evidence. Our emphasis on evidence also allows for the novel
classification of lies which we propose and in turn allows for the new
approach to implementation which we present here. This is a feature
unavailable for implementation exercises which solely rely on preference
variation. Moreover, our treatment of the two-agent case extends to dealing
with renegotiation (wherein we provide a necessary and sufficient condition
for renegotiation-proof implementation), a pertinent issue when two-agent
implementation is used in contracting.

\section{Conclusion}

In this paper, we present full implementation results in settings with hard
and costly evidence. In the hard evidence setting, using a novel
classification of lies according to their refutability, we construct a
direct revelation mechanism which implements any measurable SCF with respect
to the evidence structure. Our mechanism invokes neither integer nor modulo
games, requires only two agents, accommodates evidentiary costs, and can
also be modified to account for limited solvency of the agents. Based on our
classification of lies, we also derive a necessary and sufficient condition
for the existence of bilateral renegotiation-proof contracts (\cite{MM99}).
In the costly evidence setting, we provide a mechanism that yields pure
strategy implementation in a two agent setting without using integer or
modulo games. The classification of lies approach is used here as well.

Our exercise leaves a number of open questions for future research. In
particular, \cite{BCS2020} pursue an extension of our exercise to an
incomplete-information setting with a commonly known state and uncertainty
among agents along the evidence dimension. Other directions include direct
implementation without transfers or structural assumptions which serve the
same role.

\appendix

\section{Appendix}

\label{appendix}

In this Appendix, we provide the details and proofs which are omitted from
the main body of the paper.

\subsection{\label{app:BB} Budget Balance}

Since $\tau ^{1}$ is budget balanced at the outset, Claims \ref{c_ORL_1} and %
\ref{c_ORL_2} are not affected by the modifications. Lemma \ref{l_tight}
continues to hold, because the only possible scenario under which providing
additional evidence may cause a loss to an agent $i$ (owing to the
redistribution of transfers) is when providing additional evidence supports
an agent $j$'s state claim $s_{j}$, and therefore causes losses from the
redistribution of $\tau _{j}^{2}$. However, for this to be the case, agent $%
k $ must then be supporting $s_{j}$. In this case, agent $i$ does not
actually receive any part of the redistributed revenues from $\tau _{j}^{2}$.

Claims \ref{c_NRL_2} and \ref{c_NRL_3} remain unaffected. If an agent $i$
presents a nonrefutable lie, we establish that all other agents still
present the tightest evidence. First, since $\tau ^{4}$ is only
redistributed to $i$, it does not affect other agent's incentives. Further,
an agent is only unwilling to support another agent (to avoid redistributive
losses from $\tau ^{2}$) if he was the only agent who did not support $s_{i}$
(so that in supporting $i$, he would switch off $\tau _{i}^{2}$). But, due
to the way that $\tau _{i}^{2}$ is redistributed, and the existence of a
third agent, he does not share in the redistribution of $\tau _{i}^{2}$ in
this case. Then, agent $i$ still faces a large fine (the redistributed
revenue from $\tau ^{4}$ is small) and chooses to deviate to $(s^{\ast
},E_{i}^{\ast }(s^{\ast }))$.

Claims \ref{c_SRL_2} through \ref{c_SRL_4} continue to hold as well. Since
agents are limited to presenting either the truth or self-refutable lies,
from the above argument, all evidence must still be tightest. If an agent $i$
plays a self-refutable lie which implies a smaller evidence set for $j$ than
the truth, then it is not supported, and agent $i$ prefers to deviate to the
truth as above. Therefore, all claims are consistent with other agents'
tightest evidence. The third transfer, $\tau ^{3}$ is a penalty for an agent 
$i$ for implying a different evidence set for himself than that implied by
others for him. His incentive for avoiding this penalty is not affected by
its redistribution to other agents. Therefore, with three or more agents,
this mechanism implements with budget balance.

\subsection{\label{app:small} Proof of Theorem \protect\ref{small}}

\subsubsection{Message Space}

The message space is augmented with $K+1$ additional claims of state, where
we call $K$ the number of rounds and it is chosen according to the upper
bound of allowable transfers. More precisely, the message space is as
follows: 
\begin{equation*}
m_{i}=(s_{i}^{0},E_{i},s_{i}^{1},s_{i}^{2},...,s_{i}^{K+1})\in M_{i}=S\times 
\mathcal{E}_{i}\times S\times ...\times S\text{ (K+1 times).}
\end{equation*}

\subsubsection{Outcome}

The outcome is specified as follows. \noindent Define $\rho ^{k}(m)$ (for $%
k=2,...K+1$) as follows:

\begin{equation*}
\rho ^{k}(m)=\left\{ 
\begin{tabular}{ll}
$f(s)$ & $\text{ if\thinspace\ }\exists s\text{ s.t. }|\{j:s_{j}^{k}=s\}|%
\geq I-1$ \\ 
$b$ & otherwise%
\end{tabular}%
\right.
\end{equation*}%
\noindent where $b$ is an arbitrary lottery over $A$. Then, the outcome of
message profile $m$, denoted by $\bar{g}(m)$ is defined as follows.

\begin{equation*}
\bar{g}(m)=\varepsilon \times g(s^{0},E)+\left( 1-\varepsilon \right) \times 
\frac{1}{K}\dsum\limits_{k=2}^{K+1}\rho ^{k}(m)
\end{equation*}%
\noindent where $\varepsilon >0$ is chosen to be small, and $g(\cdot )$ is
the outcome function of the base (unaugmented) mechanism.

\noindent That is, the outcome is a lottery combining the outcome of the
base mechanism using the zeroth and evidence reports, and an outcome
function defined for each round which is the outcome corresponding to a
state on which at least $I-1$ agents have agreed or (if there is no such
agreement), some random lottery \thinspace $b$.

\subsubsection{Transfers}

The mechanism multiplies all the transfers of the baseline mechanism by $%
\varepsilon $ and adds the following transfers.

\begin{equation*}
\tau _{i}^{5}\left( m\right) =\left\{ 
\begin{tabular}{l}
$-\alpha $ \\ 
$0$%
\end{tabular}%
\begin{tabular}{l}
if $s_{i}^{1}\nsim s_{i+1}^{0};$ \\ 
otherwise.%
\end{tabular}%
\right.
\end{equation*}

\noindent That is, agent $i$ receives a fine of $\alpha $ if their first
report is not identical to agent $i+1$'s zeroth report.

\begin{equation*}
\tau _{i}^{6}\left( m\right) =\left\{ 
\begin{tabular}{l}
$-\beta $ \\ 
$0$%
\end{tabular}%
\begin{tabular}{l}
if $\exists s\text{ s.t. }s=s_{i}^{1}\text{ }\forall i$ and $s_{i}^{k}\neq s$
and $s_{n}^{m}=s$ $\forall m,n<k;$ \\ 
otherwise.%
\end{tabular}%
\right.
\end{equation*}

\noindent That is, agent $i$ receives a fine of $\beta $ if their $k^{\text{%
th}}$ report is the first deviation from a unanimous first report.

\begin{equation*}
\tau _{i}^{7,k}\left( m\right) =\left\{ 
\begin{tabular}{l}
$-\gamma $ \\ 
$0$%
\end{tabular}%
\begin{tabular}{l}
if $\exists s$ s.t. $s_{i}^{k}\neq s$ and $s_{j}^{k}=s$ $\forall j\neq i;$
\\ 
otherwise.%
\end{tabular}%
\right.
\end{equation*}

\noindent That is, agent $i$ receives a fine of $\gamma $ if their $k^{\text{%
th}}$ report is the only deviation in round $k\,$from an otherwise unanimous
set of reports. This fine is applied to each round.

\subsubsection{Proof of Implementation}

In the following proof, we do not directly prove implementation for any
values of $\alpha ,\beta ,\gamma $ and $K$. Rather, we show that there exist
values of these parameters such that the mechanism implements and the
overall transfer to any agent can be bounded below an arbitrarily small
number which is greater than zero.

\begin{claim}
\label{c1st} In any equilibrium, for every agent $i$, $s_{i}^{0}$ is the
truth.
\end{claim}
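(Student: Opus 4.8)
The plan is to show that $s_i^0$ must coincide with the true state $s^\ast$ for every agent in any mixed-strategy Nash equilibrium $\sigma$ of the augmented game. The key observation is that the augmented mechanism of Theorem~\ref{small} is built on top of the base mechanism of Theorem~\ref{main}, with the base transfers merely scaled by $\varepsilon$ and the outcome using the zeroth and evidence reports only through the term $\varepsilon \times g(s^0, E)$. Since the additional transfers $\tau^5, \tau^6, \tau^{7,k}$ depend solely on the round reports $s_i^1, \ldots, s_i^{K+1}$ and \emph{not} on the zeroth reports or evidence, they do not interact with a deviation that changes only $(s_i^0, E_i)$. Therefore, holding the round reports fixed, the incentives governing $(s_i^0, E_i)$ are exactly those of the base mechanism, scaled uniformly by $\varepsilon > 0$.

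The first step is to make this decoupling precise: I would fix an arbitrary agent $i$ and an arbitrary message $m_i = (s_i^0, E_i, s_i^1, \ldots, s_i^{K+1})$ in the support of $\sigma_i$, and consider deviations $\tilde m_i$ that alter only the $(s_i^0, E_i)$ coordinates while leaving $(s_i^1, \ldots, s_i^{K+1})$ unchanged. Because $\tau^5, \tau^6, \tau^{7,k}$ and the round-outcome component $\frac{1-\varepsilon}{K}\sum_k \rho^k(m)$ are invariant under such a deviation, the change in agent $i$'s expected payoff equals $\varepsilon$ times the change in his payoff under the base mechanism. Consequently, the marginal distribution of $\sigma$ over the zeroth-report-and-evidence coordinates $(s^0, E)$ constitutes a Nash equilibrium of the base game $G(\mathcal M, v', s^\ast)$, where the induced utility profile $v'$ simply inherits the (bounded) preferences $v$ rescaled by $\varepsilon$. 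The uniform scaling by $\varepsilon$ preserves the equilibrium structure, since multiplying every payoff by a common positive constant leaves best responses unchanged.

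The second step is to invoke Theorem~\ref{main} directly. Since $f$ is measurable (this is the maintained hypothesis of Theorem~\ref{small}), the base direct-revelation mechanism Nash-implements $f$ for \emph{any} bounded utility profile. The scaled preferences remain bounded, so the conclusion of Theorem~\ref{main} applies to the projected equilibrium: every agent's zeroth report $s_i^0$ equals the true state $s^\ast$, and the base transfers vanish on the support. This immediately yields the claim.

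The main obstacle I anticipate is verifying that the decoupling is genuinely clean—specifically, confirming that none of the added transfers $\tau^5, \tau^6, \tau^{7,k}$ nor the round-outcome term create any dependence on $(s_i^0, E_i)$ that could distort the base incentives. A subtle point is $\tau^5_i$, which compares agent $i$'s \emph{first-round} report $s_i^1$ against agent $i{+}1$'s \emph{zeroth} report $s_{i+1}^0$; hence a deviation in agent $i$'s own zeroth report $s_i^0$ affects $\tau^5_{i-1}$ (the \emph{predecessor's} transfer) but not agent $i$'s own payoff, so it does not interfere with $i$'s own deviation calculus. Establishing this asymmetry carefully is what makes the projection argument valid. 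Once it is secured, the claim follows from Theorem~\ref{main} applied to the rescaled game without further ado.
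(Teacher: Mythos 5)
Your proposal is correct and follows essentially the same route as the paper's (much terser) proof: the zeroth reports and evidence enter agent $i$'s payoff only through the base-mechanism component, and since both the base transfers and the outcome stake are scaled by the same factor $\varepsilon$, the incentives over $(s_i^0,E_i)$ are exactly those of the base game, so truthfulness follows from Theorem \ref{main}; your explicit check that $\tau^5,\tau^6,\tau^{7,k}$ and the round-outcome term do not depend on agent $i$'s own $(s_i^0,E_i)$ is a welcome tightening of the paper's one-line argument. The only caveat, shared with the paper's own proof, is that you should invoke the truth-telling claims established \emph{inside} the proof of Theorem \ref{main} (Claims \ref{c_ORL_2}, \ref{c_NRL_3}, \ref{c_SRL_4}) rather than its stated conclusion, since $g(m)=f(s)$ alone gives only $f(s_1^0)=f(s^\ast)$ and says nothing about the other agents' zeroth reports.
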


\begin{proof}
It is clear that any agent's reports with index $0$ and the evidence
messages only affect their payoffs through the outcome. While all transfers
in the base mechanism are scaled by $\varepsilon $, the maximum utility
value of \ manipulating the outcome is also scaled by $\varepsilon $ (owing
to the randomization in the final outcome), so that in this mechanism, the
reports with index $0$ are indeed the truth.
\end{proof}

\begin{claim}
\label{c2st} There exist values of $\alpha ,\beta $ and $\gamma $ such that
in any equilibrium, for every agent $i$, $s_{i}^{1}$ is the truth.
\end{claim}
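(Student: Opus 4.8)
The plan is to lean on Claim \ref{c1st}, which already pins every zeroth report to the truth ($s_{j}^{0}=s^{\ast}$ with probability one for all $j$), and then to isolate the effect of the first-round report $s_{i}^{1}$ on agent $i$'s payoff. The crucial preliminary observation is that $s_{i}^{1}$ enters \emph{neither} the outcome $\bar{g}$ (which depends only on $(s^{0},E)$ and on the round reports $s_{j}^{k}$ with $k\geq 2$) \emph{nor} the $\varepsilon$-scaled base transfers (which depend only on $(s^{0},E)$) \emph{nor} the round transfers $\tau_{i}^{7,k}$ (which concern the rounds $k\geq 2$). Hence the only channels through which $s_{i}^{1}$ affects agent $i$'s expected payoff are $\tau_{i}^{5}$ and $\tau_{i}^{6}$. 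Moreover, since $s_{i+1}^{0}=s^{\ast}$ almost surely by Claim \ref{c1st}, the transfer $\tau_{i}^{5}$ collapses to a direct penalty on untruthful first reports: $\tau_{i}^{5}=-\alpha$ exactly when $s_{i}^{1}\nsim s^{\ast}$ and $\tau_{i}^{5}=0$ when $s_{i}^{1}\sim s^{\ast}$.

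With this in hand I would argue by contradiction. Suppose $\sigma_{i}$ places positive probability on a message with $s_{i}^{1}=s'\nsim s^{\ast}$, and consider the deviation that replaces $s'$ by $s^{\ast}$ while holding every other coordinate fixed. By the observation above, this deviation leaves the outcome, the base transfers, and $\tau_{i}^{7,k}$ unchanged, so the payoff change is captured entirely by $\tau_{i}^{5}$ and $\tau_{i}^{6}$. The change in $\tau_{i}^{5}$ is deterministically $+\alpha$ (the penalty is switched off), again using $s_{i+1}^{0}=s^{\ast}$ a.s. Since $\tau_{i}^{6}\in\{-\beta,0\}$ at every profile, the change in $\tau_{i}^{6}$ is at least $-\beta$ pointwise. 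Taking expectations over $\sigma_{-i}$, the net gain is at least $\alpha-\beta$, which is strictly positive once we require $\alpha>\beta$. This contradicts the optimality of reporting $s'$, so every first report must equal $s^{\ast}$; such parameter values clearly exist.

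The step I expect to be the main obstacle is controlling the interaction with $\tau_{i}^{6}$. Aligning the first report with the truth can \emph{create} a unanimous first report, which in turn exposes agent $i$ to the first-deviator penalty $\beta$ should his later-round reports disagree with $s^{\ast}$ --- a possibility I cannot yet exclude at this stage, since later-round truthfulness is only established in the subsequent claims. The remedy is the crude but sufficient bound $|\tau_{i}^{6}|\leq\beta$ combined with the ordering $\alpha>\beta$; because $s_{i}^{1}$ is irrelevant to the outcome, I do not need to relate $\alpha$ to $\varepsilon$ or to the utility bound here. Finally, I would note that $\alpha>\beta$ is compatible with both the smallness constraints on the transfers and the parameter requirements of the remaining claims, so that a single choice of $(\alpha,\beta,\gamma,K)$ can be made at the end to satisfy all of them simultaneously.
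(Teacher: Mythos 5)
Your proposal is correct and follows essentially the same route as the paper's own proof: fix the zeroth reports at the truth via Claim \ref{c1st}, note that $s_{i}^{1}$ affects neither the outcome nor the other transfers, and show the deviation to a truthful first report gains $\alpha$ from $\tau^{5}$ while losing at most $\beta$ from $\tau^{6}$, so $\alpha>\beta$ suffices. Your explicit handling of the worst case for $\tau^{6}$ is just a more careful statement of the paper's ``could lose up to $\beta$'' bound.
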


\begin{proof}
Suppose not. That is, $\exists i$ s.t. $s_{i}^{1}\neq s^{\ast }$. From Claim %
\ref{c1st}, we know that $s_{i}^{0}=s^{\ast }$ $\forall i$. Then, consider a
deviation $s_{i}^{1}\leftarrow s^{\ast }$. Notice that $s_{i}^{1}$ does not
affect the outcome. Then, the agent gains at least $\alpha $ from $\tau ^{5}$%
, and could lose up to $\beta $ to $\tau ^{6}$. There is no effect from $%
\tau ^{7}$. If $\alpha >\beta ,$ then there is a profitable deviation.
\end{proof}

\begin{claim}
There exists values of $\alpha ,\beta ,\gamma $ and $K$ such that in any
equilibrium, for every agent $i$, $s_{i}^{k}$, $k=2,...K+1$ is the truth.
\end{claim}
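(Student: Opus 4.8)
The plan is to prove the final claim by an inductive (round-by-round) argument, showing that if all agents report the truth in rounds $1,\ldots,k-1$, then they must also do so in round $k$. The base case is supplied by Claim \ref{c2st}, which establishes that $s_i^1 = s^\ast$ for every agent $i$ in any equilibrium. The key structural fact is that $\tau^6$ penalizes only the \emph{first} deviation from a unanimous first-round report, while $\tau^{7,k}$ penalizes, in each round $k$, an agent who is the lone dissenter. Since rounds have been decoupled via the lottery outcome $\bar g$ (so that manipulating $\rho^k$ affects an agent's payoff by at most a $(1-\varepsilon)/K$-scaled amount), the round-$k$ incentives can be made to depend almost entirely on the transfers rather than on the outcome.

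First I would fix an equilibrium and suppose, toward a contradiction, that round $k$ (the earliest round with a non-truthful report) contains an agent $i$ with $s_i^k \neq s^\ast$, given that all reports in rounds $1,\ldots,k-1$ are truthful by the inductive hypothesis. I would then consider the deviation $s_i^k \leftarrow s^\ast$ and tabulate the payoff change across the relevant transfers. The deviation can only \emph{reduce} $i$'s exposure to $\tau^6$ (by removing him as a candidate first-deviator) and to $\tau^{7,k}$ (by aligning him with any emerging consensus on the truth, or at worst leaving him no worse off), while the outcome effect on $\rho^k$ is bounded by $(1-\varepsilon)/K$ times the utility span. The goal is to choose $\gamma$ large relative to this outcome term — i.e. $\gamma > (1-\varepsilon)/K$ (recall utilities lie in $[0,1]$) — so that the transfer gain dominates and the deviation is strictly profitable, contradicting equilibrium.

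The main obstacle I anticipate is handling the logic of $\tau^6$ carefully across rounds: because $\tau^6$ fires only at the \emph{first} round of disagreement, I must argue that no agent wants to be that first deviator, and this requires a delicate ordering of the parameters. Specifically I expect to need $\alpha > \beta$ (already used in Claim \ref{c2st}), together with $\beta$ chosen large enough to deter being the first deviator in any round but small enough, along with $\gamma$ and the per-round outcome weight $(1-\varepsilon)/K$, that the \emph{total} off-equilibrium transfer to any agent stays below the target bound $\varepsilon$. The tension is that deterrence wants large fines while the transfer bound wants small fines; the resolution is that $K$ can be taken large, shrinking the per-round outcome incentive $(1-\varepsilon)/K$ that the fines must dominate, so that $\beta$ and $\gamma$ can themselves be taken small while still dominating the per-round outcome manipulation. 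I would close by verifying that with these parameter choices the unique equilibrium has every $s_i^k = s^\ast$, whence $\rho^k(m) = f(s^\ast)$ in each round, $\bar g(m) = f(s^\ast)$, and all transfers vanish on the equilibrium path, with the realized off-equilibrium transfer bound computed explicitly to confirm it lies below the prescribed $\varepsilon$.
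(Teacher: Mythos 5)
Your overall strategy---induction on rounds, with the putative first deviant switching his round-$k$ report to the truth---is the same as the paper's, and your base case and the observation that the per-round outcome stake is only $(1-\varepsilon)/K$ are both right. But there is a genuine gap in the central step: your assertion that the deviation $s_i^k \leftarrow s^\ast$ ``can only reduce $i$'s exposure to $\tau^{7,k}$ \ldots or at worst leave him no worse off'' is false. If the other $I-1$ agents unanimously report a common lie $s'$ in round $k$ (which is a configuration you must rule out, not assume away), then switching to $s^\ast$ turns agent $i$ into the \emph{lone} dissenter in round $k$ and triggers a fresh fine of $\gamma$ from $\tau^{7,k}$. In that case the profitable-deviation argument cannot run through $\tau^{7}$ at all; it must run through $\tau^{6}$: by reporting the truth in round $k$, agent $i$ is no longer among the first deviants from the unanimous round-1 report and so saves $\beta$, and the binding inequality is $\beta > \frac{1}{K} + \gamma$, not your proposed $\gamma > (1-\varepsilon)/K$. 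The paper's proof splits exactly along this line: when the deviant has company in round $k$, the gain is $\beta$ and must dominate the outcome term \emph{plus} the possible new $\gamma$ fine; when the deviant is alone, $\rho^k$ is already $f(s^\ast)$ (since $I-1$ agents agree on the truth), the outcome is unaffected by the switch, and only $\gamma > 0$ is needed---there the gain from $\tau^6$ is \emph{not} guaranteed, since the agent may simply become the first deviant in a later round.

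Your closing discussion of the parameter tension (large $K$ shrinks the per-round outcome stake so the fines can be small) is correct in spirit and matches the paper's final claim, but without the case split above you have neither identified the correct system of inequalities ($\gamma>0$, $\beta>\frac{1}{K}+\gamma$, $\alpha>\beta$) nor justified that a profitable deviation exists in every off-path configuration. Also, a small point of bookkeeping: the target bound on total transfers is a separate parameter ($\bar\delta$ in the paper), not the lottery weight $\varepsilon$ used in $\bar g$; conflating the two obscures which quantities are being sent to zero.
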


\begin{proof}
The proof proceeds by induction. First we prove that $s_{i}^{2}=s^{\ast
}~\forall i$. Suppose not. Then, there is an agent $i$ for whom $%
s_{i}^{2}\neq s^{\ast }$ and $s_{j}^{2}=s^{\ast }$ $\forall j<i$. That is,
agent $i$ is the first deviant from the unanimous report of $s^{\ast }$ in $%
s^{1}$. Now, there are two cases:

\noindent \textbf{Case 1: }$\exists j\neq i$ s.t. $s_{j}^{2}\neq s^{\ast }$.
Then, agent $i$ is the first deviant, but there are other deviants in $s^{2}$%
. Consider a deviation to $s_{i}^{2}\leftarrow s^{\ast }$. This deviation
could cause him a loss of up to $\frac{1}{K}$ from the outcome, but yields a
profit of $\beta $ due to $\tau ^{6}$. In case the agents had unanimously
agreed on a state in $s^{2}$, this could also lead to a loss of at most $%
\gamma $ from $\tau ^{7}$. Thus, as long as $\beta >\frac{1}{K}+\gamma $,
this is a profitable deviation.

\noindent \textbf{Case 2: }$\NEG{\exists}j\neq i$ s.t. $s_{j}^{2}\neq
s^{\ast }$. That is, agent $i$ is the first and only deviant in $s^{2}$.
Consider again the deviation $s_{i}^{2}\leftarrow s^{\ast }$. There is no
change to the outcome (as agent $i$ was the only agent who was disagreeing
with the otherwise unanimous true report), and even though agent $i$ could
be the first deviant in further rounds, so that this deviation does not
necessarily yield any advantage from $\tau ^{6}$, it does yield a profit of $%
\gamma $ owing to $\tau ^{7}$. Thus, we need $\gamma >0$ to make this a
profitable deviation.

Thus, $s_{i}^{2}=s^{\ast }\forall i$. Clearly then, this argument can be
used inductively for the following rounds to establish that $%
s_{i}^{k}=s^{\ast }\forall i,k$.
\end{proof}

\begin{claim}
For any $\bar{\delta}>0$, the overall transfer to an agent can be bounded
below $\bar{\delta}$.
\end{claim}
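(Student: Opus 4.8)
The plan is to select the remaining free parameters $(\varepsilon,\alpha,\beta,\gamma,K)$ so that they simultaneously (i) satisfy the incentive inequalities extracted from the preceding claims, namely $\alpha>\beta$ (from Claim \ref{c2st}), $\beta>\frac{1}{K}+\gamma$, and $\gamma>0$, and (ii) make the worst-case transfer that any agent can receive across all message profiles fall strictly below the target $\bar{\delta}$. The first step is therefore to write down a uniform upper bound on the magnitude of the total transfer $\varepsilon\tau^{i}+\tau_{i}^{5}+\tau_{i}^{6}+\sum_{k=2}^{K+1}\tau_{i}^{7,k}$ to agent $i$.

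Bounding the four contributions in turn: the baseline transfer $\tau^{i}=\tau_{i}^{1}+\tau_{i}^{2}+\tau_{i}^{3}+\tau_{i}^{4}$ is bounded in absolute value by a constant $C=C(I)$ depending only on the number of agents (indeed $|\tau_{i}^{1}|\leq (I-1)(2I+1)$, $|\tau_{i}^{2}|\leq I$, $|\tau_{i}^{3}|\leq I-1$, and $|\tau_{i}^{4}|\leq 1$), so after scaling it contributes at most $\varepsilon C$; the penalties $\tau_{i}^{5}$ and $\tau_{i}^{6}$ are each levied at most once, contributing at most $\alpha$ and $\beta$ respectively; and the per-round penalty $\tau_{i}^{7,k}$ can fire in each of the $K$ rounds $k=2,\dots,K+1$, so its accumulated magnitude is at most $K\gamma$. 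Hence the total transfer to any agent is bounded in magnitude by $\varepsilon C+\alpha+\beta+K\gamma$.

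It then remains to choose the parameters so as to keep this quantity below $\bar{\delta}$ without violating the incentive inequalities. I would set $\gamma=1/K^{2}$, $\beta=2/K$, and $\alpha=3/K$; a direct check gives $\alpha>\beta$, then $\beta=2/K>1/K+1/K^{2}=\frac{1}{K}+\gamma$ for every $K>1$, and $\gamma>0$, so every preceding claim remains in force. With this choice $\alpha+\beta+K\gamma=6/K\to 0$ as $K\to\infty$, so taking $K>12/\bar{\delta}$ forces $\alpha+\beta+K\gamma<\bar{\delta}/2$, and then choosing $\varepsilon<\bar{\delta}/(2C)$ forces $\varepsilon C<\bar{\delta}/2$. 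The total transfer magnitude is then below $\bar{\delta}$, which together with the earlier claims (all reports truthful, hence zero transfers and outcome $f(s^{\ast})$ on the equilibrium path) completes the proof of Theorem \ref{small}.

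The one delicate point, and the main obstacle, is the accumulation of $\tau^{7}$: since this fine is imposed in every round, its total magnitude scales like $K\gamma$ rather than $\gamma$, whereas the requirement $\beta>\frac{1}{K}+\gamma$ prevents $\beta$ from shrinking faster than order $1/K$. The resolution exploited above is that $\alpha$ and $\beta$ may be taken of order $1/K$, which already vanishes as the number of rounds grows, while $\gamma$ is taken of strictly smaller order $o(1/K)$ (here $1/K^{2}$) so that even $K\gamma$ vanishes. In this way all three fines are driven to zero at once, and the free parameter $\varepsilon$ can be shrunk independently to absorb the scaled baseline term.
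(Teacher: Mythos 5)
Your proposal is correct and follows essentially the same route as the paper: identify the three incentive inequalities $\gamma>0$, $\beta>\tfrac{1}{K}+\gamma$, $\alpha>\beta$ from the preceding claims, bound the worst-case transfer by $\alpha+\beta+K\gamma$, and choose the parameters with $\gamma=o(1/K)$ so that the accumulated round penalties vanish as $K$ grows. Your only (minor) refinement is to explicitly account for the $\varepsilon$-scaled baseline transfers via the extra term $\varepsilon C$, which the paper's proof leaves implicit.
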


\begin{proof}
Consider any $0<\delta <\bar{\delta}$. From the previous claims, the
inequalities required to be satisfied are $\gamma >0$, $\beta >\frac{1}{K}%
+\gamma $ and $\alpha >\beta $, while the largest possible transfer is $%
\alpha +\beta +K\gamma $. Consider the following choices: $\gamma =\frac{%
\delta }{3K}$, $\beta =\frac{1}{K}+\frac{\bar{\delta}}{3K}$ and $\alpha =%
\frac{\delta }{3}$. It is clear that $K$ can be chosen to satisfy the
required inequalities, and $\alpha +\beta +K\gamma <\bar{\delta}$.
\end{proof}

\subsection{\label{app:hardcostly} Proof of Theorem \protect\ref{hardcostly}}

The proof is by construction of an implementing mechanism. Below, we present
the mechanism and a formal proof of implementation. But first, we formally
state the following property of the cost function%
\begin{equation*}
\max_{s\in S}\max_{i\in \mathcal{I}}\max_{E_{i}\in \mathcal{E}%
_{i}}c_{i}(E_{i},s)<C
\end{equation*}

which states that across all states, all agents and all articles of
evidence, the cost of evidence is bounded by $C$, a positive number.

\subsubsection{\textbf{Message Space and Outcome}}

The message space and outcome function remain unaltered from the original
implementing mechanism.

\subsubsection{\textbf{Transfers}}

Structurally, the transfers are similar (with the addition of a transfer
penalizing disagreement), however, the amounts are not fixed a priori,
rather we show the existence of transfers so that the mechanism implements.
We have then, the following transfers:

\begin{equation*}
\tau _{ij}^{1}\left( m\right) =\left\{ 
\begin{tabular}{ll}
$-T_{1}$, & if $s_{i}\in E_{j}$ and $s_{j}\not\in E_{i}$; \\ 
$T_{1}$, & if $s_{i}\not\in E_{j}$ and $s_{j}\in E_{i}$; \\ 
$0$, & otherwise.%
\end{tabular}%
\right.
\end{equation*}

\begin{equation*}
\tau _{i}^{2}\left( m\right) =\left\{ 
\begin{tabular}{ll}
$-T_{2}$, & if $\exists j\in \mathcal{I}$ s.t. $E_{j}\not\subseteq
E_{j}^{\ast }(s_{i})$; \\ 
$0\,$, & otherwise.%
\end{tabular}%
\right.
\end{equation*}

\begin{equation*}
\tau _{ij}^{3}\left( m\right) =\left\{ 
\begin{tabular}{ll}
$-T_{3}$, & if $E_{i}^{\ast }(s_{i})\not=E_{i}^{\ast }(s_{j})$; \\ 
$0$, & $\text{otherwise.}$%
\end{tabular}%
\right.
\end{equation*}

\begin{equation*}
\tau _{i}^{4}\left( m\right) =\left\{ 
\begin{tabular}{ll}
$-\frac{T_{4}}{|S|}|E_{i}|$, & if $E_{j}\not\subseteq E_{j}^{\ast
}(s_{j^{\prime }})$ for some $j,j^{\prime }\in \mathcal{I}$; \\ 
$0$, & otherwise.%
\end{tabular}%
\right.
\end{equation*}

The overall transfer to agent $i$ is given (as before) by:

\begin{equation*}
\tau ^{i}=\sum_{j\neq i}\tau _{ij}^{1}+\tau _{i}^{2}+\sum_{j\neq i}\tau
_{ij}^{3}+\tau _{i}^{4}
\end{equation*}

\subsubsection{Proof of Implementation}

In what follows, we assume that the true state is $s^{\ast }$. We pick $%
T_{1} $, $T_{2}$, $T_{3}$, and $T_{4}$ so that the following inequalities
are satisfied:%
\begin{equation}
T_{1}>\frac{C|S|}{\varepsilon }  \label{A}
\end{equation}%
\begin{equation}
T_{1}\geq 1+T_{2}+(I-1)T_{3}+T_{4}  \label{C}
\end{equation}

\begin{equation}
T_{2}\geq 1+(I-1)T_{3}  \label{H}
\end{equation}%
\begin{equation}
T_{3}\geq 1  \label{J}
\end{equation}%
\begin{equation}
T_{4}>\frac{C|S|}{1-\varepsilon }  \label{G}
\end{equation}

It is immediate that this system of inequalities has a feasible solution.
For instance, first choose $\varepsilon \in (0,1)$, followed by setting $%
T_{4}=\frac{C|S|}{1-\varepsilon }+\varepsilon $. Clearly, Inequality (\ref{G}%
) is satisfied. Second, set $T_{3}$, $T_{2}$, and $T_{1}$ in order so that
Inequalities (\ref{J}), (\ref{H}), and (\ref{C}) are satisfied. Before we
proceed to present the proof, we first provide a sketch to outline the
augments.

First, we can prevent the agents from reporting other-refutable lies with a
probability $\varepsilon $ or more, as the reward for refutation, $T_{1}$,
can be made large enough that doing so with probability $\varepsilon $ (or
more) guarantees refutation by other agents. This guaranteed by Inequality (%
\ref{C}) and established in Claims \ref{cc_ORL_1}--\ref{cc_ORL_2}. With
this, we are able to establish that all agents present their tightest
evidence (Claim \ref{cc_SRL_1}). This is achieved by choosing $T_{4}$ high
enough so that the necessity of supporting the other agent's state claims
overrides the evidence cost. Notice that owing to Observation 1, every state
report except an other-refutable lie requires the presentation of tightest
evidence to support. Second, agents do not present non-refutable lies,
because of the penalty from $\tau ^{2}$, which is easily avoided by
deviating to the truth (since all evidence is the tightest). This is
outlined in Claims \ref{cc_NRL_2} and \ref{cc_NRL_3}. Third, we establish
that any self-refutable lies which are presented must be consistent with the
tightest evidence of all other agents (Claim \ref{cc_SRL_2}) and thereby
eliminate the possibility that any self-refutable lies are presented in
equilibrium (Claim \ref{cc_SRL_3}). \ From the preceding logic, it is clear
that if agents present their tightest evidence, then no agent presents
other-refutable lies, and thus implementation obtains.

We now present below a formal proof of implementation using the mechanism
stated above. In what follows, we denote the true state by $s^{\ast }$. Fix
an arbitrary mixed-strategy Nash equilibrium $\sigma $.

\paragraph{Bounding the Probability of Other-Refutable Lies}

\begin{claim}
\label{cc_ORL_1}If agent $i$ reports with probability at least $\frac{%
\varepsilon }{|S|}$ a state claim $s_{i}$ such that agent $j\neq i$ has an
article of evidence which refutes $s_{i}$, then agent $j$ must refute $s_{i}$
with probability $1$, i.e., $E_{j}$ refutes $s_{i}$ for every $%
m_{j}=(s_{j},E_{j})$ on the support of $\sigma _{j}$.
\end{claim}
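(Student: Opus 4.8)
The plan is to follow the logic of Claim \ref{c_ORL_1} from the hard-evidence setting, but with one essential modification: Lemma \ref{l_tight} (``tightening evidence never hurts'') no longer holds verbatim once evidence is costly, so I must re-verify by hand that tightening weakly improves every transfer, and then arrange for the probability-weighted refutation reward $T_1$ to dominate the worst-case evidentiary cost $C$ via Inequality (\ref{A}). I would argue by contradiction, supposing agent $j$ puts positive probability on some message $m_j=(s_j,E_j)$ whose evidence fails to refute $s_i$, i.e. $s_i\in E_j$, and exhibiting a strictly profitable deviation.

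The deviation I would use is $\tilde m_j=(s_j,E_j^{\ast}(s^{\ast}))$, which only swaps the evidence for the tightest evidence at the true state. Since $j$ can present $E_j$ only if it is available, $E_j\in\mathcal{E}_j(s^{\ast})$, so $E_j^{\ast}(s^{\ast})\subseteq E_j$; normality makes $E_j^{\ast}(s^{\ast})$ itself presentable, and because $j$ can refute $s_i$ at $s^{\ast}$ we have $s_i\notin E_j^{\ast}(s^{\ast})$, so the new evidence does refute $s_i$.

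Next I would track each component of $j$'s payoff under this deviation, holding opponents' mixed strategies fixed. The outcome $g(m)=f(s_1)$ is untouched because only evidence changes. For $\tau^{1}$, the weakly tighter evidence $E_j^{\ast}(s^{\ast})\subseteq E_j$ can only add refutations, so each pairwise term $\tau^1_{jk}$ weakly improves; moreover, whenever $i$ plays $s_i$ — an event of probability at least $\varepsilon/|S|$ — the term $\tau^1_{ji}$ strictly improves by $T_1$, since moving from not-refuting to refuting either earns the reward or erases the penalty. For $\tau^2$, $\tau^3$, $\tau^4$, tightening $j$'s own evidence only relaxes penalties: it makes $j$ weakly more able to support his own claim ($\tau^2$), leaves the state-claim comparison unchanged ($\tau^3$), and both weakly deactivates the cardinality fine and strictly shrinks its magnitude $|E_j|$ ($\tau^4$). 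Hence every transfer contributes nonnegatively, and $\tau^1$ alone contributes at least $(\varepsilon/|S|)T_1$ in expectation; in particular no per-realization comparison (and hence not Inequality (\ref{C})) is needed for this claim.

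The one genuinely negative term is the evidentiary cost, a deterministic change $c_j(E_j^{\ast}(s^{\ast}),s^{\ast})-c_j(E_j,s^{\ast})$ that is bounded above by $C$. Thus the expected gain from the deviation is at least $(\varepsilon/|S|)T_1-C$, which is strictly positive exactly by Inequality (\ref{A}), $T_1>C|S|/\varepsilon$. This contradicts $m_j$ being a best response, so every message in $j$'s support refutes $s_i$, i.e. $j$ refutes with probability one. I expect the main obstacle to be precisely the bookkeeping in the third paragraph: because costly evidence breaks Lemma \ref{l_tight}, I must re-establish the ``tightening never hurts'' property separately for $\tau^2$, $\tau^3$, and especially $\tau^4$ (whose activation is a global condition over all agents), and then be careful to weigh the sure cost loss against the correctly probability-discounted $\tau^1$ reward rather than against $T_1$ itself — which is why the threshold in the statement is $\varepsilon/|S|$ and the governing inequality is (\ref{A}).
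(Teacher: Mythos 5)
Your proposal is correct and follows essentially the same route as the paper's (much terser) proof: deviate to the tightest evidence $E_j^{\ast}(s^{\ast})$, collect at least $(\varepsilon/|S|)T_1$ in expectation from $\tau^1$ while verifying the other transfers cannot worsen, and absorb the at-most-$C$ cost change via Inequality (\ref{A}). The extra bookkeeping you do for $\tau^2$, $\tau^3$, $\tau^4$ is exactly what the paper leaves implicit, and your observation that Inequality (\ref{C}) is not needed here matches the paper, which invokes only (\ref{A}) for this claim.
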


\begin{proof}
Presenting this evidence nets agent $j$ a reward of at least $\frac{%
\varepsilon }{|S|}T_{1}$, and costs at most $C$. Thus, if $T_{1}>\frac{C|S|}{%
\varepsilon }$, then this is a profitable deviation. This is a consequence
of Inequality (\ref{A}).
\end{proof}

\begin{claim}
\label{cc_ORL_2}Each agent reports other-refutable lies with a total
probability less than $\varepsilon $.
\end{claim}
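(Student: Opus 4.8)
The plan is to reduce the total-probability bound to a bound on each individual state claim and then sum. For each other-refutable lie $s' \in ORL_i(s^*)$, write $P_i(s') = \sum_{E} \sigma_i(s',E)$ for the total probability with which agent $i$ submits the claim $s'$. I would show that $P_i(s') < \varepsilon/|S|$ for every $s' \in ORL_i(s^*)$; since there are at most $|S|-1$ states distinct from $s^*$, summing yields $\sum_{s' \in ORL_i(s^*)} P_i(s') < (|S|-1)\cdot \varepsilon/|S| < \varepsilon$, which is the claim.

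For the per-lie bound I argue by contradiction. Suppose $P_i(s') \geq \varepsilon/|S|$ for some $s' \in ORL_i(s^*)$. Being other-refutable, $s'$ can be refuted at $s^*$ by some agent $j \neq i$, so Claim \ref{cc_ORL_1} guarantees that $j$ refutes $s'$ with probability one, i.e. $s' \notin E_j$ for every $E_j$ in the support of $\sigma_j$. Now fix any message $(s',E_i)$ in the support of $\sigma_i$ and consider the deviation that replaces the claim $s'$ by the truth $s^*$ while holding the evidence $E_i$ fixed. Holding $E_i$ fixed is essential: the evidentiary cost $c_i(E_i,s^*)$ depends only on the evidence presented and the true state, not on the reported claim, so the deviation changes no cost and the bound $C$ --- which is absent from Inequality (\ref{C}) --- never enters the comparison.

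I would then tally the payoff change term by term. Because $j$ refutes $s'$ with probability one while the truth is irrefutable by (e1), the deviation moves agent $i$ from being refuted by $j$ to being irrefutable, yielding a gain of exactly $T_1$ from the pair $\tau^1_{ij}$ and a weakly positive change from every other pair $\tau^1_{ik}$; hence $\tau^1$ rises by at least $T_1$ in every realization of $\sigma_{-i}$. Against this, the outcome term changes by more than $-1$ (payoffs lie in $[0,1]$ and $g$ depends only on agent $1$'s claim), while $\tau^2$, $\tau^3$, and $\tau^4$ fall by at most $T_2$, $(I-1)T_3$, and $T_4$ respectively. Inequality (\ref{C}), namely $T_1 \geq 1 + T_2 + (I-1)T_3 + T_4$, then renders the deviation strictly profitable, contradicting the optimality of $(s',E_i)$ under $\sigma$. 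Thus $\sigma_i(s',E_i)=0$ for every such message, forcing $P_i(s')=0$ and contradicting $P_i(s') \geq \varepsilon/|S|$.

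The delicate point is the treatment of $\tau^1$ under mixing: the argument goes through only because Claim \ref{cc_ORL_1} delivers refutation with probability one, so that the dominating gain $T_1$ is secured realization by realization and a crude pointwise bound on the remaining transfers suffices, with no expectation over $\sigma_{-i}$ to compute. The other thing to verify is that freezing $E_i$ neither forfeits refutation rewards against the agents $k \neq j$ (it does not, since $i$'s power to refute $k$ is independent of $i$'s own claim) nor reactivates a cost term; these two facts are precisely what let the cost-free Inequality (\ref{C}), rather than the cost-laden (\ref{A}) used in Claim \ref{cc_ORL_1}, govern the comparison.
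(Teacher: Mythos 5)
Your proof is correct and follows essentially the same route as the paper: invoke Claim \ref{cc_ORL_1} once a refutable state claim carries probability at least $\varepsilon/|S|$, then show the deviation to the truth (evidence held fixed) gains $T_1$ against losses bounded by $1+T_2+(I-1)T_3+T_4$, which Inequality (\ref{C}) makes profitable. The only difference is cosmetic — you bound each lie's probability and sum, whereas the paper applies the pigeonhole in the contrapositive direction — and your per-state-claim bookkeeping is, if anything, slightly more careful than the paper's.
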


\begin{proof}
Suppose not. Then, there is an agent $i$ who reports a message $m_{i}$ with
a probability $\frac{\varepsilon }{|S|}$ or more such that agent $j\neq i$
has an article of evidence $E_{j}$ in $\mathcal{E}_{j}(s^{\ast })$ which
refutes $s_{i}$. From Claim \ref{cc_ORL_1}, agent $j$ presents $E_{j}$ with
probability 1. Then, the following table shows agent $i$'s payoff changes
from switching to the truth:

\begin{equation*}
\begin{tabular}{|l|l|l|l|l|}
\hline
$g$ & $\tau _{i}^{1}$ & $\tau _{i}^{2}$ & $\tau _{i}^{3}$ & $\tau _{i}^{4}$
\\ \hline
$>-1$ & $\geq T_{1}$ & $\geq -T_{2}$ & $\geq -(I-1)T_{3}$ & $\geq -T_{4}$ \\ 
\hline
\end{tabular}%
\end{equation*}

\noindent Since $\varepsilon <1$, it follows from Inequality (\ref{C}) that $%
T_{1}\geq 1+T_{2}+(I-1)T_{3}+T_{4}$. Hence, this constitutes a profitable
deviation.
\end{proof}

\begin{claim}
\label{cc_ORL_3}If all other agents present the tightest evidence with
probability one, then no agent reports an other-refutable lie.
\end{claim}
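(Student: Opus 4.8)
The plan is to argue by contradiction, reusing the deviation from the proof of Claim~\ref{cc_ORL_2} but replacing its probabilistic refutation step with a deterministic one that the hypothesis now supplies. Fix an agent $i$, suppose every other agent $k\neq i$ presents $E_k=E_k^{\ast}(s^{\ast})$ with probability one, and suppose toward a contradiction that $i$ reports some $(s_i,E_i)$ with positive probability in which $s_i$ is an other-refutable lie. By the definition of $ORL_i(s^{\ast})$ there is an agent $j\neq i$ and an article $E\in\mathcal{E}_j(s^{\ast})$ with $s_i\notin E$; since $E_j^{\ast}(s^{\ast})\subseteq E$, this forces $s_i\notin E_j^{\ast}(s^{\ast})$. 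Hence under the hypothesis agent $j$ refutes $s_i$ with probability one, which is exactly what Claim~\ref{cc_ORL_1} delivered only probabilistically in the setting of Claim~\ref{cc_ORL_2}.

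I would then exhibit the profitable deviation in which $i$ changes only his state claim, from $s_i$ to $s^{\ast}$, holding the evidence message $E_i$ fixed. Keeping $E_i$ fixed is the crucial choice: it makes the deviation cost-neutral, so the evidentiary cost bound $C$ never enters the comparison and I can reuse the same $T_1,\dots,T_4$ satisfying (\ref{A})--(\ref{G}). With $E_i$ fixed I reproduce the worst-case table of Claim~\ref{cc_ORL_2}: the outcome term changes by more than $-1$ (strictly, since $v_i$ takes values in $[0,1]$), while $\tau_i^{2}\geq -T_2$, $\tau_i^{3}\geq -(I-1)T_3$, and $\tau_i^{4}\geq -T_4$ are the largest possible swings of those rules when only the state claim moves.

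The substance is the $\tau^{1}$ term, where the deviation should gain at least $T_1$. For the refuting agent $j$, before the deviation $s_i$ is refuted ($s_i\notin E_j$) whereas afterward $s^{\ast}\in E_j$ by property (e1) of Definition~\ref{e1e2}; a case split on whether $s_j\in E_i$ shows the pairwise term $\tau_{ij}^{1}$ improves by exactly $T_1$. For every other opponent $k$, using $s^{\ast}\in E_k$ (again (e1)) together with the fact that $E_i$ is unchanged, the same case split shows $\tau_{ik}^{1}$ is weakly larger after the deviation; that is, switching one's own claim to the truth can only improve each incidental pairwise refutation transfer. Summing over opponents gives a $\tau^{1}$ gain of at least $T_1$, and combining all rows with (\ref{C}), namely $T_1\geq 1+T_2+(I-1)T_3+T_4$, makes the net change strictly positive, contradicting the optimality of $(s_i,E_i)$. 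Since $i$ was arbitrary, no agent reports an other-refutable lie.

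I expect the main obstacle to be the bookkeeping in the $\tau^{1}$ step: one must check that moving $i$'s own claim to the truth never forfeits a refutation reward, nor triggers a refutation penalty, against some third agent $k$ whose claim $s_k$ remains a lie. The clean resolution is that the truth is unrefutable ($s^{\ast}\in E_k$ for every article any agent presents at $s^{\ast}$) and that $E_i$ is left untouched, so the only coordinate of each pairwise comparison that moves is $i$'s own survival, which can only help. Once this monotonicity of $\tau^{1}$ in the deviation is established, the remaining rows are routine worst-case bounds and the conclusion follows from the very inequality used for Claim~\ref{cc_ORL_2}.
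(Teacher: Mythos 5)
Your proposal is correct and follows essentially the same route as the paper's proof: under the hypothesis the refuting agent's tightest evidence $E_j^{\ast}(s^{\ast})$ necessarily excludes $s_i$, so the deviation that replaces $s_i$ by $s^{\ast}$ while holding $E_i$ fixed gains at least $T_1$ from $\tau^1$ and loses at most $1+T_2+(I-1)T_3+T_4$ elsewhere, which Inequality (\ref{C}) makes strictly profitable. Your extra bookkeeping on the pairwise $\tau^1$ terms and the cost-neutrality of the deviation only fills in details the paper leaves implicit.
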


\begin{proof}
Suppose not. Then, there is an agent $i$ who reports a message $m_{i}$ such
that agent $j\neq i$ has an article of evidence $E_{j}$ in $\mathcal{E}%
_{j}(s^{\ast })$ which refutes $s_{i}$, while agent $j$ presents this
article of evidence with probability one. Then, the following table shows
agent $i$'s payoff changes from switching to the truth:

\begin{equation*}
\begin{tabular}{|l|l|l|l|l|}
\hline
$g$ & $\tau _{i}^{1}$ & $\tau _{i}^{2}$ & $\tau _{i}^{3}$ & $\tau _{i}^{4}$
\\ \hline
$>-1$ & $\geq T_{1}$ & $\geq -T_{2}$ & $\geq -(I-1)T_{3}$ & $\geq -T_{4}$ \\ 
\hline
\end{tabular}%
\end{equation*}

\noindent Inequality (\ref{C}) implies that this is a profitable deviation.
\end{proof}

\paragraph{Bounding the Probability of Nonrefutable Lies}

\begin{claim}
\label{cc_SRL_1}Every agent presents their tightest evidence with
probability one.
\end{claim}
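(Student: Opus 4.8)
The plan is to argue by contradiction. Suppose some agent $i$ places positive probability on a message $m_i=(s_i,E_i)$ with $E_i\neq E_i^{\ast}(s^{\ast})$. Since $E_i^{\ast}(s^{\ast})=\bigcap_{E\in\mathcal{E}_i(s^{\ast})}E$ is contained in every available article, we have $E_i^{\ast}(s^{\ast})\subsetneq E_i$ and hence $|E_i|\geq |E_i^{\ast}(s^{\ast})|+1$. I would then show that the deviation to $\tilde m_i=(s_i,E_i^{\ast}(s^{\ast}))$ -- tightening the evidence while holding the state claim fixed -- is strictly profitable, contradicting optimality of $m_i$.

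The crux is to establish that, conditional on this deviation, the cardinality transfer $\tau^{4}$ is active with probability at least $1-\varepsilon$. For this I would combine Claim \ref{cc_ORL_2} with Observation \ref{c_NRL_0}. Fix any opponent $j\neq i$, which exists since $I\geq 2$. By Claim \ref{cc_ORL_2}, agent $j$ reports an other-refutable lie with total probability below $\varepsilon$, so with probability exceeding $1-\varepsilon$ his claim $s_j$ is not an other-refutable lie; in particular $s_j\notin RL_i(s^{\ast})$ (this also covers $s_j=s^{\ast}$, since the truth is irrefutable). Observation \ref{c_NRL_0} then gives $\mathcal{E}_i(s^{\ast})\subseteq\mathcal{E}_i(s_j)$ and hence $E_i^{\ast}(s_j)\subseteq E_i^{\ast}(s^{\ast})$. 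Any evidence supporting $s_j$ is contained in $E_i^{\ast}(s_j)\subseteq E_i^{\ast}(s^{\ast})$ and, since every available article contains $E_i^{\ast}(s^{\ast})$, must equal $E_i^{\ast}(s^{\ast})$. As agent $i$ presents $E_i\neq E_i^{\ast}(s^{\ast})$, he fails to support $s_j$, so $E_i\not\subseteq E_i^{\ast}(s_j)$ and $\tau^{4}$ is triggered. Thus $\tau^{4}$ is active with probability above $1-\varepsilon$.

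It then remains to sign the payoff change across the four transfers, the outcome, and the evidentiary cost. By the same reasoning as in Lemma \ref{l_tight}, tightening $E_i$ to $E_i^{\ast}(s^{\ast})$ causes agent $i$ to refute weakly more opponents while leaving unchanged whether his own claim is refuted, so $\tau^{1}$ weakly improves; it can only help agent $i$ support his own claim, so $\tau^{2}$ weakly improves; and $\tau^{3}$ together with the outcome $g$ depend solely on state claims and are unchanged. Whenever $\tau^{4}$ is active the deviation reduces its magnitude by at least $\frac{T_4}{|S|}(|E_i|-|E_i^{\ast}(s^{\ast})|)\geq\frac{T_4}{|S|}$; moreover, since tightening only enlarges the set of claims $i$ supports, it can never switch $\tau^{4}$ on, so realizations in which $\tau^{4}$ was inactive contribute zero. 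The evidentiary cost rises by at most $C$. Hence the expected gain is at least $(1-\varepsilon)\frac{T_4}{|S|}-C$, which is strictly positive by Inequality (\ref{G}). This contradicts the optimality of $m_i$, so every agent presents the tightest evidence with probability one.

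I expect the main obstacle to be the middle step -- converting the hypothesis that $i$ withholds evidence into the statement that $\tau^{4}$ is active with high probability. The difficulty is that $\tau^{4}$ is a \emph{global} support condition, so one must exhibit a concrete unsupported claim; the argument hinges on the fact, extracted from Observation \ref{c_NRL_0}, that supporting any non-other-refutable claim forces the supporter to present \emph{exactly} his tightest article, which is precisely where the scarcity of other-refutable lies (Claim \ref{cc_ORL_2}) enters. A secondary point requiring care is verifying, as in Lemma \ref{l_tight}, that tightening never hurts $\tau^{1}$ or $\tau^{2}$ and cannot activate $\tau^{4}$, so that the $\tau^{4}$ savings are not offset and only the bounded cost $C$ stands against them.
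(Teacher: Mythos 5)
Your proposal is correct and follows essentially the same route as the paper's proof: use Claim \ref{cc_ORL_2} to conclude that each opponent's claim is, with probability exceeding $1-\varepsilon$, one that agent $i$ cannot refute, invoke Observation \ref{c_NRL_0} to show that supporting such a claim forces $i$ to present exactly $E_i^{\ast}(s^{\ast})$, deduce that $\tau^4$ is active with probability at least $1-\varepsilon$ on any non-tightest message, and close with Inequality (\ref{G}). Your write-up merely makes explicit the per-transfer accounting and the cost comparison that the paper compresses into a citation of $\frac{(1-\varepsilon)T_4}{|S|}>C$.
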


\begin{proof}
Consider an arbitrary agent $i$. From Claim \ref{cc_ORL_2}, other agents are
presenting state claims which agent $i$ cannot refute with a probability $%
1-\varepsilon $ or more. From Observation 1, these claims require agent $i$
to present his tightest evidence so that they can be supported. Therefore,
on any message where agent $i$ does not present his tightest evidence, he
expects $\tau ^{4}$ to be active with probability $1-\varepsilon $ or more.
Since $\frac{(1-\varepsilon )T_{4}}{|S|}>C$ (Inequality (\ref{G})), it is a
profitable deviation to present the tightest evidence on such messages.
\end{proof}

\begin{claim}
\label{cc_NRL_2}If all agents present their tightest evidence, then no agent
reports nonrefutable lies.
\end{claim}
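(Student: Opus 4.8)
The plan is to reproduce the nonrefutable-lie elimination of the costless case (Claim~\ref{c_NRL_3}) while exploiting the standing hypothesis that every agent already presents the tightest evidence, so that the candidate deviation changes only the state claim and therefore leaves the evidentiary cost untouched. I would argue by contradiction: suppose agent $i$ reports a nonrefutable lie $s_i$ at $s^{\ast}$ with positive probability, and let $\tilde{m}_i$ be the message that replaces $s_i$ by $s^{\ast}$ while retaining $E_i=E_i^{\ast}(s^{\ast})$. The goal is to show that $\tilde{m}_i$ is a profitable deviation against every realization of $m_{-i}$, which suffices in a mixed equilibrium.

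The first step is to pin down $\tau_i^2$ before the deviation. Since $s_i\in NRL(s^{\ast})$, Observation~\ref{c_SRL_0} furnishes an agent $k$ (possibly $i$ itself) holding an article at $s_i$ that refutes $s^{\ast}$; equivalently $s^{\ast}\notin E_k^{\ast}(s_i)$, whereas $s^{\ast}\in E_k^{\ast}(s^{\ast})$ by (e1). As $k$ presents $E_k=E_k^{\ast}(s^{\ast})$ by hypothesis, we have $E_k\not\subseteq E_k^{\ast}(s_i)$, so $k$ fails to support $s_i$ and hence $\tau_i^2=-T_2$ at every $m_{-i}$; moreover $\tau_i^4$ is active there. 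After the switch to $s^{\ast}$, every agent's tightest evidence supports $s^{\ast}$ (again by (e1)), so $\tau_i^2=0$.

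The second step is the payoff bookkeeping, identical in form to the table in Claim~\ref{c_NRL_3}. The outcome contributes more than $-1$ by boundedness of $v_i$; the evidentiary cost is unchanged because $E_i$ is held fixed; $\tau_i^1$ is unchanged, since a nonrefutable lie and the truth are both contained in every agent's presented evidence and $i$'s refutations of others depend only on $(E_i,s_{-i})$; $\tau_i^2$ rises by $T_2$; $\tau_i^3$ falls by at most $(I-1)T_3$; and $\tau_i^4$ does not fall, because its magnitude depends only on $|E_i|$, which does not change. The net change thus strictly exceeds $-1+T_2-(I-1)T_3$, which is nonnegative by Inequality~(\ref{H}); this contradicts the optimality of $m_i$.

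The only genuinely new point relative to the costless argument is the invariance of the evidentiary cost, and this is exactly what the hypothesis of Claim~\ref{cc_SRL_1} buys us: because tightest evidence is already in play, the move to the truth is a pure state-claim switch, so neither the cost term nor the cardinality penalty $\tau_i^4$ can deteriorate. Everything else is the same accounting as before, with the scaled constants $T_2$ and $(I-1)T_3$ playing the roles of $I$ and $I-1$.
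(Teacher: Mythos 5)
Your proof is correct and follows essentially the same route as the paper's: the same deviation (replace the nonrefutable state claim by $s^{\ast}$ while keeping the tightest evidence fixed), the same use of Observation \ref{c_SRL_0} to show some agent's tightest evidence fails to support $s_{i}$ so that $\tau_{i}^{2}=-T_{2}$ before the switch, and the same term-by-term accounting closed by Inequality (\ref{H}). Your explicit remark that holding $E_{i}$ fixed keeps the evidentiary cost and the $\tau_{i}^{4}$ magnitude unchanged is exactly the point the paper relies on implicitly.
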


\begin{proof}
Suppose not. Then, there is an agent $i$ who reports a message $m_{i}$ which
contains a nonrefutable lie while all agents present their tightest
evidence. In any message $m_{i}=(s_{i},E_{i}^{\ast }(s^{\ast }))$ where $%
s_{i}$ is a nonrefutable lie, consider a deviation to the truth. Then, the
following table shows agent $i$'s payoff changes from switching to the truth:%
\begin{equation*}
\begin{tabular}{|l|l|l|l|l|}
\hline
$g$ & $\tau _{i}^{1}$ & $\tau _{i}^{2}$ & $\tau _{i}^{3}$ & $\tau _{i}^{4}$
\\ \hline
$>-1$ & $0$ & $T_{2}$ & $\geq -(I-1)T_{3}$ & $\geq 0$ \\ \hline
\end{tabular}%
\end{equation*}

First, the agent loses less than $1$ from changing the outcome, incurs no
loss from $\tau ^{1}$ (as the truth is irrefutable) and at most $(I-1)T_{3}$
from $\tau ^{3}$. The agent incurs no losses from $\tau ^{4}$ either as $%
s_{i}$ was unsupported but the truth is supported and the size of his
evidence set has not changed. Since all the presented evidence is the
tightest, it follows that the agent gains at least $T_{2}$ from $\tau ^{2}$
from the deviation (as the truth is supported by all agents). It follows
from Inequality (\ref{H}) that this is a profitable deviation.
\end{proof}

\begin{claim}
\label{cc_NRL_3}No agent reports nonrefutable lies.
\end{claim}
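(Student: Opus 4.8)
The plan is to derive this claim as an immediate consequence of the two preceding results, Claim \ref{cc_SRL_1} and Claim \ref{cc_NRL_2}, which together already contain all the substantive work. Recall that Claim \ref{cc_SRL_1} is unconditional: in the fixed equilibrium $\sigma$, every agent presents the tightest evidence with probability one, so that each message on the support of $\sigma_i$ is of the form $(s_i, E_i^{\ast}(s^{\ast}))$. Claim \ref{cc_NRL_2}, by contrast, is conditional: \emph{provided} all agents present their tightest evidence, no agent reports a nonrefutable lie. The first step is simply to observe that Claim \ref{cc_SRL_1} discharges exactly the hypothesis of Claim \ref{cc_NRL_2} in the equilibrium under consideration.

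Concretely, I would argue by contradiction. Suppose some agent $i$ reports with positive probability a message $m_i = (s_i, E_i)$ in which $s_i$ is a nonrefutable lie at $s^{\ast}$. By Claim \ref{cc_SRL_1}, this message must present the tightest evidence, i.e. $E_i = E_i^{\ast}(s^{\ast})$, and, again by Claim \ref{cc_SRL_1}, every other agent $j \neq i$ also presents $E_j = E_j^{\ast}(s^{\ast})$ with probability one. This is precisely the configuration analyzed in Claim \ref{cc_NRL_2}, whose proof exhibits the deviation from $(s_i, E_i^{\ast}(s^{\ast}))$ to $(s^{\ast}, E_i^{\ast}(s^{\ast}))$ as strictly profitable: the agent forgoes at most $1$ from the outcome and at most $(I-1)T_3$ from $\tau^3$, incurs no loss from $\tau^1$ (the truth is irrefutable) or from $\tau^4$ (his evidence set is unchanged and the truth is now supported), and gains $T_2$ from $\tau^2$, which is profitable by Inequality (\ref{H}). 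This contradicts the optimality of $m_i$ in equilibrium, so no such message lies on the support of $\sigma$.

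The only point that needs a moment's care — and it is not really an obstacle — is that the deviation invoked from Claim \ref{cc_NRL_2} holds the other agents' strategies fixed, which is legitimate because Claim \ref{cc_SRL_1} guarantees that those opponents present tightest evidence with probability one, independently of agent $i$'s deviation. Thus the conditional conclusion of Claim \ref{cc_NRL_2} transfers verbatim to the actual equilibrium, and the claim follows. In short, the heavy lifting (bounding other-refutable lies in Claims \ref{cc_ORL_1}--\ref{cc_ORL_2}, forcing tightest evidence in Claim \ref{cc_SRL_1}, and the payoff accounting in Claim \ref{cc_NRL_2}) is already complete; this statement merely packages it into the unconditional conclusion that nonrefutable lies are reported with probability zero, which is what the subsequent elimination of self-refutable lies will take as its starting point.
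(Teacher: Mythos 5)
Your proposal is correct and matches the paper's own argument, which likewise deduces the claim immediately from Claims \ref{cc_SRL_1} and \ref{cc_NRL_2}. The only difference is that you spell out the payoff accounting already contained in the proof of Claim \ref{cc_NRL_2}, whereas the paper simply cites those two claims.
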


\begin{proof}
This follows immediately from Claims \ref{cc_SRL_1} and \ref{cc_NRL_2}.
\end{proof}

\paragraph{Bounding the Probability of Self-Refutable Lies}

\begin{claim}
\label{cc_SRL_2}For any agent $i$, a self-refutable lie $s_{i}$ is reported
with positive probability only if $E_{j}^{\ast }(s_{i})=E_{j}^{\ast
}(s^{\ast })$ for every agent $j\neq i$.
\end{claim}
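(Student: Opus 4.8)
The plan is to prove the contrapositive by exhibiting a profitable deviation to the truth, in close analogy with the hard-evidence Claim \ref{c_SRL_3}. Suppose, toward a contradiction, that some agent $i$ reports a self-refutable lie $s_{i}$ with positive probability while $E_{j}^{\ast }(s_{i})\neq E_{j}^{\ast }(s^{\ast })$ for some $j\neq i$. Since $s_{i}$ is self-refutable, it is not refutable by $j$ at $s^{\ast }$, so Observation \ref{c_NRL_0} gives $\mathcal{E}_{j}(s^{\ast })\subseteq \mathcal{E}_{j}(s_{i})$ and hence $E_{j}^{\ast }(s_{i})\subseteq E_{j}^{\ast }(s^{\ast })$; combined with the assumed inequality this forces $E_{j}^{\ast }(s_{i})\subsetneq E_{j}^{\ast }(s^{\ast })$. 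By Claim \ref{cc_SRL_1}, every agent presents tightest evidence with probability one, so $E_{j}=E_{j}^{\ast }(s^{\ast })$ in every realization; because $E_{j}^{\ast }(s_{i})$ is a strict subset of $E_{j}^{\ast }(s^{\ast })$, we get $E_{j}=E_{j}^{\ast }(s^{\ast })\not\subseteq E_{j}^{\ast }(s_{i})$. Thus agent $j$ fails to support $s_{i}$ in every realization, so $\tau_{i}^{2}=-T_{2}$ with probability one.

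The key step is then to evaluate the deviation in which agent $i$ replaces $s_{i}$ by $s^{\ast }$ while holding his (tightest) evidence fixed, bounding the payoff change realization-by-realization across the four transfers. The outcome term changes by more than $-1$ (the utility bound, and by $0$ unless $i=1$). The truth is irrefutable by (e1), and since self-refutability means $s_{i}\in E_{k}$ for every $k\neq i$, the conditions defining $\tau_{i}^{1}$ are unaffected by switching $s_{i}$ to $s^{\ast }$, so the $\tau_{i}^{1}$ change is $\geq 0$. For $\tau_{i}^{3}$ I would use the crude bound that at most $I-1$ opponents can newly disagree, giving a change $\geq -(I-1)T_{3}$. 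The decisive gain comes from $\tau_{i}^{2}$: after the deviation all agents present $E_{k}^{\ast }(s^{\ast })$, which supports $s^{\ast }$, so $\tau_{i}^{2}=0$ and the deviation recovers the full $T_{2}$. Finally, since $\left\vert E_{i}\right\vert $ is unchanged and no other agent's support status depends on $i$'s state claim, switching $i$'s own claim from the unsupported $s_{i}$ to the supported $s^{\ast }$ can only remove a trigger of $\tau_{i}^{4}$, never add one, so the $\tau_{i}^{4}$ change is $\geq 0$. Summing, the expected gain exceeds $-1+T_{2}-(I-1)T_{3}$, which is nonnegative by Inequality (\ref{H}); since the outcome loss is strict, the total is strictly positive, contradicting the optimality of $m_{i}$.

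The main obstacle, and the reason this costly-evidence version cannot simply quote the hard-evidence argument, is that other-refutable lies have only been bounded in probability (Claims \ref{cc_ORL_1}--\ref{cc_ORL_2}) rather than driven to zero, so the deviation payoff is an expectation over opponent profiles that may still place small mass on such lies. The care required is to verify that every entry of the table holds realization-by-realization so that taking expectations preserves the inequalities; in particular, that $j$'s failure to support $s_{i}$, and hence $\tau_{i}^{2}=-T_{2}$, holds uniformly regardless of the opponents' state claims. This is exactly what Claim \ref{cc_SRL_1} secures: because tightest evidence is presented with probability one, the support status of $s_{i}$ relative to $E_{j}$ is deterministic, which lets the $\tau_{i}^{2}$ gain be extracted with certainty rather than merely with high probability.
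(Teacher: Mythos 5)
Your argument is correct and follows essentially the same route as the paper's proof: use Observation \ref{c_NRL_0} to get $E_{j}^{\ast }(s_{i})\subsetneq E_{j}^{\ast }(s^{\ast })$, invoke Claim \ref{cc_SRL_1} so that $E_{j}=E_{j}^{\ast }(s^{\ast })$ fails to support $s_{i}$ with certainty, and then show the deviation to $s^{\ast }$ recovers $T_{2}$ while losing at most $1$ from the outcome and $(I-1)T_{3}$ from $\tau ^{3}$, which is profitable by Inequality (\ref{H}). Your added remark that the bounds must hold realization-by-realization (since other-refutable lies are only bounded in probability here) is a sound clarification of why Claim \ref{cc_SRL_1} is the right crutch, but it does not change the substance of the argument.
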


\begin{proof}
Suppose to the contrary that agent $i$ reports $s_{i}\in SRL_{i}$ such that $%
E_{j}^{\ast }(s_{i})\not=E_{j}^{\ast }(s^{\ast })$ for some $j$. Then, by
Observation 1, $E_{j}^{\ast }(s_{i})\subset E_{j}^{\ast }(s^{\ast })$.
Consider then an alternative message which replaces $s_{i}$ with the truth.
The following table shows agent $i$'s payoff changes:%
\begin{equation*}
\begin{tabular}{|l|l|l|l|l|}
\hline
$g$ & $\tau _{i}^{1}$ & $\tau _{i}^{2}$ & $\tau _{i}^{3}$ & $\tau _{i}^{4}$
\\ \hline
$>-1$ & $0$ & $\geq T_{2}$ & $\geq -(I-1)T_{3}$ & $\geq 0$ \\ \hline
\end{tabular}%
\end{equation*}

First, the agent loses less than 1 to the outcome. Second, the truth is not
refutable so that the agent incurs no losses to $\tau ^{1}$. Third, we
establish that the agent gains $T_{2}$ on account of $\tau ^{2}$. From Claim %
\ref{cc_SRL_1}, all agents present the tightest evidence with probability
one. Then, the truth incurs no penalty while $s_{i}$ incurs a penalty since $%
E_{j}^{\ast }(s_{i})\subset E_{j}^{\ast }(s^{\ast })$ and therefore $s_{i}$
was not supported. Fourth, the agent loses at most $(I-1)T_{3}$ to $\tau
^{3} $, as other agents may not be reporting state reports consistent with
the truth in $i$'s evidence, i.e., it is possible that $E_{i}^{\ast
}(s_{j})\not=E_{i}^{\ast }(s^{\ast })$). Fifth, since all other agents
report the tightest evidence, the truth is supported, so that there is no
loss from $\tau ^{4}$. Inequality (\ref{H}) then implies that this is a
profitable deviation.
\end{proof}

\begin{claim}
\label{cc_SRL_3}Agents do not report self-refutable lies.
\end{claim}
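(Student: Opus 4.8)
The plan is to argue by contradiction: suppose some agent $i$ reports a self-refutable lie $s_{i}$ with positive probability, and exhibit a profitable deviation to the truthful message $(s^{\ast},E_{i}^{\ast}(s^{\ast}))$. The crucial preliminary step is to record that by this point in the argument \emph{no other-refutable lies survive at all}. Indeed, Claim \ref{cc_SRL_1} shows every agent presents the tightest evidence with probability one, so the hypothesis of Claim \ref{cc_ORL_3} is met and no agent reports an other-refutable lie; together with Claim \ref{cc_NRL_3} this means every opponent $j\neq i$ reports only the truth or a self-refutable lie of his own, and by Claim \ref{cc_SRL_2} (applied to $j$) both cases yield $E_{i}^{\ast}(s_{j})=E_{i}^{\ast}(s^{\ast})$ with probability one. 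This is precisely what lets me dispense with the $\varepsilon$-slack that complicated the earlier claims: the opponents' state reports now pin down $E_{i}^{\ast}(s_{j})$ exactly.

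Next I would isolate the gain, which comes entirely from $\tau^{3}$. Since $s_{i}$ is self-refutable for $i$, the tightest evidence $E_{i}^{\ast}(s^{\ast})$ refutes $s_{i}$, so $s_{i}\notin E_{i}^{\ast}(s^{\ast})$ while $s_{i}\in E_{i}^{\ast}(s_{i})$ by (e1); hence $E_{i}^{\ast}(s_{i})\neq E_{i}^{\ast}(s^{\ast})$. Because every opponent's report gives $E_{i}^{\ast}(s_{j})=E_{i}^{\ast}(s^{\ast})\neq E_{i}^{\ast}(s_{i})$, agent $i$ currently pays $\tau_{ij}^{3}=-T_{3}$ against each $j\neq i$; switching his state claim to $s^{\ast}$ makes $E_{i}^{\ast}(s^{\ast})=E_{i}^{\ast}(s_{j})$ for all of them and turns every one of these penalties off. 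The deviation therefore earns $(I-1)T_{3}$ from $\tau^{3}$ with probability one, which by Inequality \ref{J} is at least $I-1\geq 1$.

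It then remains to check that the deviation costs nothing elsewhere, which I would summarize in the same table format as the other claims, with entries $g>-1$, $\tau^{1}\geq 0$, $\tau^{2}\geq 0$, $\tau^{3}\geq (I-1)T_{3}$, $\tau^{4}\geq 0$. For $\tau^{1}$: the evidence is unchanged and the truth is irrefutable, while $s_{i}$ was already irrefutable by every opponent (self-refutability), so agent $i$ is neither refuted before nor after, and his refutation of others is unaffected since the opponents' reports all lie in $E_{i}^{\ast}(s^{\ast})$. For $\tau^{2}$: after the deviation the truth is supported by everyone's tightest evidence, so $\tau_{i}^{2}=0$, whereas before it was $0$ or $-T_{2}$; the change is $\geq 0$. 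For $\tau^{4}$: the magnitude $\frac{T_{4}}{|S|}|E_{i}|$ is unchanged because the evidence is fixed, and the only term of the activation condition that the deviation touches is the one checking support of $i$'s own claim, which can only be switched off (the truth is supported), while every term indexed by $j'\neq i$ is identical before and after; hence $\tau_{i}^{4}$ changes by $\geq 0$. Summing, the total change is strictly positive, contradicting the equilibrium property.

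The main obstacle, and the step I would be most careful about, is the preliminary one. Unlike Claim \ref{cc_SRL_2}, where the gain came from the evidence-only transfer $\tau^{2}$ and was therefore immune to opponents' residual lying, the gain here comes from $\tau^{3}$, which depends on opponents' \emph{state} claims. Had other-refutable lies persisted with probability up to $\varepsilon$, an opponent reporting such a lie $s_{j}$ with $E_{i}^{\ast}(s_{j})=E_{i}^{\ast}(s_{i})$ could flip the sign of $\tau_{ij}^{3}$ and erode the gain, and with $I=2$ and $T_{3}$ near its floor of $1$ this erosion would not be dominated by the margin. Invoking Claims \ref{cc_SRL_1} and \ref{cc_ORL_3} to eliminate other-refutable lies outright is exactly what closes this gap, after which the computation reduces to the hard-evidence argument of Claim \ref{c_SRL_4}.
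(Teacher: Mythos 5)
Your proposal is correct and follows essentially the same route as the paper: contradiction, deviation to the truthful claim with evidence held at the tightest, a gain of $(I-1)T_{3}$ from $\tau^{3}$ secured by first using Claims \ref{cc_SRL_1} and \ref{cc_ORL_3} (with \ref{cc_NRL_2}/\ref{cc_NRL_3}) to restrict opponents to the truth or self-refutable lies, and the usual table check that $g$, $\tau^{1}$, $\tau^{2}$, $\tau^{4}$ cost nothing, closed by Inequality (\ref{J}). Your explicit appeal to Claim \ref{cc_SRL_2} to pin down $E_{i}^{\ast}(s_{j})=E_{i}^{\ast}(s^{\ast})$ exactly is a slightly more careful rendering of the step the paper attributes to Observation \ref{c_NRL_0}, but the argument is the same in substance.
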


\begin{proof}
Suppose not. That is, suppose that there is an agent $i$ who reports a
self-refutable lie $s_{i}$. Consider a deviation to the truth for agent $i$
(from Claim \ref{cc_SRL_1}, evidence presentation was already the tightest).
Then, the following table shows agent $i$'s payoff changes from this
deviation:%
\begin{equation*}
\begin{tabular}{|l|l|l|l|l|}
\hline
$g$ & $\tau _{i}^{1}$ & $\tau _{i}^{2}$ & $\tau _{i}^{3}$ & $\tau _{i}^{4}$
\\ \hline
$>-1$ & $0$ & $\geq 0$ & $\geq (I-1)T_{3}$ & $\geq 0$ \\ \hline
\end{tabular}%
\end{equation*}

First, the agent loses less than 1 to the outcome. Second, the truth is not
refutable so that the agent incurs no losses to $\tau ^{1}$. Third, the
deviation causes no loss from $\tau ^{2}$ since every agent presents their
tightest evidence and the truth is supported. Fourth, since every agent
presents their tightest evidence, no other-refutable lies and nonrefutable
lies are presented (Claims \ref{cc_ORL_3} and \ref{cc_NRL_2}). Then, a
self-refutable lie disagrees with all the reports (the truth and
self-refutable lies) of other agents in agent $i\,$'s evidence (Observation
1) and hence incurs a penalty from $\tau ^{3}$. The truth avoids this fine
against every message and thus yields a profit of $(I-1)T_{3}$. Fifth, the
truth is supported, so that $\tau ^{4}$ can not lead to any losses either.
Inequality (\ref{J}) yields that this is a profitable deviation.
\end{proof}

Thus, the only equilibria are such that all agents report the truth with
probability $1-\varepsilon $ or more, where $0<\varepsilon <1$ as chosen
earlier. We conclude the proof in the following claim.

\begin{claim}
All agents report the true state with probability 1 and there is no transfer
on the equilibrium.
\end{claim}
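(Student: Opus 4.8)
The plan is to combine the four elimination claims already established and then verify that the resulting truthful profile is transfer-free. First I would note that Claim \ref{cc_SRL_1} has shown that every agent presents the tightest evidence $E_i^{\ast }(s^{\ast })$ with probability one; in particular, for any fixed agent $i$, all agents $j\neq i$ present their tightest evidence with probability one. This is exactly the hypothesis of Claim \ref{cc_ORL_3}, so invoking that claim upgrades the mere $\varepsilon $-bound of Claim \ref{cc_ORL_2} to the full conclusion that \emph{no} agent reports an other-refutable lie. The reason the tightest evidence does the work is that if some $E_j\in \mathcal{E}_j(s^{\ast })$ refutes a lie $s_i$, then since $E_j^{\ast }(s^{\ast })\subseteq E_j$ the tightest evidence $E_j^{\ast }(s^{\ast })$ refutes $s_i$ as well, so presenting it with probability one guarantees refutation and hence a profitable deviation for $i$ to the truth.

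Next I would assemble the complete elimination. Together with Claim \ref{cc_NRL_3} (no nonrefutable lies) and Claim \ref{cc_SRL_3} (no self-refutable lies), and recalling that from each agent's perspective the truth, other-refutable, self-refutable, and nonrefutable lies partition $S$, I conclude that every agent reports $s^{\ast }$ with probability one. Thus on the equilibrium path every message is of the form $(s^{\ast },E_i^{\ast }(s^{\ast }))$.

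Finally I would check directly that this profile triggers none of the four transfers. Since $s^{\ast }\in E_i^{\ast }(s^{\ast })$ by condition (e1) of Definition \ref{e1e2}, no agent's state claim is refuted, so $\tau _i^{1}=0$; since each $E_j=E_j^{\ast }(s^{\ast })$ supports the claim $s^{\ast }$ (as $E_j^{\ast }(s^{\ast })\subseteq E_j^{\ast }(s^{\ast })$), we have $\tau _i^{2}=0$; since all agents report $s^{\ast }$ we get $E_i^{\ast }(s_i)=E_i^{\ast }(s_j)$ and hence $\tau _i^{3}=0$; and since every claim is fully supported, $\tau ^{4}$ is inactive, giving $\tau _i^{4}=0$. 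Therefore $\tau ^{i}=0$ for every agent, which completes the proof.

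The one conceptual hinge, rather than a genuine obstacle, is recognizing that the tightest-evidence conclusion of Claim \ref{cc_SRL_1} is precisely what is needed to invoke Claim \ref{cc_ORL_3} and thereby close the gap left by the $\varepsilon $-bound of Claim \ref{cc_ORL_2}. The earlier claims only suppress other-refutable lies to probability below $\varepsilon $; pushing this to zero requires the refuting evidence to be present with certainty. Everything after that is a routine verification that the truthful-and-tightest profile incurs no transfer.
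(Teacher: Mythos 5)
Your proof is correct and follows essentially the same route as the paper's: invoke Claim \ref{cc_SRL_1} to satisfy the hypothesis of Claim \ref{cc_ORL_3} (thereby upgrading the $\varepsilon$-bound of Claim \ref{cc_ORL_2} to full elimination of other-refutable lies), combine with Claims \ref{cc_NRL_2}/\ref{cc_NRL_3} and \ref{cc_SRL_3} via the partition of lies, and then verify that the truthful-and-tightest profile triggers no transfer. Your transfer-by-transfer verification at the end is merely more explicit than the paper's one-line observation, but substantively identical.
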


\begin{proof}
By Claim \ref{cc_SRL_1}, all agents present the tightest evidence. Hence, it
follows from Claim \ref{cc_ORL_3} that other-refutable lies are not
reported, and from Claim \ref{cc_NRL_2} that nonrefutable lies are not
presented, and Claim \ref{cc_SRL_3} that self-refutable lies are not
presented. Therefore, all agents report the true state with probability 1.
Since the tightest evidence also supports their true state claim, there is
no transfer in equilibrium.
\end{proof}

\subsection{\label{app:softcostly}Proof of Theorem \protect\ref{main_soft}}

\subsubsection{\label{Soft_Mech}Implementing Mechanism}

The message space of the mechanism is given by $M=\Pi _{i}M_{i}$, where $%
M_{i}=S\times \mathcal{E}_{i}$. That is, this is a direct mechanism. A
typical message for agent $i$ is represented as $m_{i}=(s_{i},E_{i})$.

An agent $i$ challenges a state claim $s$ when he presents a message $%
(s_{i},E_{i}^{\gamma })$ if $\exists t$ such that $c_{i}(E_{i}^{\ast
}(s),s)\leq c_{i}(E_{i}^{\gamma },s)-t$ but $c_{i}(E_{i}^{\ast
}(s),s_{i})>c_{i}(E_{i}^{\gamma },s_{i})-t$. When agent $i$ challenges $s$
at $s^{\prime }$ using $E_{i}^{\gamma }(s,s^{\prime })$, the profit he makes
(under constant preferences) is given by $t-c_{i}(E_{i}^{\gamma
}(s,s^{\prime }),s^{\prime })+c_{i}(E_{i}^{\ast }(s),s^{\prime })$. We
pre-select a value $t_{i}(s,s^{\prime })$ for any pair of states $s$ and $%
s^{\prime }$ such that agent $i$ can challenge $s$ at $s^{\prime }$ so that $%
c_{i}(E_{i}^{\ast }(s),s)\leq c_{i}(E_{i}^{\gamma }(s,s^{\prime
}),s)-t_{i}(s,s^{\prime })$ but $c_{i}(E_{i}^{\ast }(s),s^{\prime
})>c_{i}(E_{i}^{\gamma }(s,s^{\prime }),s^{\prime })-t_{i}(s,s^{\prime })$.
Observe that $t_{i}(s,s^{\prime })\in \left( -1,1\right) $ since $\left\vert
c_{i}(E_{i},s)-c_{i}(E_{i},s^{\prime })\right\vert <1$ for any $i$, $E_{i}$, 
$s$ and $s^{\prime }$. In this case, we write $\Gamma _{i}(s,s_{i})=1$ to
denote that agent $i$ has challenged $s$ at the state $s_{i}$.

The outcome is determined as follows:

\begin{equation*}
g\left( m\right) =\left\{ 
\begin{tabular}{ll}
$f(s_{1})$, & if $\Gamma _{i}(s_{1},s_{i})=1$ for $i\neq 1$; \\ 
$f(s)$, & if $\forall i\neq 1$, $s_{i}=s$, $\Gamma _{i}(s_{1},s)=0$ and $%
\Gamma _{1}(s,s_{1})=1$; \\ 
$f(s_{1})$, & otherwise.%
\end{tabular}%
\right.
\end{equation*}

The mechanism employs the following transfers. The first transfer penalizes
agent $1$ an amount of $1$ dollar for every agent who makes a valid
challenge to $s_{1}$. That is,%
\begin{equation*}
\tau _{1,i}^{1}(m)=-1\text{ if }\Gamma _{i}(s_{1},s_{i})=1.
\end{equation*}

The second transfer, which applies only to agent 1, incentivizes him to
agree (in the state dimension) with a report of $(s,E_{i}^{\ast }(s))_{i\neq
1}$ if such a report is made unless he issues a challenge. Formally,

\begin{equation*}
\tau _{1}^{2}(m)=\left\{ 
\begin{tabular}{ll}
$-1$, & if $s_{i}=s$ $\forall i\neq 1$ and $m_{1}\neq (s,E_{1}^{\ast }(s))$
and $\Gamma _{1}(s,s_{1})=0$; \\ 
$0$, & otherwise.%
\end{tabular}%
\right.
\end{equation*}

The third transfer applies only to agents other than 1. It incentivizes them
to either (i) agree with the agent with the least index challenging agent 1
along the state dimension or (ii) agree with agent 1 if no agent is
challenging agent 1.

\begin{equation*}
\tau _{i}^{3}(m)=\left\{ 
\begin{tabular}{ll}
$-1$, & if $s_{i}\neq s_{j}$ where $j=\min \{k:\Gamma _{k}(s_{1},s_{k})=1\}$;
\\ 
$-1$ & if $\{k:\Gamma _{k}(s_{1},s_{k})=1\}=\varnothing $ and $m_{i}\neq
(s_{1},E_{i}^{\ast }(s_{1}))$ \\ 
$0$, & otherwise.%
\end{tabular}%
\right.
\end{equation*}

The fourth transfer is related to the challenge payouts. For agent 1, we have

\begin{equation*}
\tau _{1}^{4}\left( m\right) =\left\{ 
\begin{tabular}{ll}
$t_{1}(s,s_{1})$, & if $\forall i\neq 1$, $s_{i}=s,\Gamma
_{i}(s_{1},s)=0,\Gamma _{1}(s,s_{1})=1$; \\ 
$0$, & otherwise.%
\end{tabular}%
\right.
\end{equation*}

For agent $i\neq 1$, we have

\begin{equation*}
\tau _{i}^{4}\left( m\right) =\left\{ 
\begin{tabular}{ll}
$t_{i}(s_{1},s_{i})$, & if $\Gamma _{i}(s_{1},s_{i})=1$; \\ 
$0$, & otherwise.%
\end{tabular}%
\right.
\end{equation*}

Intuitively, the mechanism works as follows. First, we prevent agent 1 from
presenting any lies that other agents can challenge by encouraging others to
challenge when possible ($\tau ^{4}$ provides this incentive) and penalizing
him for each challenge against his claim ($\tau ^{1}$ yields this penalty).
If the other agents are telling the truth, $\tau ^{2}$ provides agent 1 the
incentive to agree with them. If agent 1 is challenged, $\tau ^{3}$ ensures
that every challenge is mounted with the same state claim. Then, agent 1 can
deviate to match this common state avoiding all the penalties from $\tau
^{1} $. This forms a profitable deviation. This leaves us with the truth,
and lies that only agent 1 can challenge. If agent 1 tells a lie that only
he can challenge, we get the other agents to agree with him using $\tau ^{3}$
(this helps the designer figure out what state is being challenged) and
allow agent 1 to challenge this agreement. Note that other agents are not
penalized when agent 1 challenges.

\subsubsection{Proof of Implementation}

The following lemma, which is a property of the evidence structure, finds
use later.

\begin{lemma}
\label{no_chal_cheapest}If $s^{\prime }$ is a lie that only agent $i$ can
challenge at $s^{\ast }$, then agent $j$ ($\neq i$) cannot challenge $%
s^{\ast }$ at $s^{\prime }$ with evidence $E_{j}^{\ast }(s^{\prime })$.
\end{lemma}

\begin{proof}
$s^{\prime }$ is a lie that only $i$ can challenge at $s^{\ast }$. Then, $j$
cannot challenge $s^{\prime }$ at $s^{\ast }$. Therefore, $\forall E\in 
\mathcal{E}_{j}$, $c_{j}(E,s^{\ast })-c_{j}(E_{j}^{\ast }(s^{\prime
}),s^{\ast })\geq c_{j}(E,s^{\prime })-c_{j}(E_{j}^{\ast }(s^{\prime
}),s^{\prime })$. With $E=E_{j}^{\ast }(s^{\ast })$, this yields $%
c_{j}(E_{j}^{\ast }(s^{\ast }),s^{\ast })-c_{j}(E_{j}^{\ast }(s^{\prime
}),s^{\ast })\geq c_{j}(E_{j}^{\ast }(s^{\ast }),s^{\prime
})-c_{j}(E_{j}^{\ast }(s^{\prime }),s^{\prime })$. If $j$ can challenge $%
s^{\ast }$ at $s^{\prime }$ with challenge evidence $E_{j}^{\ast }(s^{\prime
})$, then we must have $c_{j}(E_{j}^{\ast }(s^{\prime }),s^{\prime
})-c_{j}(E_{j}^{\ast }(s^{\ast }),s^{\prime })<c_{j}(E_{j}^{\ast }(s^{\prime
}),s^{\ast })-c_{j}(E_{j}^{\ast }(s^{\ast }),s^{\ast })$, which is a
contradiction.
\end{proof}

Essentially, Lemma \ref{no_chal_cheapest} allows the designer to deduce that
in a profile of the form $((s_{1},E_{1}),(s,E_{2}^{\ast }(s)),(s,E_{3}^{\ast
}(s)),...,(s,E_{I}^{\ast }(s)))$, if agent $1$ cannot challenge $s$ at $%
s_{1} $, then it is actually agent $1$ challenging $s$ rather than agents $%
i\neq 1$ challenging $s_{1}$. In a canonical mechanism, the need for three
agents is often to identify the agent who is deviating from the majority.
Lemma \ref{no_chal_cheapest} allows us to dispense with this requirement in
the costly evidence setting (under constant preferences).

Suppose the true state is $s^{\ast }$ and consider any pure strategy
equilibrium $((s_{i},E_{i}))_{i\in \mathcal{I}}$.

\begin{claim}
\label{2_Ch}If there is an agent $i\neq 1$ who can challenge $s_{1}$, then $%
s_{1}$ is challenged.
\end{claim}

\begin{proof}
Suppose not. Then, no agent challenges $s_{1}$. Consider a deviation for $i$
to $(s_{i},E_{i}^{\gamma })$ which challenges $s_{1}$. The outcome remains $%
f(s_{1})$ and agent $i$ does not incur any penalties from $\tau ^{3}$ as he
is the first agent who challenges $s_{1}$. He gains a reward from $\tau ^{4}$%
, so that this is a profitable deviation.
\end{proof}

\begin{claim}
\label{2_Ch_cons}If there is an agent $i\neq 1$ who can challenge $s_{1}$,
then $\exists s\in S$ such that $s_{i}=s$, $\forall i\neq 1$.
\end{claim}

\begin{proof}
If there are only two agents, then this claim is trivially satisfied. So
suppose there are three or more agents. From Claim \ref{2_Ch}, $s_{1}$ is
challenged by some agent. Suppose $i$ is the first agent to challenge $s_{1}$%
. For any agent $j\neq i$, $j\neq 1$, if $s_{j}\neq s_{i}$, then he gets a
penalty of $1$ dollar from $\tau ^{3}$. Consider a deviation to match $s_{i}$%
. The outcome remains $f(s_{1})$, but the agent avoids the penalty from $%
\tau ^{3}$. Since $t_{i}(s,s^{\prime })\in \left( -1,1\right) $, any
possible reward from challenging using $s_{j}$ (from $\tau ^{4}$) is less
than $1$ dollar. Hence, this is a profitable deviation.
\end{proof}

\begin{claim}
\label{1_noC2}Agent 1 does not present lies which others can challenge.
\end{claim}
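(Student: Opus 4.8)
The plan is to argue by contradiction. Suppose that in the pure-strategy equilibrium $((s_1,E_1),(s_2,E_2))$ agent $1$ reports a state $s_1$ that agent $2$ can challenge at the true state, i.e. $\Gamma_2(s_1,s^{\ast})=1$. Since no state is challengeable at itself, $s_1\neq s^{\ast}$, and since agent $2$ is about to challenge we also have $s_1\neq s_2$. By Claim \ref{2_Ch} agent $2$ in fact challenges $s_1$, so his message satisfies $\Gamma_2(s_1,s_2)=1$ and $E_2=E_2^{\gamma}(s_1,s_2)$. Reading the outcome rule $g$, the first branch applies, the outcome is $f(s_1)$, and agent $1$ incurs the full penalty $\tau_1^1=-1$. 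The essential point of the mechanism is that this penalty strictly exceeds the entire range of outcome values (utilities are bounded strictly below one dollar), so agent $1$ cannot profitably ``buy'' the outcome $f(s_1)$ at the price of being challenged.

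The key idea is that agent $1$ can escape the penalty by \emph{conceding} to agent $2$. The decisive structural fact is that $\Gamma_2(s_2,s_2)=0$: once agent $1$ reports the state $s_2$, agent $2$'s fixed message can no longer be read as a challenge in the first branch of $g$, and (since $\Gamma_1(s_2,s_2)=0$) the second branch is inactive too, so the outcome reverts to $f(s_2)$ and $\tau_1^1$ switches off. I would first treat the case where agent $1$ is not himself challenging in equilibrium (so $\tau_1^2=-1$). Here the deviation $\tilde m_1=(s_2,E_1^{\ast}(s^{\ast}))$ works cleanly: presenting the globally cheapest evidence keeps the evidentiary cost weakly below $c_1(E_1,s^{\ast})$, the recovery of the one-dollar $\tau^1$ penalty dominates the outcome change (which is strictly less than one dollar), and even the newly incurred $\tau_1^2=-1$ is offset because it was already $-1$ before. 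Thus conceding is strictly profitable, contradicting equilibrium.

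The main obstacle is the residual \emph{mutual-challenge} configuration, in which agent $1$'s equilibrium message is itself a challenge of $s_2$ (so $\Gamma_1(s_2,s_1)=1$, $E_1=E_1^{\gamma}(s_1,s_2)$, and $\tau_1^2$ was already $0$). Here conceding either forces agent $1$ to re-incur $\tau_1^2=-1$ (exactly cancelling the recovered $\tau^1$ and leaving only the unsigned outcome change) or to switch to $E_1^{\ast}(s_2)$, whose cost at $s^{\ast}$ I cannot sign a priori; this is where I expect the real work to lie. To close it I would exploit that agent $1$'s challenge reward $\tau_1^3$ is \emph{suppressed} by the prioritization of agent $2$'s challenge (its activation requires $\Gamma_2(s_1,s_2)=0$), so agent $1$ is paying for challenge evidence while collecting nothing, which bounds the cost of $E_1^{\gamma}(s_1,s_2)$ at $s^{\ast}$ against $E_1^{\ast}(s^{\ast})$ up to one dollar via equilibrium optimality. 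Combined with the observation that agent $2$'s equilibrium challenge is anchored at the truth — so that $s_2=s^{\ast}$ and the conceding evidence $E_1^{\ast}(s_2)=E_1^{\ast}(s^{\ast})$ is globally cheapest — and with Lemma \ref{no_chal_cheapest}, which restricts which challenges of the two agents can coexist, the mutual configuration is incompatible with agent $1$'s challenge being a best response. The conceding deviation (or, in the residual subcase, a deviation to the truth) therefore remains strictly profitable, yielding the desired contradiction and establishing the claim.
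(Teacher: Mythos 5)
Your opening matches the paper: invoke Claim \ref{2_Ch} to conclude that agent 2 challenges $s_{1}$, so that $\tau _{1}^{1}=-1$, and then argue that agent 1 can escape by conceding. The paper's proof, however, consists of exactly one deviation that is meant to cover all cases at once: agent 1 switches to $(s_{2},E_{1}^{\ast }(s_{2}))$, i.e., to \emph{full agreement} with agent 2 in both the state and the evidence dimension. This simultaneously deactivates $\tau _{1}^{1}$ (since $\Gamma _{2}(s_{2},s_{2})=0$, agent 2's fixed message can no longer be read as a challenge) and $\tau _{1}^{2}$ (agreement requires no challenge), so the one-dollar recovery is weighed only against the outcome change. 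Your Case A deviation to $(s_{2},E_{1}^{\ast }(s^{\ast }))$ is a workable variant there, but your Case B --- the ``mutual-challenge'' configuration --- is where the proposal breaks down.

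The resolution you sketch for Case B does not go through as stated. First, nothing established up to this point pins down agent 2's state report: Claim \ref{2_Ch} only forces agent 2 to challenge $s_{1}$ with whatever state claim he makes, so the assertion that his challenge is ``anchored at the truth'' ($s_{2}=s^{\ast }$) is unjustified. Second, Lemma \ref{no_chal_cheapest} is not applicable here: its hypothesis is that the reported state is a lie that \emph{only} one agent can challenge, and its conclusion concerns a challenge mounted with the \emph{cheapest} evidence $E_{j}^{\ast }(s^{\prime })$, whereas your configuration involves a lie that agent 2 \emph{can} challenge and evidence of the form $E_{1}^{\gamma }(\cdot ,\cdot )$; it therefore does not restrict which challenges can coexist in the way you need. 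Third, the ``bounding via equilibrium optimality'' step is only announced, never carried out. Your underlying worry about the sign of the evidence-cost term $c_{1}(E_{1}^{\ast }(s_{2}),s^{\ast })-c_{1}(E_{1},s^{\ast })$ is a fair one to raise --- the paper's one-line proof does not discuss it --- but the machinery you propose to control it does not close the case you yourself identify as the crux, so the claim is not established by your argument.
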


\begin{proof}
From Claim \ref{2_Ch}, if agent 1 presents such a lie, then some other agent
challenges it. Further, from Claim \ref{2_Ch_cons}, $\exists s\in S$ such
that $s_{i}=s,\forall i\neq 1$. This yields agent 1 a penalty of at least 1
dollar from $\tau ^{1}$, which he can avoid by deviating to present $%
(s,E_{1}^{\ast }(s))$. This may change the outcome, but the utility loss
from that is less than 1 dollar, so that this forms a profitable deviation.
\end{proof}

\begin{claim}
\label{Agree}If agent 1 presents a lie that only he can challenge at $%
s^{\ast }$, then every other agent agrees with him in the state and evidence
dimensions.
\end{claim}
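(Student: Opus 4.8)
The plan is to fix the true state $s^{\ast}$ and work within the given pure-strategy equilibrium in which agent $1$ plays $(s_1,E_1)$ with $s_1$ a lie that only agent $1$ can challenge at $s^{\ast}$, so that $\Gamma_2(s_1,s^{\ast})=0$ by hypothesis, and to characterize agent $2$'s best reply. I would partition agent $2$'s messages into three classes: (i) \emph{agreement}, i.e. $(s_1,E_2^{\ast}(s_1))$; (ii) a \emph{challenge} of $s_1$, i.e. $(s_2,E_2^{\gamma}(s_1,s_2))$ with $\Gamma_2(s_1,s_2)=1$; and (iii) \emph{disagreement without challenge}, i.e. any remaining message. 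Since $\Gamma_2(s_1,s^{\ast})=0$, any challenge must use a reported state $s_2\neq s^{\ast}$, and also $s_2\neq s_1$ as $\Gamma_2(s_1,s_1)=0$. A useful preliminary observation is that the outcome function returns $f(s_1)$ both under agreement and under any challenge, so the outcome term in agent $2$'s payoff is identical across classes (i) and (ii) and drops out of that comparison.

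The crucial preliminary step is a cost identity: presenting $E_2^{\ast}(s_1)$ at the true state is cost-minimizing, i.e. $c_2(E_2^{\ast}(s_1),s^{\ast})=\min_{E}c_2(E,s^{\ast})$. To see this, apply the defining inequality of $\Gamma_2(s_1,s^{\ast})=0$ with $E=E_2^{\ast}(s^{\ast})$, namely $c_2(E_2^{\ast}(s^{\ast}),s^{\ast})-c_2(E_2^{\ast}(s_1),s^{\ast})\ge c_2(E_2^{\ast}(s^{\ast}),s_1)-c_2(E_2^{\ast}(s_1),s_1)$; because $E_2^{\ast}(s_1)\in\mathcal{E}_2^{l}(s_1)$ the right-hand side is nonnegative, so $c_2(E_2^{\ast}(s^{\ast}),s^{\ast})\ge c_2(E_2^{\ast}(s_1),s^{\ast})$, while the reverse inequality is immediate since $E_2^{\ast}(s^{\ast})$ is cheapest at $s^{\ast}$. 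Hence agreement simultaneously avoids the penalty $\tau_2^{2}$ and attains the minimal evidentiary cost at $s^{\ast}$.

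With this in hand I would dispatch the two non-agreement classes. For class (iii) agent $2$ incurs $\tau_2^{2}=-1$ whereas agreement incurs no such penalty; since the outcome utilities differ by strictly less than one (utilities lie in $[0,1]$, as used throughout the paper) and, by the cost identity, disagreement cannot lower evidentiary cost below that of agreement, the one-dollar penalty strictly dominates and agreement is strictly better. For class (ii) the outcomes coincide, so the comparison reduces to transfers and cost: the gain from challenging over agreeing is $t_2(s_1,s_2)-\bigl[c_2(E_2^{\gamma}(s_1,s_2),s^{\ast})-c_2(E_2^{\ast}(s_1),s^{\ast})\bigr]$. Applying $\Gamma_2(s_1,s^{\ast})=0$ with $E=E_2^{\gamma}(s_1,s_2)$ yields $c_2(E_2^{\gamma}(s_1,s_2),s^{\ast})-c_2(E_2^{\ast}(s_1),s^{\ast})\ge c_2(E_2^{\gamma}(s_1,s_2),s_1)-c_2(E_2^{\ast}(s_1),s_1)\ge t_2(s_1,s_2)$, the last inequality being a rearrangement of the first defining property of the pre-selected reward $t_2(s_1,s_2)$; hence the gain is nonpositive and challenging is never strictly profitable.

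Combining the three cases, agreement is a best reply that strictly beats every disagreement-without-challenge, so agent $2$ agrees in both the state and the evidence dimension. The main obstacle I anticipate is the \emph{weak} tie between agreeing and challenging: the computation only shows challenging is weakly worse, so to conclude that agent $2$ actually agrees (rather than playing a payoff-equivalent challenge) I would close the gap in one of two ways -- either select each reward $t_2(s_1,s_2)$ strictly inside its admissible interval, so that $t_2(s_1,s_2)< c_2(E_2^{\gamma}(s_1,s_2),s_1)-c_2(E_2^{\ast}(s_1),s_1)$ and the challenge becomes strictly unprofitable, or invoke agent $1$'s equilibrium optimality, since a challenge by agent $2$ triggers $\tau_1^{1}=-1$ on agent $1$ and would give him a reason to abandon $s_1$, contradicting the hypothesis that $s_1$ is his equilibrium report. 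Checking that this tie-breaking is compatible with the reward choices needed to keep agent $1$'s own challenges profitable in the subsequent claims is the delicate point.
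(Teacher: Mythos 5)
Your proof is correct and follows essentially the same route as the paper's: partition agent 2's replies into agreement, challenge, and disagreement-without-challenge, and show agreement is best, using $\Gamma_2(s_1,s^{\ast})=0$ together with the defining inequality for $t_2(s_1,s_2)$ to kill any challenge. Your two additions --- the explicit cost identity $c_2(E_2^{\ast}(s_1),s^{\ast})=\min_E c_2(E,s^{\ast})$ and the observation that the challenge comparison is only weak unless $t_2$ is selected strictly inside its admissible interval (or agent 1's equilibrium optimality is invoked) --- are legitimate refinements of a step the paper states as a strict ``loss'' without justification, and either of your proposed tie-breaks closes the gap.
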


\begin{proof}
We note that if $s_{1}$ is a lie that agent $i\neq 1$ cannot challenge at $%
s^{\ast }$, then it is among his best responses to submit $E_{i}^{\ast
}(s_{1})$ and obtain the outcome $f(s_{1})$. If he does not present $%
(s_{1},E_{i}^{\ast }(s_{1}))$, then agent $i$ is either mounting a challenge
which yields him a loss (as otherwise $s_{1}$ would be a lie agent $i$ could
challenge), or disagreeing without challenging and incurring a loss of $1$
dollar. In either case, it is best for agent $i$ to deviate to match agent
1's claim in both dimensions.
\end{proof}

\begin{claim}
\label{1_no_SCL}Agent 1 does not present a lie that only he can challenge at 
$s^{\ast }$.
\end{claim}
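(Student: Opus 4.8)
The plan is to suppose, toward a contradiction, that in the pure-strategy equilibrium agent $1$ reports $(s_{1},E_{1})$ whose state claim $s_{1}$ is a lie that only agent $1$ can challenge at $s^{\ast }$, so that $\Gamma _{1}(s_{1},s^{\ast })=1$ while $\Gamma _{2}(s_{1},s^{\ast })=0$. By Claim \ref{Agree}, agent $2$ then responds with $(s_{1},E_{2}^{\ast }(s_{1}))$, so the realized outcome is $f(s_{1})$ and agent $1$ incurs no $\tau ^{1}$ penalty. I would then exhibit a profitable deviation for agent $1$: report the truth and challenge $s_{1}$, i.e.\ play $\tilde{m}_{1}=(s^{\ast },E_{1}^{\gamma }(s_{1},s^{\ast }))$, which is available precisely because $\Gamma _{1}(s_{1},s^{\ast })=1$.

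The first key step is to show that this challenge actually \emph{registers}, i.e.\ that the mechanism credits agent $1$ (and not agent $2$) as the challenger. This is exactly where Lemma \ref{no_chal_cheapest} together with the requirement that agent $2$ agree using the cheapest evidence $E_{2}^{\ast }(s_{1})$ does the work: since $s_{1}$ is a lie only agent $1$ can challenge at $s^{\ast }$, Lemma \ref{no_chal_cheapest} guarantees that agent $2$ cannot challenge the truth $s^{\ast }$ at $s_{1}$ using $E_{2}^{\ast }(s_{1})$. Hence, in the deviated profile $\tilde{m}_{1}$ against $(s_{1},E_{2}^{\ast }(s_{1}))$, agent $2$ is not mounting a valid challenge of $s^{\ast }$, so the outcome function treats agent $1$ as the sole challenger: the outcome stays at $f(s_{1})$, agent $1$ faces no $\tau ^{1}$ fine, and by matching the prescribed challenge evidence $E_{1}^{\gamma }(s_{1},s^{\ast })$ agent $1$ collects the reward $t_{1}(s_{1},s^{\ast })$ through $\tau ^{3}$ while simultaneously switching off $\tau ^{2}$ (challenging removes the disagreement penalty).

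The second step is to verify that the deviation is strictly profitable. Since the outcome is unchanged at $f(s_{1})$, the outcome term cancels even for non-constant preferences, and the comparison reduces to transfers net of evidence costs. By the defining inequality of the reward $t_{1}(s_{1},s^{\ast })$, namely $c_{1}(E_{1}^{\ast }(s_{1}),s^{\ast })>c_{1}(E_{1}^{\gamma }(s_{1},s^{\ast }),s^{\ast })-t_{1}(s_{1},s^{\ast })$, which encodes that challenging $s_{1}$ becomes profitable exactly at the true state $s^{\ast }$, presenting $E_{1}^{\gamma }(s_{1},s^{\ast })$ in exchange for $t_{1}(s_{1},s^{\ast })$ nets strictly more than supporting $s_{1}$ with its cheapest evidence $E_{1}^{\ast }(s_{1})$; moreover the deviation refunds any $\tau ^{2}$ dollar that agent $1$ was paying off path. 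This contradicts the optimality of $(s_{1},E_{1})$ and establishes the claim.

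The main obstacle is the first step: guaranteeing that it is agent $1$, rather than agent $2$, who is identified as the challenger in the deviated profile. Without requiring agent $2$ to present the cheapest evidence $E_{2}^{\ast }(s_{1})$, Lemma \ref{no_chal_cheapest} would not apply, and agent $2$ might be able to challenge $s^{\ast }$ at $s_{1}$ with some other article and thereby pre-empt agent $1$'s challenge, collapsing the deviation. A secondary delicate point is the profitability comparison when agent $1$'s equilibrium evidence $E_{1}$ differs from $E_{1}^{\ast }(s_{1})$: here one must track the interplay between the one-dollar $\tau ^{2}$ penalty (calibrated to dominate outcome-utility differences) and the evidence-cost terms, and confirm that the challenge reward $t_{1}(s_{1},s^{\ast })$ still leaves a strictly positive net gain.
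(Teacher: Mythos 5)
Your proposal is correct and follows essentially the same route as the paper's proof: the same deviation to $(s^{\ast},E_{1}^{\gamma}(s_{1},s^{\ast}))$ against agent 2's agreement $(s_{1},E_{2}^{\ast}(s_{1}))$ guaranteed by Claim \ref{Agree}, the same appeal to Lemma \ref{no_chal_cheapest} to ensure agent 1 rather than agent 2 is recognized as the challenger, and the same profitability argument from the defining inequality of $t_{1}(s_{1},s^{\ast})$. You are in fact somewhat more explicit than the paper about the residual case $E_{1}\neq E_{1}^{\ast}(s_{1})$, which the paper dispatches with the one-line remark that disagreeing without challenging is suboptimal owing to $\tau^{2}$.
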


\begin{proof}
Suppose not. From Claim \ref{Agree}, if $s_{1}$ is a lie that only agent 1
can challenge at $s^{\ast }$, then every agents $i\neq 1$ presents $%
(s_{1},E_{i}^{\ast }(s_{1}))$ and the outcome is $f(s_{1})$. Disagreeing
without challenging is suboptimal owing to $\tau ^{2}$. Therefore, consider
a deviation for agent 1 to $(s^{\ast },E_{1}^{\gamma }(s_{1},s^{\ast }))$.
From Lemma \ref{no_chal_cheapest}, when agents $i\neq 1$ present $%
(s_{1},E_{i}^{\ast }(s_{1}))$, $\Gamma _{i}(s^{\ast },s_{1})=0$. However, $%
\Gamma _{1}(s_{1},s^{\ast })=1$. This yields him a profit from challenging
since the outcome remains $f(s_{1})$.
\end{proof}

\begin{claim}
The mechanism implements.
\end{claim}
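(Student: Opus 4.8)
The plan is to chain the four preceding claims to characterise every pure-strategy equilibrium and then verify the three requirements of Definition \ref{implementation-costly} in turn. Claims \ref{1_noC2} and \ref{1_no_SCL} together leave agent $1$ reporting either the truth $s^{\ast }$ or a lie that neither agent can challenge at $s^{\ast }$, i.e. with $\Gamma _{1}(s_{1},s^{\ast })=\Gamma _{2}(s_{1},s^{\ast })=0$. Evidence monotonicity under constant preferences forces $f(s_{1})=f(s^{\ast })$ for any such non-challengeable $s_{1}$, so as long as the equilibrium outcome equals $f(s_{1})$ it is automatically $f$-optimal.

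Next I would show that agent $2$ agrees with agent $1$ along both dimensions, reporting $(s_{1},E_{2}^{\ast }(s_{1}))$. Disagreeing without challenging is eliminated immediately by $\tau _{2}^{2}$, so the only alternative to rule out is a challenge of $s_{1}$ mounted at some $s_{2}$. Here $\Gamma _{2}(s_{1},s^{\ast })=0$ applied to $E=E_{2}^{\gamma }(s_{1},s_{2})$, combined with the calibration $t_{2}(s_{1},s_{2})\leq c_{2}(E_{2}^{\gamma }(s_{1},s_{2}),s_{1})-c_{2}(E_{2}^{\ast }(s_{1}),s_{1})$, yields $c_{2}(E_{2}^{\gamma }(s_{1},s_{2}),s^{\ast })-c_{2}(E_{2}^{\ast }(s_{1}),s^{\ast })\geq t_{2}(s_{1},s_{2})$, so the net gain from challenging rather than agreeing is at most zero. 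Any residual (indifferent) challenge is destabilised from agent $1$'s side: the induced $\tau _{1}^{1}$ penalty of one dollar can be shed by matching agent $2$'s state claim, and since utilities are bounded by one dollar the accompanying outcome change cannot offset this, so no such profile is a Nash equilibrium. Thus $s_{2}=s_{1}$ and $E_{2}=E_{2}^{\ast }(s_{1})$, paralleling Claim \ref{Agree}.

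With this agreement neither challenge branch of $g$ is active, as $\Gamma _{2}(s_{1},s_{1})=\Gamma _{1}(s_{1},s_{1})=0$, so the outcome is $f(s_{1})=f(s^{\ast })$, giving part (i) on outcomes. For the transfers, no challenge occurs so $\tau ^{1}=\tau ^{3}=0$, agent $2$'s agreement switches off $\tau _{2}^{2}$, and agent $1$ avoids $\tau _{1}^{2}$ by reporting $E_{1}=E_{1}^{\ast }(s_{1})$; hence all transfers vanish.

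I expect the delicate step to be part (ii), namely that the evidence actually submitted lies in $\mathcal{E}^{l}(s^{\ast })$ even when agent $1$ tells a non-challengeable lie $s_{1}\neq s^{\ast }$. The key auxiliary fact is that $\Gamma _{i}(s_{1},s^{\ast })=0$ forces $E_{i}^{\ast }(s_{1})\in \mathcal{E}_{i}^{l}(s^{\ast })$: specialising the non-challengeability inequality to $E=E_{i}^{\ast }(s^{\ast })$ and using that $E_{i}^{\ast }(s_{1})$ minimises cost at $s_{1}$ gives $c_{i}(E_{i}^{\ast }(s_{1}),s^{\ast })\leq c_{i}(E_{i}^{\ast }(s^{\ast }),s^{\ast })$, while the reverse inequality is immediate, so the two costs coincide. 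This same fact is what guarantees, in the previous paragraph, that agent $1$ strictly prefers to tighten to $E_{1}^{\ast }(s_{1})$ rather than pay the $\tau _{1}^{2}$ penalty with cheaper evidence, so it underlies both the zero-transfer conclusion and condition (ii). The main conceptual hurdle, therefore, is recognising that non-challengeability transports cost-minimality from the reported lie to the true state; everything else reduces to the tie-breaking deviations already sketched.
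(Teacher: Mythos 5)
Your proposal is correct and follows essentially the same route as the paper: chain Claims \ref{1_noC2} and \ref{1_no_SCL} to restrict agent $1$ to non-challengeable claims, use $\tau^{2}$ to force agreement and the designated cheapest evidence, and conclude no transfers and the $f$-optimal outcome. Your auxiliary observation that $\Gamma_{i}(s_{1},s^{\ast})=0$ transports cost-minimality of $E_{i}^{\ast}(s_{1})$ from $s_{1}$ to $s^{\ast}$ (and likewise that evidence monotonicity forces $f(s_{1})=f(s^{\ast})$) makes explicit two steps the paper's terse argument leaves implicit, and is exactly what is needed to verify part (ii) of Definition \ref{implementation-costly}.
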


\begin{proof}
Owing to Claims \ref{1_noC2} and \ref{1_no_SCL}, agent 1 can only present
claims which no one can challenge. In such case, it is optimal owing to $%
\tau _{1}^{2}$ and $\tau ^{3}$ for all agents to also present the designated
cheapest evidence along with such claims. This leads to no transfers, the
correct $f$-optimal outcome, and the submission of only the cheapest
evidence in equilibrium. This satisfies the definition of implementation.
\end{proof}

\subsubsection{\label{app:meas_but_not_em} Measurability does not imply
Evidence Monotonicity}

Consider the following evidence structure.

\begin{equation*}
\begin{tabular}{|c|c|c|}
\hline
State/Agent & $1$ & $2$ \\ \hline
$s_{1}$ & $\{\{s_{1},s_{2},s_{3},s_{4}\}\}$ & $\{\{s_{1},s_{2},s_{3},s_{4}\}%
\}$ \\ \hline
$s_{2}$ & $\{\{s_{1},s_{2},s_{3},s_{4}\},\{s_{2},s_{4}\}\}$ & $%
\{\{s_{1},s_{2},s_{3},s_{4}\}\}$ \\ \hline
$s_{3}$ & $\{\{s_{1},s_{2},s_{3},s_{4}\},\{s_{3},s_{4}\}\}$ & $%
\{\{s_{1},s_{2},s_{3},s_{4}\}\}$ \\ \hline
$s_{4}$ & $\{\{s_{1},s_{2},s_{3},s_{4}\},\{s_{2},s_{4}\},\{s_{3},s_{4}\},%
\{s_{4}\}\}$ & $\{\{s_{1},s_{2},s_{3},s_{4}\}\}$ \\ \hline
\end{tabular}%
\end{equation*}

The cost of the article $\{s_{4}\}$ is positive but finite.

Clearly, any $f$ is measurable with respect to the evidence structure since
between any pair of states, agent 1 has a different endowment. To see that $%
f $ is not evidence monotonic, consider the states $s_{1}$ and $s_{4}$. Note
that of necessity, $E_{1}^{\ast }(s_{1})=\{s_{1},s_{2},s_{3},s_{4}\}$. We
now deal with three cases. First, if $E_{1}^{\ast
}(s_{4})=\{s_{1},s_{2},s_{3},s_{4}\}$, then no agent can challenge $s_{4}$
at $s_{1}$ since relative to $E_{1}^{\ast }(s_{4})$, the cost of all
articles of evidence are strictly higher at $s_{1}$ than at $s_{4}$ for
agent 1, and agent 2 has no cost variation. Recall that for $f$ to be
evidence monotonic under constant preference, it is necessary that the cost
of some article of evidence reduces relative to $E_{1}^{\ast }(s_{4})$ in
going from $s_{4}$ to $s_{1}$. Instead the articles $\{s_{2},s_{4}\}$ and $%
\{s_{3},s_{4}\}$ rise in cost from being costless to becoming unavailable
(costing $\infty $).

Second, if $E_{1}^{\ast }(s_{4})=\{s_{2},s_{4}\}$, then no agent can
challenge $s_{4}$ at $s_{2}$ because as in the above case, the articles $%
\{s_{1},s_{2},s_{3},s_{4}\}$ \& $\{s_{2},s_{4}\}$ continue to be costless,
while the article $\{s_{3},s_{4}\}$ has now become unavailable. Third, if $%
E_{1}^{\ast }(s_{4})=\{s_{3},s_{4}\}$, then no agent can challenge $s_{4}$
at $s_{3}$ since the articles $\{s_{1},s_{2},s_{3},s_{4}\}$ \& $%
\{s_{3},s_{4}\}$ continue to be costless, while the article $\{s_{2},s_{4}\}$
has now become unavailable. Further, $E_{1}^{\ast }(s_{4})\not=\{s_{4}\}$
since $\{s_{4}\}$ is not among the least costly evidence. Therefore, $f$ is
not evidence monotonic under constant preference.

\subsection{\label{CEMS}Costly Evidence and Mixed Strategies}

The setup we study, and the definition of implementation remains the same as
Section 4.2. In addition, we define the following notation. First, define $%
\gamma =\min_{i}\min_{s}\min_{E_{i}\in \mathcal{E}_{i}\backslash \mathcal{E}%
_{i}^{l}}[c_{i}(E_{i},s)-c_{i}^{l}(s)]$. $\gamma $ is the minimum cost
difference between the cheapest evidence and non-cheapest evidence across
all agents and states. Second, define $N=\max_{i}\max_{s}|\mathcal{E}%
_{i}^{l}(s)|$, that is, $N$ is the maximum cardinality of the cheapest
evidence set for any agent in any state.

\subsubsection{Evidence Monotonicity}

If the designer were limited to only making arbitrarily small rewards, then
Definition \ref{emucs} can be reinterpreted as follows.

\begin{definition}
\label{EM*}An SCF $f$ is evidence-monotonic under constant preferences and
arbitrarily small rewards if there exists $E^{\ast }:S\rightarrow \mathcal{E}
$ such that

\begin{enumerate}
\item[(i)] for all $s$, $E^{\ast }(s)\in \mathcal{E}^{l}(s,f(s))$

\item[(ii)] for all $s$ and $s^{\prime }$, if

$\forall i,E_{i}^{\prime }:[c_{i}(E_{i}^{\prime },s)\geq c_{i}(E_{i}^{\ast
}(s),s)\implies c_{i}(E_{i}^{\prime },s^{\prime })\geq c_{i}(E_{i}^{\ast
}(s),s^{\prime })]$,

then $f(s)=f(s^{\prime })$.
\end{enumerate}

Alternatively, if $f(s^{\prime })\neq f(s)$, then there exists $E^{\ast
}:S\rightarrow \mathcal{E}$ such that

\begin{enumerate}
\item[(i)] for all $s$, $E^{\ast }(s)\in \mathcal{E}^{l}(s)$

\item[(ii)] $\exists i,E_{i}^{\prime }:[c_{i}(E_{i}^{\prime },s)\geq
c_{i}(E_{i}^{\ast }(s),s)$ but $c_{i}(E_{i}^{\ast }(s),s^{\prime
})>c_{i}(E_{i}^{\prime },s^{\prime })]$
\end{enumerate}

Then, $E_{i}^{\ast }(s)\notin \mathcal{E}_{i}^{l}(s^{\prime })$
\end{definition}

We refer to this condition as evidence monotonicity*. Under evidence
monotonicity*, if two states have different outcomes, there must be an
article of evidence which was cheapest earlier, but is not cheapest now.
Since there could be multiple articles of evidence which share the same
cheapest cost, there could be many cheapest articles which are common
between the states, but the condition above implies that this common set
cannot include \emph{all} of them.

Under evidence monotonicity*, states with different outcomes have different
sets of cheapest evidence. Therefore, in what follows, we redefine the
notion of a "state" to mean a particular profile of cheapest evidence. Then,
between any two states, there is at least one agent who has a different set
of cheapest evidence.

\subsubsection{Challenges and Classification of Lies}

In the hard evidence setup, we defined a lie as a state which induced a
different tightest evidence for at least one agent relative to the truth. In
this setup, we instead define a lie as a state which induces a different
cheapest evidence set for at least one agent relative to the truth. That is, 
$L(s^{\ast })=\{s\in S:\mathcal{E}_{i}^{l}(s)\neq \mathcal{E}^{l}(s^{\ast })$
for some $i\}$. We note that it is not necessary that a lie must have a
different outcome than the truth, rather, it should be possible to
distinguish a lie from the truth using a mechanism that only allows the
submission of cheapest evidence on path.

As in the hard evidence setup, it will be useful to classify lies into
various categories. Whereas under hard evidence, it was possible to outright
refute lies, it may no longer be possible to do so under a costly evidence
setup, because all the articles of evidence are available to the agent in
all states, irrespective of the state. However, we can define a certain
notion like refutability even under a costly evidence setup, using the set
of cheapest evidence.

\begin{definition}
An agent $i$ challenges a state $s$ by presenting an article of evidence $%
E_{i}$ if $E_{i}\notin \mathcal{E}_{i}^{l}(s)$.
\end{definition}

If we were to design a mechanism in which it is always suboptimal for an
agent to present articles of evidence which are not the cheapest under the
true state, then an agent credibly signals to the designer that the true
state is not $s$ if he presents an article of evidence which is not among
his cheapest under the state $s$.

We note that due to evidence monotonicity*, for each lie, there must be at
least one agent who is able to challenge it. Below, we define the set of
lies challengeable by agent $i$. A lie $s$ is said to be challengeable by $i$
at state $s^{\ast }$ if agent $i$ is among the agents who can challenge it.
We denote the set of such lies as $CL_{i}(s^{\ast })$.

We define the set of \emph{other-challengeable} lies for agent $i$ as the
set of lies which agent $i$ can tell which agents other than $i$ can
challenge. That is, $OCL_{i}(s^{\ast })=\cup _{j\neq i}CL_{j}(s^{\ast })$.
Since we will be interested in the set of such lies under the true state
(usually denoted by $s^{\ast }$), we will often supress the parenthetical,
referring to such lies merely as members of the set $OCL_{i}$.

Similarly, the set of \emph{self-challengeable }lies for agent $i$ are the
lies which only agent $i$ can challenge, that is, $SCL_{i}(s^{\ast
})=L(s^{\ast })\backslash \cup _{j\neq i}CL_{j}(s^{\ast })$.

We prove the following property of self-challengeable lies which finds use
later.

\textbf{Observation 1*}: If $s_{i}\in SCL_{i}(s^{\ast })$, then $\mathcal{E}%
_{j}^{l}(s^{\ast })\subseteq \mathcal{E}_{j}^{l}(s_{i})~\forall j\neq i$.
That is, for any agent, the set of cheapest evidence for the true state is
contained within the set of cheapest evidence induced by another agent's
self-challengeable lie.

\textbf{Proof:} If not, there is an article of evidence $E_{j}\in \mathcal{E}%
_{j}^{l}(s^{\ast })$ such that $E_{j}\not\in \mathcal{E}_{j}^{l}(s_{i})$, in
which case $E_{j}$ challenges $s_{i}$, contradicting the claim that $s_{i}$
is self-challengeable for agent $i$.

We note that Definition \ref{EM*} implies a nature of symmetry. That is, if $%
s$ and $s^{\prime }$ induce different outcomes, then both of the following
are true simultaneously.

\begin{itemize}
\item There is an article of evidence which was cheapest under $s$, but is
not cheapest under $s^{\prime }$.

\item There is an article of evidence which was cheapest under $s^{\prime }$%
, but is not cheapest under $s$.
\end{itemize}

This is because evidence monotonicity* applies to states pairwise, without
any consideration for the order of the pair. This has the interesting
implication that unlike measurability, there are no non-challengeable lies
under evidence monotonicity*.

\subsubsection{Main Result}

We now state the main result in this setting.

\begin{theorem}
Suppose that $\mathcal{E}_{i}\left( \cdot \right) $ is a costly evidence
structure and the designer can only make arbitrarily small rewards bounded
above by some $\varepsilon >0$. Then, an SCF $f$ is directly
Nash-implementable in mixed strategies in a direct revelation mechanism if
and only if it safisfies evidence monotonicity*.
\end{theorem}

The necessity component stems from the fact that if the designer is not
allowed to make large rewards, then in Definition 6, $t$ must be zero. If $%
t>0$, then the transfers cannot be bounded below any value of $\varepsilon $
which satisfies $\varepsilon <t$. Since evidence monotonicity is necessary
for implementation, this reinterpretation continues to be necessary when
only arbitrarily small rewards are allowed. We prove the above result by
constructing an implementing mechanism as follows.

\subsubsection{Implementing Mechanism}

\textbf{Message Space:} $M=\Pi _{i}M_{i}$, where $M_{i}=S\times \mathcal{E}%
_{i}$ (Direct Mechanism)

\textbf{Outcome:} $f(s_{1})$

\textbf{Transfers:}

The first transfer penalizes an agent $i$ an amount $T_{1}$ if other agents
do not present their cheapest evidence according to $i$'s state claim $s_{i}$%
. That is,

\begin{equation*}
\tau _{i}^{1}=-2NI\text{ if }\exists j\neq i\text{ s.t. }E_{j}\notin 
\mathcal{E}_{j}^{l}(s_{i})
\end{equation*}

The second transfer (which operates pairwise between agent $i$ and $j$
penalizes an agent $i$ an amount $|\mathcal{E}_{j}^{l}(s_{i})|$ if $%
s_{i}\neq s_{j}$. Under disagreement, this has the effect of incentivizing
agent $i$ to imply the smallest possible set of cheapest evidence for other
agents. Self-challengeable lies imply a set of cheapest evidence weakly
larger than that under the truth for other agents, but this transfer
incentivizes agents to reveal the true set of cheapest evidence for other
agents. Note that there is no conflict from this incentive on account of
different agents, since presenting the truth induces the smallest set of
cheapest evidence for all agents. Formally,

\begin{equation*}
\tau _{ij}^{2}=-2|\mathcal{E}_{j}^{l}(s_{i})|\text{ if }s_{i}\neq s_{j}
\end{equation*}

The third transfer penalizes an agent $i$ an amount $T_{3}$ if for some $j$,
agent $i$'s cheapest evidence is different under $s_{i}$ and $s_{j}$. That
is,%
\begin{equation*}
\tau _{i}^{3}=-1\text{ if }\exists j\neq i\text{ s.t. }\mathcal{E}%
_{i}^{l}(s_{i})\neq \mathcal{E}_{i}^{l}(s_{j})
\end{equation*}

Finally, the fourth transfer incentivizes an agent to challenge the lies of
other agents if possible. This incentive has to be carefully scaled, since
on the one hand, it must be positive to actually function as an incentive
for challenges, but not too large so as to either incentivize the
presentation of evidence from outside an agent's cheapest set of evidence or
result in transfers greater than $\varepsilon $. Accordingly, with $\gamma
\, $\ defined earlier as the minimum cost difference across all agents and
all states between cheapest and non-cheapest evidence, we define $\tau ^{4}$
as follows,

\begin{equation*}
\tau _{ij}^{4}=\frac{\min (\gamma ,\varepsilon )}{2I}\text{ if }E_{i}\notin 
\mathcal{E}_{i}^{l}(s_{j})\text{,}
\end{equation*}

where $I$ is the number of agents.

The total transfer agent $i$ is given by

\begin{equation*}
\tau ^{i}=\tau _{i}^{1}+\sum_{j\neq i}\tau _{ij}^{2}+\tau
_{i}^{3}+\sum_{j\neq i}\tau _{ij}^{4}
\end{equation*}

\subsubsection{Proof of Implementation}

Suppose the true state is $s^{\ast }$. Also, assume utilities are normalized
to $[0,1)$ so that an agent can be induced to accept any outcome if the
alternative is a penalty of $1$ unit. Consider any arbitrary mixed strategy
equilibrium $\sigma $.

\begin{claim}
\label{CHEAP}All agents present evidence from among their cheapest evidence
set, i.e. $E_{i}\in \mathcal{E}_{i}^{l}(s^{\ast })$
\end{claim}

\begin{proof}
By not presenting cheapest evidence, the maximum gain that $i$ might make
from $\tau ^{4}$ is $\frac{\gamma }{2}$, which is less than the difference
in cost he incurs from submitting an article of evidence which is not
amongst his cheapest. Thus, an agent always presents evidence from his
cheapest set of evidence.
\end{proof}

\begin{claim}
\label{NO_ORL}No agent presents an other-challengeable lie with positive
probability.
\end{claim}

\begin{proof}
Suppose instead that an agent $i$ presents a lie $s_{i}$ with positive
probability which an agent $j$ can challenge with the article of evidence $%
\overline{E_{j}}$.

First, we claim that agent $j$ must present, with a probability $\frac{1}{N}$
or more, an article of evidence $E_{j}^{\prime }$ (possibly equal to $%
\overline{E_{j}}$) which challenges some member of $RL_{j}$ which is being
presented with positive probability by some other agent.\ If not, there is a
message $m_{j}=(s_{j},E_{j})$ in the support of $\sigma _{i}$ such that $%
E_{j}$ does not challenge any member of $RL_{j}$. Then, on this message,
agent $j$ gains in expectation from $\tau ^{4}$ by presenting $\tilde{m}%
_{j}=(s_{j},\overline{E_{j}})$ instead of $m_{j}$. This gain is on account
of challenging agent $i$'s claim $s_{i}$. Since his evidence presentation
does not impact his payoff through the outcome or through any other
transfer, this is a profitable deviation.

Next, we claim that there is an agent $k\neq j$ (but possibly equal to $i$)
who plays a message $m_{k}=(s_{k},E_{k})$ with positive probability such
that $E_{j}^{\prime }$ challenges $s_{k}$. This is because we have
established that $E_{j}^{\prime }$ is played with a probability at least $%
\frac{1}{N}$, and if there is no such $s_{k}$ which is challenged by $%
E_{j}^{\prime }$, then $j$ could profitably deviate by replacing $%
E_{j}^{\prime }$ with $\overline{E_{j}}$ in any message where he presents $%
E_{j}^{\prime }$.

Finally, we conclude the proof by claiming that agent $k$ can profitably
deviate by presenting $\tilde{m}_{k}=(s^{\ast },E_{k})$ instead of $m_{k}$.
This is because when presenting $m_{k}$, agent $k$ expects that agent $j$
presents $E_{j}^{\prime }$ with a probability at least $\frac{1}{N}$, and
thus with probability $\frac{1}{N}$, he is fined $2NI$ owing to $\tau ^{1}$.
Since Claim \ref{CHEAP} establishes that agents only present their cheapest
evidence, he can avoid this fine by presenting $s^{\ast }$. The table below
shows his change in payoff from presenting $\tilde{m}_{k}$ instead of $m_{k}$%
.%
\begin{equation*}
\begin{tabular}{|l|l|l|l|l|l|}
\hline
$g$ & $\tau _{i}^{1}$ & $\tau _{i}^{2}$ & $\tau _{i}^{3}$ & $\tau _{i}^{4}$
& In total \\ \hline
$>-1$ & $\geq \frac{1}{N}\times 2NI$ & $\geq -2(I-1)$ & $\geq -1$ & $0$ & $%
>0 $ \\ \hline
\end{tabular}%
\end{equation*}

He loses less than $1$ to the outcome, $2(I-1)$ to $\tau ^{2}$, and $1$ to $%
\tau ^{3}$ at worst. Since the evidence presentation is unaltered, there is
no loss from $\tau ^{4}$. This is, therefore a profitable deviation.
\end{proof}

In view of the above claim, agents are now restricted to presenting only the
truth or self-challengeable lies with positive probability.

\begin{claim}
\label{SRL_1}Agents only present such state claims with positive probability
which induce the true set of cheapest evidence for other agents. That is,
for any $s_{i}$ that an agent $i$ presents with positive probability, $%
\mathcal{E}_{j}^{l}(s_{i})=\mathcal{E}_{j}^{l}(s^{\ast })$ for all $j\neq i$.
\end{claim}

\begin{proof}
Suppose not. Then, agent $i$ presents a message $m_{i}=(s_{i},E_{i})$ such
that there is an agent $j\neq i$ for whom $\mathcal{E}_{j}^{l}(s_{i})\not=%
\mathcal{E}_{j}^{l}(s^{\ast })$. We note that $s_{i}$ must then be a
self-challengeable lie.\footnote{%
Since there are no non-challengeable lies under this setup, the only states
are the truth, self-challengeable lies, and other-challengeable lies.} Now,
consider a deviation to $\tilde{m}_{i}=(s^{\ast },E_{i})$ instead of $m_{i}$%
. The table below shows his change in payoff from presenting $\tilde{m}_{i}$
instead of $m_{k}$.%
\begin{equation*}
\begin{tabular}{|l|l|l|l|l|l|}
\hline
$g$ & $\tau _{i}^{1}$ & $\tau _{i}^{2}$ & $\tau _{i}^{3}$ & $\tau _{i}^{4}$
& In total \\ \hline
$>-1$ & $\geq 0$ & $\geq 2$ & $\geq -1$ & $0$ & $>0$ \\ \hline
\end{tabular}%
\end{equation*}

First, the agent incurs no losses from $\tau ^{1}$, since only members of
the cheapest set of evidence are presented and they are consistent with the
truth. Second, a self-challengeable lie cannot be identical to any claim
that is presented with positive probability by other agents, so that when
presenting $m_{i}$, agent $i$ expects $\tau ^{2}$ to be active with
probability 1. From Observation 1*, if $\mathcal{E}_{j}^{l}(s_{i})\not=%
\mathcal{E}_{j}^{l}(s^{\ast })$, then $\mathcal{E}_{j}^{l}(s^{\ast })\subset 
\mathcal{E}_{j}^{l}(s_{i})$. Then, the agent gains at least $2$ from this
deviation owing to $\tau ^{2}$. Third, the agent may lose at most $1$ from
this deviation owing to $\tau ^{3}$, since other agents may not be playing
states that induce the true set of cheapest evidence for $i$. Fourth, since
the evidence presentation is unaltered, there is no loss from $\tau ^{4}$.
Finally, the agent may lose at most 1 to the outcome. Overall, as shown
above, this is a profitable deviation.
\end{proof}

\begin{claim}
\label{SRL_2}No agent presents a self-challengeable lie.
\end{claim}

\begin{proof}
Suppose instead that an agent $i$ presents a message $m_{i}=(s_{i},E_{i})$
such that $s_{i}$ is self-challengeable. Then, $\mathcal{E}%
_{i}^{l}(s_{i})\not=\mathcal{E}_{i}^{l}(s^{\ast })$. From Claim \ref{SRL_1},
other agents only present state claims which are consistent with agent $i$'s
true set of cheapest evidence. Then, on $m_{i}$, the agent $i$ expects a
penalty of $1$ owing to $\tau ^{3}$. Consider a deviation to the truth. That
is, consider a deviation to $\tilde{m}_{i}=(s^{\ast },E_{i})$ instead of $%
m_{i}$. The table below shows his change in payoff from presenting $\tilde{m}%
_{k}$ instead of $m_{k}$.%
\begin{equation*}
\begin{tabular}{|l|l|l|l|l|l|}
\hline
$g$ & $\tau _{i}^{1}$ & $\tau _{i}^{2}$ & $\tau _{i}^{3}$ & $\tau _{i}^{4}$
& In total \\ \hline
$>-1$ & $\geq 0$ & $\geq 0$ & $\geq 1$ & $0$ & $>0$ \\ \hline
\end{tabular}%
\end{equation*}

As discussed earlier, the agent incurs no loss from $\tau ^{1}$. Claim \ref%
{SRL_1} implies that $s_{i}$ and $s^{\ast }$ are identical in terms of the
cheapest evidence set of any other agent, so that this deviation does not
impact the payoff from $\tau ^{2}$. The agent gains $1$ from $\tau ^{3}$
since the truth is consistent with the cheapest evidence set being implied
by the other agents, whereas $s_{i}$ was not. Further, the payoff from $\tau
^{4}$ is unaffected because the evidence presentation remains the same.
Finally, the agent may lose $1$ to the outcome. Overall, this is a
profitable deviation.
\end{proof}

\begin{claim}
The mechanism implements and there are no transfers in equilibrium.
\end{claim}

\begin{proof}
From the earlier claims, agents only present the truth and cheapest evidence
in equilibrium. Then, it is clear that no challenges occur, so that $\tau
^{1}$ and $\tau ^{4}$ are inactive. Since all the agents present the truth, $%
\tau ^{2}$ and $\tau ^{3}$ are also inactive. Therefore, the correct outcome
is obtained alongside no transfers.
\end{proof}

\subsection{Implementation without Normality}

For any state\ $s$, denote by $T^{f}(s)=\{s^{\prime }:(s,s^{\prime })$
violates Maskin Monotonicity$\}$ the set of states which prescribe different
optimal outcomes relative to $s$. Notice that without preference variation,
every pair of states which prescribe different optimal outcomes violates
Maskin monotonicity. In what follows, we use the same implementation notion
as that in Theorem 1 of the main paper (i.e., Definition 3 of \cite{BCS2020}%
).

We now state the main result in this setting.

\begin{theorem}
An SCF is implementable if and only if for each state $s$, $s$ and $T^{f}(s)$
are distinguishable.
\end{theorem}

Necessity follows from Proposition 3 in \cite{KT2012}. Sufficiency is
established in the following mechanism.

\subsubsection{Implementing Mechanism}

In what follows, define $s\thickapprox s^{\prime }$ if $f(s)=f(s^{\prime })$%
, and $s\not\thickapprox s^{\prime }$ otherwise, and suppose that $N$ is the
maximum cardinality of any evidence set.

\paragraph{\textbf{Message Space}}

Every agent has a typical message $m_{i}=(s_{i},E_{i}^{1},E_{i}^{2})\in
M_{i}=S\times \mathcal{E}_{i}\times \mathcal{E}_{i}\cup \phi $.

\paragraph{\textbf{Outcome}}

We define the outcome function of the mechanism as%
\begin{equation*}
g\left( m\right) =f\left( s_{1}\right) \text{.}
\end{equation*}%
That is, we implement the social outcome according to the state claim made
by the first agent. However, we will show that in any equilibrium all agents
report the true state. Hence, any Nash equilibrium achieves the desirable
social outcome.

\paragraph{\textbf{Transfers}}

The first transfer provides agents an incentive to refute such lies from
other agents as they can, and also to avoid refutations of their own state
claims if possible.

\begin{equation*}
\tau _{ij}^{1}\left( m\right) =\left\{ 
\begin{tabular}{ll}
$T_{1}$, & if $s_{i}\in E_{j}^{1}$ and $s_{j}\not\in E_{i}^{1}$; \\ 
$-T_{1}$, & if $s_{i}\not\in E_{j}^{1}$ and $s_{j}\in E_{i}^{1}$; \\ 
$0$, & otherwise.%
\end{tabular}%
\right.
\end{equation*}

Here, $T_{1}=2N+4N^{2}I+2N^{3}I$.

Define $s\succ s^{\prime }$ iff $\forall i$,$\mathcal{E}_{i}(s^{\prime
})\subseteq \mathcal{E}_{i}(s)$ and there exists $j$ such that, $\mathcal{E}%
_{j}(s^{\prime })\subset \mathcal{E}_{j}(s)$. Otherwise, $s\not\succ
s^{\prime }$. For any $s$, denote by $\hat{E}(s)$ the article of evidence
(if it exists) which refutes each $s^{\prime }$ such that $s\succ s^{\prime
} $. If $\NEG{\exists}s^{\prime }$ such that $s\succ s^{\prime }$, then $%
\hat{E}(s)=\phi $.

Define $C_{i}(m)=1$ if for any $j\in \mathcal{I}$, $E_{j}^{2}\neq \hat{E}%
(s_{i})$ and $E_{j}^{2}\neq \hat{E}(s)$ for some $s\thickapprox s_{i}$, and $%
C_{i}(m)=0$ otherwise. This scenario with $C_{i}(m)=1$ is termed "no
support", as $\hat{E}(s_{i})$ is interpreted as supporting the state claim $%
s_{i}$ by refuting states which occur below $s_{i}$ in the order defined
above. Note that it is not necessary to actually support the exact state $%
s_{i}$, so that it suffices to support some state $s$ which is identical
under the $\thickapprox $ relation. The transfer below incentivizes agents
to attempt to imply the smallest possible set of evidence for every agent in
the state claim they make. This helps with eliminating nonrefutable lies
since the truth provides a strictly lower loss from this transfer.

\begin{equation*}
\tau _{i}^{2}\left( m\right) =\left\{ 
\begin{tabular}{ll}
$-2\sum_{k\in \mathcal{I}}|\mathcal{E}_{k}(s_{i})|$, & if $C_{i}(m)=1$; \\ 
$0$, & otherwise.%
\end{tabular}%
\right. \text{.}
\end{equation*}

Further, define $D(m)=1$ if $\exists (l,m)\in \mathcal{I}\times \mathcal{I}$
such that $s_{l}\not\thickapprox s_{m}$, and $D(m)=0$ otherwise. The
following two transfers help to eliminate self refutable lies using a
crosscheck similar in spirit to that in the proof of Theorem 1.

\begin{equation*}
\tau _{ij}^{3}=-T_{3}|\mathcal{E}_{j}(s_{i})|\text{ if }D(m)=1\text{ or }%
C(m)=1\text{,}
\end{equation*}

where $T_{3}=2+2N$. Further,

\begin{equation*}
\tau _{i}^{4}=-1\text{ if }\exists j\neq i\text{ such that }\mathcal{E}%
_{i}(s_{i})\neq \mathcal{E}_{i}(s_{j})\text{.}
\end{equation*}

With $\tau _{i}^{1}=\sum_{j\neq i}\tau _{ij}^{1}$ and $\tau
_{i}^{3}=\sum_{j\neq i}\tau _{ij}^{3}$, we define the overall transfer to
agent $i$ as:

\begin{equation*}
\tau ^{i}=\tau _{i}^{1}+\tau _{i}^{2}+\tau _{i}^{3}+\tau _{i}^{4}\text{.}
\end{equation*}

\paragraph{Proof of Implementation}

\begin{claim}
\label{NOORL}No agent presents an other-refutable lie with positive
probability.
\end{claim}

\begin{proof}
Suppose instead that an agent $i$ presents a lie $s_{i}$ with positive
probability which an agent $j$ can challenge with the article of evidence $%
\bar{E}_{j}$.

First, we claim that agent $j$ must present, with a probability $\frac{1}{N}$
or more, an article of evidence $E_{j}^{\prime }$ (possibly equal to $\bar{E}%
_{j}$) which refutes some member of $RL_{j}$ which is being presented with
positive probability by some other agent.\ If not, there is a message $%
m_{j}=(s_{j},E_{j}^{1},E_{j}^{2})$ in the support of $\sigma _{j}$ such that 
$E_{j}^{1}$ does not refute any member of $RL_{j}$. Then, on this message,
agent $j$ gains in expectation from $\tau ^{1}$ by presenting $\tilde{m}%
_{j}=(s_{j},\bar{E}_{j},E_{j}^{2})$ instead of $m_{j}$. This gain is on
account of refuting agent $i$'s claim $s_{i}$. Since his evidence
presentation in $E^{1}$ does not impact his payoff through the outcome or
through any other transfer, this is a profitable deviation.

Next, we claim that there is an agent $k\neq j$ (but possibly equal to $i$)
who plays a message $m_{k}=(s_{k},E_{k}^{1},E_{k}^{2})$ with positive
probability such that $E_{j}^{\prime }$ refutes $s_{k}$. This is because we
have established that $E_{j}^{\prime }$ is played with a probability at
least $\frac{1}{N}$, and if there is no such $s_{k}$ which is refuted by $%
E_{j}^{\prime }$, then $j$ could profitably deviate by replacing $%
E_{j}^{\prime }$ with $\bar{E}_{j}$ in any message where he presents $%
E_{j}^{\prime }$.

Finally, we conclude the proof by claiming that agent $k$ can profitably
deviate by presenting $\tilde{m}_{k}=(s^{\ast },E_{k}^{1},E_{k}^{1})$
instead of $m_{k}$. This is because when presenting $m_{k}$, agent $k$
expects that agent $j$ presents $E_{j}^{\prime }$ with a probability at
least $\frac{1}{N}$, and thus with probability $\frac{1}{N}$, he is fined $%
T_{1}$ owing to $\tau ^{1}$. The table below shows his change in payoff from
presenting $\tilde{m}_{k}$ instead of $m_{k}$.%
\begin{equation*}
\begin{tabular}{|l|l|l|l|l|l|}
\hline
$g$ & $\tau _{i}^{1}$ & $\tau _{i}^{2}$ & $\tau _{i}^{3}$ & $\tau _{i}^{4}$
& In total \\ \hline
$>-1$ & $\geq \frac{1}{N}\times T_{1}$ & $\geq -2NI$ & $\geq -T_{3}NI$ & $%
\geq -1$ & $>0$ \\ \hline
\end{tabular}%
\end{equation*}

Therefore, this is a profitable deviation.
\end{proof}

\begin{claim}
No agent $i$ presents a nonrefutable lie $s_{i}\not\thickapprox s^{\ast }$
with positive probability.
\end{claim}

\begin{proof}
Suppose to the contrary that an agent presents a message $%
m_{i}=(s_{i},E_{i}^{1},E_{i}^{2})$ with positive probability so that $s_{i}$
is a nonrefutable lie at $s^{\ast }$ and $s_{i}\not\thickapprox s^{\ast }$.
Since $\hat{E}(s_{i})$ (which exists due to the fact that $s_{i}$ must be
distinguishable from $s^{\ast }$) refutes $s^{\ast }$, it cannot be
available at $s^{\ast }$ and therefore $\tau _{i}^{2}$ is active on the
message $m_{i}$. Consider instead a deviation to the truth, i.e. presenting $%
s^{\ast }$ instead of $s_{i}$. The following table shows the change in his
payoff from this change.

\begin{equation*}
\begin{tabular}{|l|l|l|l|l|l|}
\hline
$g$ & $\tau _{i}^{1}$ & $\tau _{i}^{2}$ & $\tau _{i}^{3}$ & $\tau _{i}^{4}$
& In total \\ \hline
$>-1$ & $0$ & $\geq 2$ & $\geq 0$ & $\geq -1$ & $>0$ \\ \hline
\end{tabular}%
\end{equation*}

The agent makes no losses from $\tau ^{1}$ since the truth is irrefutable.
Further, the agent makes no losses from $\tau ^{3}$ because while the agent
endowed with $\hat{E}(s^{\ast })$ (if it exists) may not be presenting it, $%
\hat{E}(s_{i})$ was not presented either, so that this transfer were active
under $m_{i}$ as well and $|\mathcal{E}_{j}(s_{i})|\geq |\mathcal{E}%
_{j}(s^{\ast })|$ for all agents $j\neq i$. Therefore, this is a profitable
deviation.
\end{proof}

\begin{claim}
If any agent $i$ presents a self-refutable lie $s_{i}$, it must be
consistent with the truth along other agents' evidence, i.e. $E_{j}^{\ast
}(s_{i})=E_{j}^{\ast }(s^{\ast })\forall j\neq i$.
\end{claim}

\begin{proof}
Suppose for contradiction that agent $i$ presents a self-refutable lie $%
s_{i} $ such that there exists an agent $j$ for whom $E_{j}^{\ast
}(s_{i})\not=E_{j}^{\ast }(s^{\ast })$. Any self-refutable lie that is
presented must disagree with the reports of others (since others are not
presenting other-refutable lies) therefore holding $\tau ^{3}$ active.
Consider a deviation to the truth, i.e. presenting $s^{\ast }$ instead of $%
s_{i}$. The following table shows the change in his payoff from this change.%
\begin{equation*}
\begin{tabular}{|l|l|l|l|l|l|}
\hline
$g$ & $\tau _{i}^{1}$ & $\tau _{i}^{2}$ & $\tau _{i}^{3}$ & $\tau _{i}^{4}$
& In total \\ \hline
$>-1$ & $0$ & $\geq -2N$ & $\geq T_{3}$ & $\geq -1$ & $>0$ \\ \hline
\end{tabular}%
\end{equation*}

The gain from $\tau ^{3}$ emerges from the fact that $|\mathcal{E}%
_{j}(s_{i})|>|\mathcal{E}_{j}(s^{\ast })|$ if $s_{i}$ is self-refutable for $%
i$. Agent $i$ loses at most $2N$ from $\tau ^{2}$ as switching from a
self-refutable lie to the truth weakly reduces the size of the implied
evidence set for other agents, but may increase the size of the evidence set
for agent $i$ himself. SInce, $T_{3}=2+2N$, this is a profitable deviation.
\end{proof}

Denote by $\hat{\imath}$ the agent who is endowed with $\hat{E}(s^{\ast })$
(if it exists). Otherwise, $\hat{\imath}=1$.

\begin{claim}
Agent $\hat{\imath}$ does not present a self-refutable lie.
\end{claim}

\begin{proof}
Suppose to the contrary that agent $\hat{\imath}$ presents with positive
probability a message $m_{\hat{\imath}}=(s_{\hat{\imath}},E_{\hat{\imath}%
}^{1},E_{\hat{\imath}}^{2})$ such that $s_{\hat{\imath}}$ is a
self-refutable lie. Consider a deviation to an alternate message $\tilde{m}_{%
\hat{\imath}}=(s^{\ast },E_{\hat{\imath}}^{1},\hat{E}(s^{\ast }))$. The
following table shows the change in his payoff from this change.%
\begin{equation*}
\begin{tabular}{|l|l|l|l|l|l|}
\hline
$g$ & $\tau _{i}^{1}$ & $\tau _{i}^{2}$ & $\tau _{i}^{3}$ & $\tau _{i}^{4}$
& In total \\ \hline
$>-1$ & $0$ & $\geq 0$ & $\geq 0$ & $1$ & $>0$ \\ \hline
\end{tabular}%
\end{equation*}

The agent suffers no losses from $\tau ^{2}$ since his report of the truth
is now supported by the accompanying evidence $\hat{E}(s^{\ast })$, and
suffers no losses from $\tau ^{3}$ since $\tau ^{3}$ was active on $m_{\hat{%
\imath}}$ (since $D(m)=1$ on $m_{i}$) and $|\mathcal{E}_{j}(s_{i})|\geq |%
\mathcal{E}_{j}(s^{\ast })|$ if $s_{i}$ is self-refutable for $i$. The agent
gains from $\tau ^{4}$ since $\mathcal{E}_{\hat{\imath}}(s_{\hat{\imath}%
})\neq \mathcal{E}_{\hat{\imath}}(s_{j})$ but $\mathcal{E}_{\hat{\imath}%
}(s^{\ast })=\mathcal{E}_{\hat{\imath}}(s_{j})$ from the previous claim.
Therefore, this is a profitable deviation.
\end{proof}

\begin{claim}
No agent presents self-refutable lies with positive probability.
\end{claim}

\begin{proof}
From the above claim, agent $\hat{\imath}$ presents the truth. $\hat{E}%
(s^{\ast })$, we claim, must be submitted in equilibrium alongside the state
report $s^{\ast }$ by some agent, since if it is not, then it is a
profitable deviation for agent $\hat{\imath}$ to do so. Therefore, in any
equilibrium, the truth is supported.

Now, suppose for contradiction that some agent $i$ presents with positive
probability a message $m_{i}=(s_{i},E_{i}^{1},E_{i}^{2})$ such that $s_{i}$
is a self-refutable lie. Consider a deviation to an alternate message $%
\tilde{m}_{\hat{\imath}}=(s^{\ast },E_{i}^{1},E_{i}^{2})$. The following
table shows the change in his payoff from this change.%
\begin{equation*}
\begin{tabular}{|l|l|l|l|l|l|}
\hline
$g$ & $\tau _{i}^{1}$ & $\tau _{i}^{2}$ & $\tau _{i}^{3}$ & $\tau _{i}^{4}$
& In total \\ \hline
$>-1$ & $0$ & $\geq 0$ & $\geq 0$ & $1$ & $>0$ \\ \hline
\end{tabular}%
\end{equation*}

As earlier, the agent faces no losses from $\tau ^{2}$ and $\tau ^{3}$ as
the self-refutable lie guarantees disagreement, while the truth may yield
agreement and is supported. The agent gains from $\tau ^{4}$ since each
agent implies the same evidence as that implied by the truth for all other
agents, and $\mathcal{E}_{\hat{\imath}}(s_{\hat{\imath}})\neq \mathcal{E}_{%
\hat{\imath}}(s_{j})$ but $\forall j$, $\mathcal{E}_{\hat{\imath}}(s^{\ast
})=\mathcal{E}_{\hat{\imath}}(s_{j})$ for any $s_{j}$ that is presented with
positive probability. Therefore, this is a profitable deviation.
\end{proof}

\begin{claim}
The outcome is $f(s^{\ast })$ and there are no transfers in equilibrium.
\end{claim}

\begin{proof}
By the claims above, all agents present the truth (i.e. $s^{\ast }$) with
probability 1 so that the outcome is $f(s^{\ast })$.This joint report is not
refutable, so that $\tau ^{1}$ yields no transfers. Further, the truth is
supported, so that $\tau ^{2}$ and $\tau ^{3}$ yield no transfers. Since
there is a joint report of $s^{\ast }$, the transfer $\tau ^{4}$ is also
inactive. Therefore, we have no transfers.
\end{proof}

\begin{remark}
Without nonrefutable lies, the mechanism can be simplified to a direct
mechanism, as we would no longer need $E_{\symbol{126}}^{2}$. This is
because of the following chain of logic. First, $\tau ^{1}$ eliminates
other-refutable lies. Then, since other-refutable lies have been eliminated,
any self-refutable lie produces disagreements with probability one, so that
the transfers $\tau ^{3}$ and $\tau ^{4}$ are active. In turn, $\tau ^{3}$
yields that agents only present self-refutable lies which induce the same
evidence for others as the truth. Finally, $\tau ^{4}$ yields that deviating
away from a self-refutable lie to the truth is a profitable deviation.
\end{remark}

\subsection{Renegotiation-Proof Contracting}

\subsubsection{\label{app:RP}Proof of Theorem \protect\ref{RPT_REINTERPRET}}

We begin by proving that it is necessary for the existence of
renegotiation-proof contracts that either Condition \emph{(a)} or (\emph{b)}
be satisfied. For a contradiction, consider a social choice function $f$ and
suppose that both Condition \emph{(a)} and Condition \emph{(b)} are violated
for $f$. Then, there exist a pair of states $s$ and $s^{\prime }$ such that $%
f(s)\neq f(s^{\prime })$ and one of the following is true:

\begin{enumerate}
\item[(c)] One of the agents can refute $s^{\prime }$ at $s$ and the other
agent can refute $s$ at $s^{\prime }$; or

\item[(d)] $s^{\prime }$ is nonrefutable at $s$, but only one agent can
refute $s$ at $s^{\prime }$.
\end{enumerate}

Further, suppose that there is a mechanism $\mathcal{M}=(M,g,\left( \tau
_{i}\right) _{i\in \mathcal{I}})$ which implements for every pair of $v$ and 
$h$ in Nash Equilibrium with renegotiation. We note that any such mechanism
must be such that $\sum_{i}\tau _{i}=0$. Indeed, if $\mathcal{M}$ results in
a budget surplus ($\sum_{i}\tau _{i}<0$), then such an allocation is not
efficient with respect to $v$ and will get renegotiated under $h$. Consider
any equilibrium $\sigma $ of this mechanism and further consider any pair of
messages $\left( m_{i}^{s},m_{j}^{s}\right) $ on the support of $\sigma $ at 
$s$ and another pair of messages $\left( m_{i}^{s^{\prime
}},m_{j}^{s^{\prime }}\right) $ on the support of $\sigma $ at $s^{\prime }$%
. Further, assume $v_{i}(f(s^{\prime }))=1$, $v_{i}(f(s))=0$, $%
v_{j}(f(s^{\prime }))=0$, $v_{j}(f(s))=1$, and for all other members $a$ of
the set $A$, $v_{i}(a)+v_{j}(a)=1$, $v_{i}(a)<1$, $v_{j}(a)<1$. Here, $v$ is
state independent. Further, choose any $h$ such that $v$ and $h$ satisfy the
constraints noted earlier.\footnote{%
Since $\mathcal{M}$ must implement for every acceptable combination of $v,h$
and $f$, we take the liberty of choosing $v$. Moreover, our argument works
for any $h$ which satisfies the constraints noted above; e.g., $h$ may be
set as the identity mapping which corresponds to \textquotedblleft
no-renegotiation\textquotedblright .} To be concise, we will use $\bar{u}%
_{i}(m)$ to denote the utility agent $i$ derives from the outcome and
transfers chosen by $\mathcal{M}$ under the message profile $m$.

If Condition \emph{(c)} is satisfied, then without loss of generality, the
only possible evidence structure is that agent $i$ can refute $s^{\prime }$
at $s$ and agent $j(\neq i)$ can refute $s$ at $s^{\prime }$. Since $%
\mathcal{M}$ has $\sum_{i}\tau _{i}=0$, $\bar{u}_{i}(m_{i}^{s^{\prime
}},m_{j}^{s})+\bar{u}_{i}(m_{i}^{s^{\prime }},m_{j}^{s})=1$. We consider the
following two cases.

\noindent \textbf{Case 1: }If the mechanism $\mathcal{M}$ is such that $\bar{%
u}_{i}(m_{i}^{s^{\prime }},m_{j}^{s})>\bar{u}%
_{i}(m_{i}^{s},m_{j}^{s})=v_{i}(f(s))$, then agent $i$ deviates to $%
m_{i}^{s^{\prime }}$ in state $s$ and this yields an outcome different from $%
f(s)$, so that it is not possible that $\mathcal{M}$ implements $f$ in Nash
equilibrium with renegotiation. Note that this is feasible for agent $i$.

\noindent \textbf{Case 2:} If, on the other hand, $\mathcal{M}$ is such that 
$\bar{u}_{i}(m_{i}^{s^{\prime }},m_{j}^{s})\leq v_{i}(f(s))$, then $\bar{u}%
_{j}(m_{i}^{s^{\prime }},m_{j}^{s})\geq v_{j}(f(s))>v_{j}(f(s^{\prime }))$
and agent $j$ deviates to $m_{j}^{s}$ in state $s^{\prime }$.

\noindent In either case, $\mathcal{M}$ cannot implement $f$ in Nash
equilibrium with renegotiation.

If Condition (\emph{d})\emph{\ }is satisfied, suppose that $s^{\prime }$ is
nonrefutable at $s$ and without loss of generality, that only agent $j$ can
refute $s$ at $s^{\prime }$. Since $\mathcal{M}$ has $\sum_{i}\tau _{i}=0$, $%
\bar{u}_{i}(m_{i}^{s^{\prime }},m_{j}^{s})+\bar{u}_{i}(m_{i}^{s^{\prime
}},m_{j}^{s})=1$. We consider the following two cases.

\noindent \textbf{Case 1: }If the mechanism $\mathcal{M}$ is such that $\bar{%
u}_{j}(m_{i}^{s^{\prime }},m_{j}^{s})>v_{j}(f(s^{\prime }))$, then agent $j$
deviates to $m_{j}^{s}$ in state $s^{\prime }$ and this yields an outcome
different from $f(s^{\prime })$.

\noindent \textbf{Case 2:} If, on the other hand, $\mathcal{M}$ is such that 
$\bar{u}_{j}(m_{i}^{s^{\prime }},m_{j}^{s})\leq v_{j}(f(s^{\prime }))$, then 
$\bar{u}_{i}(m_{i}^{s^{\prime }},m_{j}^{s})\geq v_{i}(f(s^{\prime
}))>v_{i}(f(s))$ and agent $i$ deviates to $m_{i}^{s^{\prime }}$ in state $s$%
.

\noindent In either case, $\mathcal{M}$ cannot implement $f$ in Nash
equilibrium with renegotiation.

To prove that either of Conditions \emph{(a)} and \emph{(b) }are sufficient,
we present the following mechanism. The message space is the same as our
main mechanism: $M_{i}=S\times \mathcal{E}_{i}$. The outcome is given by $%
f(s_{1})$, where $f$ is the SCF we desire to implement. There are two
transfers: A fine of 1 dollar for agent $i$ if he is unable to support his
state claim and a fine of 2 dollars if $s_{i}$ is refuted by $E_{j}$. Each
fine is paid to the other agent, so that the mechanism is budget balanced.
Also recall that the maximum bound of utility is less than 1 dollar, so that
these transfers dominate the utility from the outcome.

Assume that the true state is $s^{\ast }$. We now prove that this mechanism
implements $f$ with renegotiation.

First, no agent presents an other-refutable lie with positive probability;
otherwise, since $\mathcal{E}$ is normal, the other agent presents the
refuting evidence with probability 1 and this loses the first agent 2
dollars of money, which is more than the value of influencing the outcome.
Second, if an agent presents a self-refutable lie $s$ with positive
probability, then Condition \emph{(a)\ }yields that this agent can also
refute $s^{\ast }$ at $s$. Therefore, the agent cannot support his state
claim, and is fined 1 dollar. As this is more than the value of the outcome,
deviating to $(s^{\ast },E_{i}^{\ast }(s^{\ast }))$ is profitable. Third, if
an agent presents a nonrefutable lie $s$, then Condition \emph{(b)} yields
that $i$ can refute $s^{\ast }$ at $s$. Thus, he cannot support his state
claim, and again, deviating to $(s^{\ast },E_{i}^{\ast }(s^{\ast }))$ is
profitable. Therefore, the only equilibrium is truth-telling, which results
in the outcome $f(s)$ without transfers, which is efficient.

We note here that all the off-equilibrium outcomes above involved $f(s_{1})$
with a budget balanced transfer of value greater than 1 dollar, so that the
resulting outcomes are Pareto efficient. Therefore, the mechanism is
invulnerable to renegotiation.

\subsubsection{\label{app:RPEvidenceEx} Evidence Structure which does not
allow for renegotiation proof contracting}

\begin{equation*}
\begin{tabular}{|c|c|c|}
\hline
Agent/State & $\phi $ & $\theta $ \\ \hline
Buyer & $\{\{\theta ,\phi \}\}$ & $\{\{\theta \},\{\theta ,\phi \}\}$ \\ 
\hline
Seller & $\{\{\theta ,\phi \}\}$ & $\{\{\theta ,\phi \}\}$ \\ \hline
\end{tabular}%
\end{equation*}

This evidence structure does not allow for renegotiation proof contracting
since at state $\phi ,$ state $\theta $ is a nonrefutable lie, but at state $%
\theta ,$ only the buyer can refute the state $\phi $. In contrast, if in
the state with low investment, $\phi $, the buyer could refute $\theta $,
then renegotiation-proof contracting would once again be possible.

\bibliographystyle{econometrica}
\bibliography{acompat,dnie}

\end{document}